\documentclass[aps,pra,reprint,superscriptaddress,preprintnumbers]{revtex4-2}
    \usepackage[utf8]{inputenc}
    \usepackage[english]{babel}
    \newtheorem{theorem}{Theorem}

    \newtheorem{lemma}{Lemma}
    \usepackage{float}
    \newtheorem{definition}{Definition}
    \newtheorem{remark}{Remark}
    \usepackage{setspace}
    \usepackage{booktabs}
    \usepackage{multirow}
    \usepackage{amssymb}
    \usepackage{qcircuit}
    \usepackage{lipsum}
    \usepackage{tabularx}
    \usepackage{natbib}
    \usepackage{graphicx}
    \usepackage{rotating}
    \usepackage{float}
\usepackage{tikz}
\usepackage{pgfplots}
\usepackage{placeins}
\usepackage[ruled,vlined]{algorithm2e} 

\pgfplotsset{compat=newest} 

\newtheorem{proposition}[theorem]{Proposition}

    \usepackage{makecell}
    \setcellgapes{2.3pt}
    \usepackage[figurename=Fig., justification=RaggedRight]{caption}
    \usepackage{subcaption} 
    \newenvironment{proof}{\textit{Proof.}}{\hfill$\square$}
    \usepackage{physics}

    \usepackage{amsmath}
    \usepackage{xcolor}
    \usepackage{hyperref}

    \usepackage{adjustbox,expl3,etoolbox}
    \letcs\replicate{prg_replicate:nn}

\usepackage{cleveref}    

\makeatletter
\newcommand{\sectionnotoc}[1]{%
  \begingroup
  \def\addcontentsline##1##2##3{}%
  \section{#1}%
  \endgroup
}

\newcommand{\starsectionnotoc}[1]{%
  \begingroup
  \def\addcontentsline##1##2##3{}%
  \section*{#1}%
  \endgroup
}

\newcommand{\subsectionnotoc}[1]{%
  \begingroup
  \def\addcontentsline##1##2##3{}%
  \subsection{#1}%
  \endgroup
}
\makeatother

\begin{document} 
\title{End-to-End PDE-Based Quantum Algorithms for Multi-Asset Option Pricing under Local and Stochastic Volatility}

\author{Nikita Guseynov}
\email{guseynov.nm@gmail.com}
\affiliation{Global College, Shanghai Jiao Tong University, Shanghai 200240, China.}

\author{Nana Liu}
\affiliation{Global College, Shanghai Jiao Tong University, Shanghai 200240, China.}
\affiliation{Institute of Natural Sciences, School of Mathematical Sciences, Shanghai Jiao Tong University, Shanghai 200240, China}
\affiliation{Ministry of Education Key Laboratory in Scientific and Engineering Computing, Shanghai Jiao Tong University, Shanghai 200240, China}

\author{Chi Seng Pun}
\affiliation{SPMS, Nanyang Technological University, 21 Nanyang Link, Singapore 637371}

\author{Tushar Vaidya}
\email{tushar.vaidya@ntu.edu.sg}
\affiliation{SPMS, Nanyang Technological University, 21 Nanyang Link, Singapore 637371}

\begin{abstract}
Multi-asset option pricing under local- and stochastic-volatility models leads naturally to high-dimensional parabolic PDEs. We develop an end-to-end quantum PDE framework for European option pricing under local-volatility Black--Scholes and Heston models. The framework takes classical contract and model data as input and returns classical estimates of selected option values. We solve the pricing PDEs after finite-difference discretization on spatial grids. For $N=2^n$ grid points per spatial direction and $d$ assets, the end-to-end gate complexity for single-point recovery, counted in elementary CNOT gates and one-qubit Pauli-axis rotations, has leading grid-size dependence $\widetilde{O}(d^2 N^{2+d/2})$ for local-volatility Black--Scholes and $\widetilde{O}(d^2 N^{d+2})$ for Heston. Relative to grid-based finite-difference baselines, these scalings correspond to polynomial improvement factors $N^{d/2}$ and $N^d$, respectively. These estimates translate to Clifford+T resources via standard compilation. We complement the complexity analysis with numerical benchmarks against standard classical methods. In the Heston setting, the framework recovers option prices across strikes together with the associated implied-volatility smile/skew. Overall, this work provides a complete end-to-end quantum pricing pipeline with explicit resource accounting and theoretical performance guarantees.

\end{abstract}
\maketitle

\sectionnotoc{Introduction}

Quantum computing \cite{FEYNMAN} has emerged as a promising paradigm for accelerating computational tasks that are difficult for classical computers. Its potential is especially visible in mathematical and scientific computing, where many problems reduce to high-dimensional linear algebra \cite{hhl}, Hamiltonian simulation \cite{PhysRevLett.114.090502}, differential equations \cite{Berry2017}, and numerical integration \cite{10.1098/rspa.2015.0301}. In recent years, quantum algorithms have been actively explored for partial differential equations (PDEs) \cite{Childs2021HighPrecisionPDE,Berry2017,Liu2021DissipativeNonlinearDE,PRL2024,PRA2023,analog,cao2023quantum,Jin2025IllPosedSchrodingerization,Jin2024BoundaryInterfacePDE,guseynov2025quantumframeworksimulatinglinear,guseynov2024efficientPDE}, with applications ranging from physics \cite{costa2019quantum,gaitan2020finding} and engineering \cite{jin2023time,linden2022quantum,sato2024hamiltonian} to finance \cite{Stamatopoulos2020OptionPricing,gonzalez2023efficient}. In computational finance, quantum methods have been proposed for pricing \cite{Tang2021CDOQuantum,CarreraVazquez2021StatePreparation,Stamatopoulos2020OptionPricing,Martin2021PricingIBM,Rebentrost2018QuantumFinance}, hedging \cite{Cherrat2023QuantumDeepHedging,Stamatopoulos2022QuantumMarketRisk}, calibration \cite{Fan2007OptionPricingQEA}, and risk management \cite{Woerner2019QuantumRisk,Egger2021CreditRisk}.

In this paper, we focus on option pricing as a concrete and practically relevant benchmark for assessing the potential advantages of quantum algorithms, and we take the PDE-based pricing approach.

In computational finance, option prices are commonly computed either by pathwise Monte Carlo simulation based on the underlying risk-neutral SDE dynamics, or by solving the associated pricing PDE \cite{duffy2006fdm,glasserman2003monte,hirsa2024computational}. The Monte Carlo route is widely used, but it is not always the most natural formulation when the contract is strongly tied to terminal and boundary conditions, high dimensionality, early-exercise features, or the recovery of a price surface rather than a single expected payoff. Therefore, PDE-based pricing remains a promising direction. It represents the no-arbitrage pricing problem directly through a deterministic evolution equation, with the payoff imposed as terminal data and the financial constraints encoded by boundary conditions \cite{wilmott1995mathematics}. 

In this work, we develop an end-to-end quantum PDE framework for option pricing, covering local-volatility Black--Scholes \cite{BlackScholes1973} and stochastic-volatility Heston models \cite{heston1993closed}. A schematic overview of the quantum pricing workflow is shown in Fig.~\ref{fig:quantum_scheme_without_postselection}. Starting from the pricing PDE, we apply finite-difference methods to discretize the spatial variables. This turns the continuous pricing equation into a sparse finite-dimensional ODE system, where the sampled payoff at maturity becomes the initial vector for the backward pricing evolution.  We then prepare this payoff vector as an initial quantum state, evolve the discretized pricing dynamics quantumly, and recover selected option values from the final state.

\begin{figure*}[t!]
    \centering
    \includegraphics[width=0.99\linewidth]{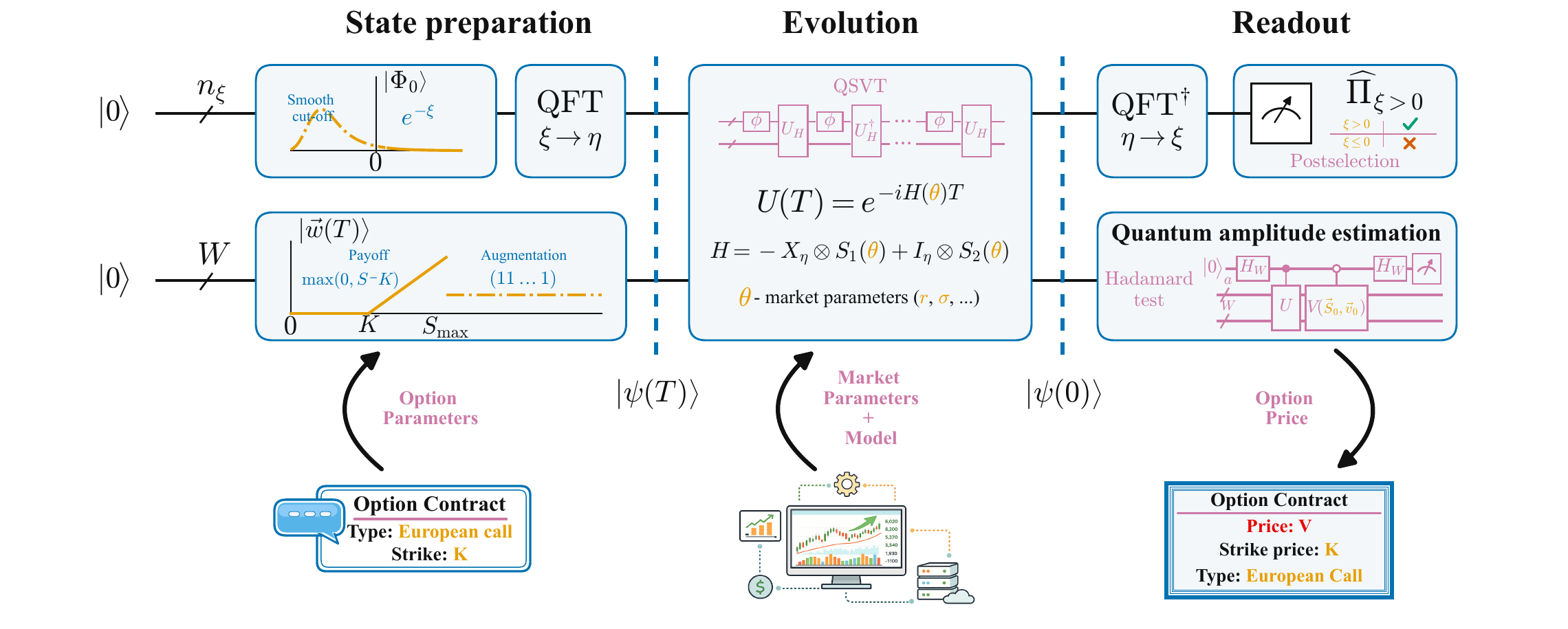}
    \caption{\textbf{Quantum workflow for solving the Black–Scholes \eqref{eq:background-local-vol-bs} and Heston models \eqref{eq:background-heston-pde}.} The $W$-qubit main register encodes option values over the stock-price grid (and volatility for the Heston PDE); the auxiliary $n_\xi$-qubit register implements the Schrödingerisation technique, mapping the dynamics to a unitary Schrödinger evolution. Both registers are initialized in the $\ket{0}$ state. The state preparation stage prepares (i) the smooth initialization $\ket{\Phi_0}$, a $C^\infty$ profile that coincides with $e^{-\xi}$ for $\xi>0$, (ii) the payoff state (augmented by a constant component) $\{\vec V(T),1\}^\top$. The evolution stage includes Hamiltonian evolution under the Schrödinger system induced by the ODE system \eqref{eq:background-discrete-ode}, inducing $V(T)\!\to\! V(0)$. The postselection stage includes a projective measurement of the auxiliary register onto the subspace $\xi>0$ (i.e., measure bit strings $> N_\xi/2$). After these steps, the $W$-qubit main register contains the normalized solution to \eqref{eq:background-discrete-ode}, namely the normalized option prices $\vec V(0)$, which can be read out by a method based on quantum amplitude estimation \cite{doi:10.1137/1.9781611977561.ch12}}.
\label{fig:quantum_scheme_without_postselection}
\end{figure*}

After these steps, the input to the overall algorithm is classical contract and model data, while the output is a classical estimate of selected option values. The resource analysis is carried out for the complete pricing pipeline, including quantum state preparation of the payoff, quantum evolution of the discretized dynamics, normalization recovery, and readout. Under this accounting, the leading grid-size dependence improves relative to classical finite-difference PDE solvers. Expressed in terms of the grid size $N=2^n$, the improvement is polynomial in $N$: the multi-asset Black--Scholes pipeline improves the leading dependence by a factor $N^{d/2}=2^{nd/2}$, while the multi-asset Heston pipeline improves it by a factor $N^d=2^{nd}$, up to suppressed polynomial and polylogarithmic factors. Thus, the end-to-end pipeline exhibits a polynomial quantum advantage in the grid size $N$, with improvement factors $N^{d/2}$ for multi-asset Black--Scholes and $N^d$ for multi-asset Heston.

We further validate the framework numerically. The quantum PDE workflow is compared against analytical (or semi-analytical \cite{heston1993closed,mathworks_financial_instruments_toolbox_r2025a}), finite-difference, and matrix-exponential baselines (we clarify the classical numerical methods used in Appendix~\ref{appendix:numerical methods}). In the Heston case, the workflow produces option prices across strikes and recovers the associated non-flat implied-volatility smile/skew. Thus, the numerical results demonstrate that the workflow is able to recover selected option prices together with the market-relevant implied-volatility smile/skew produced by the Heston stochastic-volatility model.

\bigskip\medskip
\sectionnotoc{Background}
\label{sec:background}

The central object considered in this paper is an option: a derivative contract that gives its holder the right, but not the obligation, to buy or sell an underlying asset at a predetermined price $K$, called the strike, at a specified time $T$, called the maturity. A call option gives the right to buy the asset, whereas a put option gives the right to sell it.

The most basic examples are vanilla call and put options. In their European form, the payoff is evaluated at maturity and depends only on the terminal asset price $S_T$. A call is valuable when $S_T>K$ and pays the excess $S_T-K$; a put is analogous, but with the roles of $S_T$ and $K$ reversed:
\begin{equation}
\label{eq:background-call-put}
f_{\rm call}(S_T)=(S_T-K)_+,
\qquad
f_{\rm put}(S_T)=(K-S_T)_+,
\end{equation}
where $(x)_+:=\max(x,0)$. These contracts are called vanilla because they have the standard single-asset terminal payoff structure. Options with more complicated payoffs, for example contracts depending on several assets, barriers, averages, spreads, or best/worst-of features, are usually called exotic options. Such contracts may still have European exercise style, but their payoff is no longer vanilla \cite{shreve2004stochastic, wilmott2006paul}.

After specifying the payoff, the next ingredient is a pricing model. In the PDE approach, the option value is written as a function of the current state and time, and the payoff is imposed as the terminal condition at maturity. In the one-asset Black--Scholes setting, the state variable is the asset price $S$, and the option price is denoted by $V(S,t)$. With constant risk-free rate $r$ and no dividends, the local-volatility Black--Scholes PDE \cite{BlackScholes1973} is
\begin{equation}
\label{eq:background-local-vol-bs}
\begin{aligned}
\partial_t V
+\frac{1}{2}\sigma(S)^2S^2\partial_{SS}V 
+rS\partial_SV-rV&=0,\\
V(S,T)&=f(S).
\end{aligned}
\end{equation}
Here $f(S)$ is the terminal payoff. The model is called local-volatility because the volatility is allowed to depend on the current asset price, $\sigma=\sigma(S)$. If $\sigma$ is independent of $S$, namely $\sigma(S)=\sigma_0$, Eq.~\eqref{eq:background-local-vol-bs} reduces to the classical constant-volatility Black--Scholes model \cite{Merton1973}. In that basic case, the terminal asset price is lognormal, and vanilla European call and put prices admit the well-known closed-form Black--Scholes formula. Hence these prices can be computed directly from an analytical expression, without solving the PDE numerically. For this reason, the constant-volatility model is mainly useful here as a simple and transparent validation benchmark.

A key limitation of the constant-volatility model is visible through implied volatility. If market prices were exactly described by a single constant volatility, the implied volatility would be flat across strikes. In practice, observed implied volatilities usually form a smile or skew, as illustrated schematically in Fig.~\ref{fig:volsmileskew}. This departure from flatness can be modeled in several ways. The first generation models were based on the concept of local volatility \cite{dupire1994pricing,derman2016volsmile}. Here the model keeps a one-factor
Markovian description but replaces the constant volatility by a deterministic
function of time and spot, $\sigma_{\mathrm{loc}}(t,S)$. Such models generally
lead to variable-coefficient Black--Scholes PDEs and are therefore natural
benchmarks for numerical PDE methods, but they do not usually admit closed-form
Black--Scholes formulas.

\begin{figure}[ht!]
    \centering
    \includegraphics[width=0.99\linewidth]{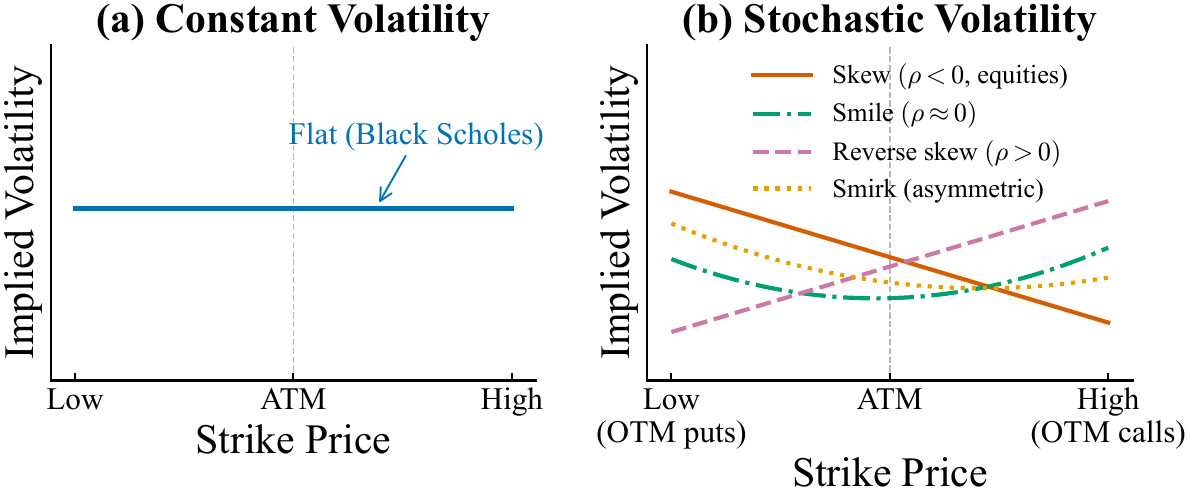}
    \caption{Constant-volatility Black--Scholes corresponds to a flat implied-volatility curve, whereas stochastic volatility models such as Heston can generate a smile or skew.}
\label{fig:volsmileskew}
\end{figure}
The Heston model \cite{heston1993closed} instead introduces a stochastic volatility framework, in which the instantaneous variance \(v\) is modeled as an additional state variable of the system. This is useful because market volatility fluctuates over time, tends to revert toward typical levels, and is often correlated with asset-price movements.
In the Heston model, the skew is structural: $\rho<0$ couples negative spot moves with positive variance shocks. This leverage effect makes implied volatility higher
at low strikes than at high strikes, producing the familiar equity skew. The option value is therefore written as $V(S,v,t)$, and the Heston pricing PDE is
\begin{equation}
\label{eq:background-heston-pde}
\begin{aligned}
&\partial_t V
+\frac{1}{2}vS^2\partial_{SS}V
+\rho\sigma_v vS\partial_{Sv}V+\frac{1}{2}\sigma_v^2v\partial_{vv}V \\
&\quad
+rS\partial_SV
+\kappa(\theta-v)\partial_vV
-rV=0,\\
&V(S,v,T)=f(S).
\end{aligned}
\end{equation}
Here $\kappa$ is the mean-reversion rate of the variance, $\theta$ is its long-run mean, $\sigma_v$ is the volatility of variance, and $\rho$ is the spot--variance correlation. Compared with the local-volatility Black--Scholes PDE, the Heston PDE has an additional variance direction and a mixed derivative term $\partial_{Sv}V$. This makes the numerical problem more demanding, but it also gives a stochastic-volatility description that is closer to observed market behavior.

To connect the pricing PDEs with a quantum algorithm, we first discretize the spatial variables using finite differences while keeping time continuous. In practice, the function $V(S,t)$ is replaced by its values on a finite grid. For a one-dimensional grid with $N$ points, this gives the option-price vector
\begin{equation}
\label{eq:background-price-vector}
\begin{aligned}
\vec V(t)
&=
\bigl(V_0(t),V_1(t),\ldots,V_{N-1}(t)\bigr)^{\top},\\
V_j(t)
&\approx V(S_j,t),
\qquad j=0,\ldots,N-1.
\end{aligned}
\end{equation}
After this spatial discretization, the pricing PDE becomes a finite-dimensional linear ODE system,
\begin{equation}
\label{eq:background-discrete-ode}
\begin{aligned}
\frac{d\vec V(t)}{dt}&=A\vec V(t)+\vec b,\\
\vec V(T)&=\vec f,
\end{aligned}
\end{equation}
where $\vec f$ is the sampled payoff at maturity, $A$ is the finite-difference pricing operator, and $\vec b$ contains possible boundary-condition contributions.

In the multi-asset setting, compatibility of the payoff with the finite-dimensional ODE form in Eq.~\eqref{eq:background-discrete-ode} is tied to the boundary treatment. The quantum evolution framework used here supports standard time-independent boundary conditions of Dirichlet, Neumann, Robin, and periodic type. However, for some multi-asset payoffs, the mathematically natural boundary data on faces   are not given by a simple local boundary condition. Instead, they require solving lower-dimensional auxiliary pricing PDEs on boundary faces, edges, and corners. Incorporating this hierarchy of boundary PDEs would require an additional iterative PDE-solving layer, which is outside the boundary model used in the present end-to-end analysis. Nevertheless, we stress that for many payoffs one can impose quantum-compatible approximate boundary closures, at the cost of an additional modeling or discretization error. For the forthcoming complexity results, we therefore focus on a payoff that is directly compatible with the construction without auxiliary boundary solves: the Worst-of Call,
\begin{equation}
\label{eq:background-worstof-call}
f_{\mathrm{wo}}(S)
=
\left(\min_{1\le i\le d}S_i-K\right)_+ .
\end{equation}
We discuss the boundary-compatibility issue in more detail in Appendix~\ref{sec:bs-multiasset-discretization}.

The quantum representation is obtained by choosing the grid size to be a power of two, $N=2^n$. An $n$-qubit quantum state has exactly $2^n$ computational-basis amplitudes, so the $N$ grid values can be encoded into the amplitudes of a quantum state:
\begin{equation}
\label{eq:background-price-state}
\ket{V(t)}
=
\frac{1}{\|\vec V(t)\|_2}
\sum_{j=0}^{2^n-1}V_j(t)\ket{j}.
\end{equation}
The normalization factor is necessary because quantum states must have unit norm. Thus the quantum state stores the shape of the discretized option-price vector, while the overall scale $\|\vec V(t)\|_2$ must be tracked or recovered separately when amplitudes are converted back into option prices.

For multi-asset pricing problems, the same amplitude-encoding idea is applied to each spatial coordinate. If there are $s$ spatial variables and $N=2^n$ grid points per variable, then the full grid has
\begin{equation}
\label{eq:background-grid-size}
N^s=2^{sn}
\end{equation}
points and can be encoded using $W=sn$ system qubits. The two main multidimensional models considered in this paper are the multi-asset Black--Scholes PDE and the multi-asset Heston PDE.

For a $d$-asset Black--Scholes model, the spatial variables are the asset prices
$S=(S_1,\ldots,S_d)$, so $W=dn$. In its correlated local-volatility form, the pricing PDE is
\cite{guillaume2019multidimensional}
\begin{equation}
\label{eq:background-multi-bs-pde}
\begin{aligned}
&\partial_t V
+\sum_{i=1}^{d} rS_i\partial_{S_i}V
+\frac{1}{2}\sum_{i=1}^{d}
\sigma_i(S_i)^2S_i^2\partial_{S_iS_i}V \\
&\quad
+\sum_{1\le i<j\le d}
\rho_{ij}\sigma_i(S_i)\sigma_j(S_j)
S_iS_j\partial_{S_iS_j}V
-rV=0,\\
&V(S,T)=f(S).
\end{aligned}
\end{equation}
Here $f(S)$ is the terminal payoff, and the correlation matrix
$(\rho_{ij})_{i,j=1}^d$ is positive semidefinite with $\rho_{ii}=1$.
The finite-difference reduction of Eq.~\eqref{eq:background-multi-bs-pde}, including the mixed-derivative stencil, boundary conditions, and homogeneous ODE embedding, is given in Appendix~\ref{sec:bs-multiasset-discretization}.

For a $d$-asset Heston model, each asset price is coupled to a variance factor, so the state variables are
$(S,v)=(S_1,\ldots,S_d,v_1,\ldots,v_d)$ and $W=2dn$. Using compact notation for the Brownian correlation blocks
\[
\rho^{SS}_{ij}:=\rho_{S_i,S_j},\qquad
\rho^{Sv}_{ij}:=\rho_{S_i,v_j},\qquad
\rho^{vv}_{ij}:=\rho_{v_i,v_j},
\]
the multi-asset Heston pricing PDE \cite{chan2024financial} has the form
\begin{equation}
\label{eq:background-multi-heston-pde}
\begin{aligned}
&\partial_t V
+\sum_{i=1}^{d} rS_i\partial_{S_i}V
+\sum_{i=1}^{d}\kappa_i(\theta_i-v_i)\partial_{v_i}V
-rV \\
&\quad
+\frac{1}{2}\sum_{i,j=1}^{d}
\rho^{SS}_{ij}S_iS_j\sqrt{v_iv_j}\,
\partial_{S_iS_j}V \\
&\quad
+\sum_{i,j=1}^{d}
\rho^{Sv}_{ij}\sigma_jS_i\sqrt{v_iv_j}\,
\partial_{S_iv_j}V \\
&\quad
+\frac{1}{2}\sum_{i,j=1}^{d}
\rho^{vv}_{ij}\sigma_i\sigma_j\sqrt{v_iv_j}\,
\partial_{v_iv_j}V=0,\\
&V(S,v,T)=f(S).
\end{aligned}
\end{equation}
The corresponding finite-difference system, boundary treatment, and homogeneous ODE form used by the quantum algorithm are described in Appendix~\ref{sec: ODE heston multi}.

\begin{table*}[t!]
\centering
\caption{\textbf{Dominant end-to-end gate-cost scalings for single-point output in the multi-asset Black--Scholes and Heston pipelines.}
Gate costs are reported in terms of CNOT gates and one-qubit Pauli-axis rotations. Here $N=2^n$ is the number of grid points per spatial axis and $d$ is the number of assets. The payoff is the Worst-of Call in Eq.~\eqref{eq:background-worstof-call}. State-preparation and evolution costs add, while the single-point readout overhead is multiplicative.}
\label{tab:scaling_bs_heston_multi}
\renewcommand{\arraystretch}{1.3}
\setlength{\tabcolsep}{3.5pt}
\begin{adjustbox}{width=\textwidth}
\begin{tabular}{l c c c c c c c}
\toprule
\textbf{Model} & & \textbf{State prep.} & & \textbf{Evolution} & & & \textbf{Readout} \\
\midrule
Local vol.\ Black--Scholes
&
$\displaystyle \Bigl($
&
$\displaystyle d^{3}n+d^{2}n\log n$
&
{\Large $+$}
&
$\displaystyle dT\,2^{2n}\!\left[dQ_{\sigma}n\log n+n_{\xi}\log n_{\xi}+d^{4} n\right]$
&
$\displaystyle \Bigr)$
&
{\Large $\times$}
&
$\displaystyle 2^{nd/2}$
\\[1.4ex]
Heston
&
$\displaystyle \Bigl($
&
$\displaystyle d^{3}n+d^{2}n\log n$
&
{\Large $+$}
&
$\displaystyle dT\,2^{2n}\!\left[dn\log n+n_{\xi}\log n_{\xi}+d^{4} n\right]$
&
$\displaystyle \Bigr)$
&
{\Large $\times$}
&
$\displaystyle 2^{nd}$
\\
\bottomrule
\end{tabular}
\end{adjustbox}
\end{table*}

\bigskip
\sectionnotoc{Results}

\label{sec:main-results}

Quantum algorithms used for numerical tasks commonly have three conceptual stages: (i) state preparation, (ii) quantum evolution  and (iii) readout. In the first stage, classical data are encoded into a quantum state. In the second stage, a quantum operation implements the desired computational dynamics. In the third stage, measurements convert the final quantum state back into classical information. For financial applications, this end-to-end classical interface is essential: contracts, strikes, rates, model parameters, quotes, and currency-denominated prices are all classical objects. Therefore, \textit{a practically meaningful quantum pricing algorithm should start from classical contract/model data and return classical option values.}
Our algorithm follows this standard structure: 
\begin{enumerate}
    \item The \textbf{state-preparation} stage prepares both the option payoff and the smooth Schr\"odingerisation cut-off state; the construction uses block-encodings of coordinate operators together with quantum singular value transformation (QSVT) \cite{gilyen2019quantum} implementing polynomial transformations \cite{guseynov2024explicit_Quantum_state_prep}.
    
    \item The \textbf{evolution stage} maps the non-unitary finite-difference pricing dynamics into a unitary Schr\"odinger system via Schr\"odingerisation \cite{PRL2024,PRA2023,analog,cao2023quantum,PRS2024,jin2024quantumsimulationfokkerplanckequation,JIN2025114138,jin2025quantumpreconditioningmethodlinear,doi:10.1137/23M1563451,JinLiuMa2024MaxwellSchrodingerisation,jin2025schrodingerizationbasedquantumalgorithms,schrodingerisation_optimal_queries}, and then applies block-encoding framework \cite{guseynov2025quantumframeworksimulatinglinear,guseynov2024efficientPDE} and QSVT to implement the corresponding Hamiltonian evolution.
    
    \item The \textbf{readout stage} uses a Hadamard-test construction together with amplitude estimation to recover a selected option value as a classical number \cite{doi:10.1137/1.9781611977561.ch12}.
\end{enumerate}

We now state the two main complexity results for the quantum pricing framework. The detailed costs from Theorems~\ref{thm:main-bs-multi} and~\ref{thm:main-heston-multi} are summarized in Table~\ref{tab:scaling_bs_heston_multi}.

\begin{theorem}[Local vol. Black--Scholes]
\label{thm:main-bs-multi}
Consider the $d$-asset local-volatility Black--Scholes pricing PDE in Eq.~\eqref{eq:background-multi-bs-pde}. Let each asset direction be discretized with $N=2^n$ grid points using second-order finite differences, and take the payoff to be the Worst-of Call payoff in Eq.~\eqref{eq:background-worstof-call}, or another payoff whose boundary treatment is compatible with the finite-dimensional ODE representation in Eq.~\eqref{eq:background-discrete-ode}. Suppose the local volatility functions are approximated by polynomials of degree at most $Q_\sigma$. Then the end-to-end gate cost for estimating one value $V(0,\vec S_0)$ is
\begin{equation}
\label{eq:main-bs-cost}
\begin{aligned}
\mathcal C_{\mathrm q}^{\mathrm{BS}}
=\mathcal{O}&
\Bigl(
d^{3}n+d^{2}n\log n
+dT\,2^{2n}
\bigl[
dQ_{\sigma}n\log n
\\&\qquad+n_{\xi}\log n_{\xi}
+d^{4}n
\bigr]
\Bigr)2^{nd/2},
\end{aligned}
\end{equation}
up to factors polylogarithmic in the remaining precision parameters. The resulting leading grid-size dependence is
\begin{equation}
\label{eq:main-bs-leading}
\widetilde{\mathcal C_{\mathrm q}^{\mathrm{BS}}}=
\widetilde{\mathcal O}\!\left(d^22^{n(2+d/2)}\right)  =
\widetilde{\mathcal O}\!\left(d^2N^{2+d/2}\right),
\end{equation}
where we omitted $poly(n)$ factors for simplicity. Here $n_\xi$ is the number of qubits in the auxiliary Schr\"odingerisation register. For additive Schr\"odingerisation error $\epsilon_{\mathrm{Schr}}$ following the result \cite{schrodingerisation_optimal_queries}, we take
\begin{equation}
\label{eq:main-bs-nxi-scaling}
n_\xi=\mathcal O\!\left(
\log\log\frac{1}{\epsilon_{\mathrm{Schr}}}
\right).
\end{equation}

\end{theorem}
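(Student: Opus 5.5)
The cost in \eqref{eq:main-bs-cost} has the form (state preparation $+$ evolution)$\times$readout. I will bound each stage separately and then combine them.

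\emph{State preparation.} I encode the Worst-of Call payoff \eqref{eq:background-worstof-call}, augmented by the constant component of the homogeneous embedding of \eqref{eq:background-discrete-ode}, in three steps.
\begin{enumerate}
\item Block-encode the coordinate operators $\hat S_i$ on each $n$-qubit register. Each costs $\mathcal O(n)$ gates via a linear combination of $Z$-rotations.
\item Implement $\min_i S_i$ and the ramp $(x-K)_+$ with QSVT polynomials. The smooth-Heaviside-type approximations to the pairwise minima over $d$ registers give the $d^3n$ term. The polynomial-degree and coordinate arithmetic overhead gives the $d^2n\log n$ term.
\item Prepare the smooth cut-off state $\ket{\Phi_0}$ on the $n_\xi$ register. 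By \cite{guseynov2024explicit_Quantum_state_prep} this costs $\mathrm{poly}(n_\xi)$, which is subdominant.
\end{enumerate}

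\emph{Evolution.} I will proceed in four steps.
\begin{enumerate}
\item Write $A=H_1+iH_2$. Schr\"odingerisation \cite{schrodingerisation_optimal_queries} then gives the Hamiltonian $H=H_1\otimes\hat\eta+H_2\otimes I$ on the joint register. With a smooth initial profile, the Fourier cut-off needs only $n_\xi=\mathcal O(\log\log(1/\epsilon_{\mathrm{Schr}}))$ qubits, which is \eqref{eq:main-bs-nxi-scaling}.
\item Block-encode $H$ using the framework of \cite{guseynov2025quantumframeworksimulatinglinear,guseynov2024efficientPDE}. The second-order stencils for $S_i^2\partial_{S_iS_i}$ and $S_i\partial_{S_i}$ are products of coordinate block-encodings and shift operators, and the subnormalization is $\alpha=\mathcal O(d\,2^{2n})$, dominated by the $N^2$ scaling of the diffusion stencil. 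The local volatilities $\sigma_i(S_i)$ enter through degree-$Q_\sigma$ polynomials of $\hat S_i$, costing $Q_\sigma n\log n$ per asset, hence the term $dQ_\sigma n\log n$. The $d(d-1)/2$ mixed-derivative terms $\rho_{ij}\sigma_i\sigma_j S_iS_j\partial_{S_iS_j}$ are combined by an LCU over pairs; together with the PREP cost of the correlation coefficients I expect this to give the $d^4n$ term. The $\hat\eta$ block-encoding (a QFT on $n_\xi$ qubits) gives $n_\xi\log n_\xi$.
\item Apply QSVT Hamiltonian simulation for time $T$. This uses $\mathcal O(\alpha T+\log(1/\epsilon))$ queries, and multiplying by the per-query cost gives $dT2^{2n}[\dots]$.
\item Because $A$ generates a backward-parabolic evolution with contractive symmetric part, postselection onto $\xi>0$ succeeds with probability bounded below by a constant times $\|\vec V(0)\|^2/\|\{\vec V(T),1\}\|^2$. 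This ratio will also feed into the readout count.
\end{enumerate}

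\emph{Readout.} I will proceed in three steps.
\begin{enumerate}
\item Estimate the single amplitude $\langle \vec S_0|V(0)\rangle$ by a Hadamard test combined with amplitude estimation \cite{doi:10.1137/1.9781611977561.ch12}. To reach absolute precision $\epsilon$ in the price $V(0,\vec S_0)$ itself, the amplitude must be resolved to precision $\epsilon/\|\vec V(0)\|_2$.
\item The key estimate is $\|\vec V(0)\|_2=\Theta(2^{nd/2})$. This holds because option prices are bounded below away from the boundary on a constant fraction of the $N^d$ grid points and bounded above by the linear growth of the payoff. Amplitude estimation therefore needs $\mathcal O(2^{nd/2}/\epsilon)$ repetitions of the preparation-plus-evolution circuit.
\item The normalization is recovered classically from the known initial norm $\|\vec f\|_2$ together with a success-probability estimate. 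The same scaling absorbs the postselection overhead, since the ratio of norms is $\mathcal O(1)$ after accounting for the $e^{-rT}$ discount.
\end{enumerate}
Multiplying the readout factor into the sum of the two stage costs gives \eqref{eq:main-bs-cost}. Keeping the dominant $2^{2n}\cdot2^{nd/2}$ and $d\cdot d$ factors from \eqref{eq:main-bs-cost}, and dropping $\mathrm{poly}(n)$ and precision factors, gives \eqref{eq:main-bs-leading}.

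\emph{Main obstacle.} I expect the hardest step to be the readout normalization argument: showing rigorously that $\|\vec V(0)\|_2^2/\|\vec f\|_2^2$ and the postselection probability are both $\Theta(1)$, uniformly in $n$, for the Worst-of Call. I would first try to bound the discrete solution via a discrete maximum principle for the stencil, which requires monotone coefficients, i.e.\ a mesh condition on the mixed term. I would then compare $\vec V(0)$ and $\vec f$ pointwise on the region $\min_i S_i>K(1+\delta)$, which has positive measure.
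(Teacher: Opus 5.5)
Your overall decomposition matches the paper's: state preparation plus evolution, multiplied by a single-amplitude readout. The evolution stage also matches (Schr\"odingerisation with $n_\xi=\mathcal{O}(\log\log(1/\epsilon_{\mathrm{Schr}}))$, a sparse block-encoding with subnormalization $\mathcal{O}(d\,2^{2n})$, QSVT simulation for time $T$). So does the readout stage (Hadamard test plus amplitude estimation at precision $\epsilon/N_V(0)$ with $N_V(0)=\Theta(2^{nd/2})$).

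The gap is the state-preparation term. You never account for the success probability of loading the Worst-of payoff, and your attribution of $d^3n$ to ``pairwise minima over $d$ registers'' is unsupported. Pairwise comparisons among $d$ registers cost $\mathcal{O}(d^2n)$. Together with loading the linear pieces, this gives only the \emph{per-attempt} cost $\mathcal{O}(d^2n+dn\log n)$ of Table~\ref{table:quantum state prep of payoffs}. The missing idea is that this attempt succeeds only with probability $\mathcal{O}(1/d^2)$, the filling ratio of the Worst-of state. Hence $\mathcal{O}(d)$ rounds of amplitude amplification (Appendix~\ref{app:amplification}) are needed, which gives $d\,(d^2n+dn\log n)=d^3n+d^2n\log n$.

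Your QSVT smoothing of the kinks in $\min$ and $(\cdot)_+$ is also the wrong tool. Uniformly $\epsilon$-approximating $|x|$ requires polynomial degree $\Theta(1/\epsilon)$, so this route cannot give a cost polylogarithmic in precision. The paper instead builds the piecewise structure exactly with indicator-qubit comparators, Eq.~\eqref{eq:indicator qubit}, and loads each linear piece via Proposition~\ref{theorem:piecewise_poly}. Because this term is additive and subdominant, the leading scaling survives, but your argument does not establish the exact form of \eqref{eq:main-bs-cost}.

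Two smaller points. First, do not base the postselection step on a contractive symmetric part. For the discretized variable-coefficient generator, $S_1$ need not be sign-definite. Already for $d=1$, the $L^2$ form of the backward generator $\frac12\sigma^2S^2\partial_{SS}+rS\partial_S-r$ is positive on modes slowly varying in $\log S$ once $\sigma^2>4r$. The paper handles this by postselecting on $\xi>\lambda^{\min}_{S_1}T$ and absorbing the loss into the known constant $C_{\lambda_S}$ of Eq.~\eqref{eq:probability postselection Schrodingerisation}. The $\mathcal{O}(1)$ norm ratio you single out as the main obstacle is not proved in the paper either. It enters only as assumptions: $\mathcal{N}_\varsigma=\mathcal{O}(1)$ after amplification, and $|V|=\mathcal{O}(1)$ on the grid.

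Second, your passage to \eqref{eq:main-bs-leading} by keeping ``$d\cdot d$'' selects the $dQ_\sigma$ term, but the $d^4n$ term contributes $d^5$. The $d^2$ prefactor therefore holds only if polynomial factors in $d$ are also suppressed, an imprecision inherited from the statement itself.
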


\begin{proof}
Appendix~\ref{sec:bs-multiasset-discretization} reduces the multi-asset Black--Scholes PDE to the homogeneous ODE system used by the quantum algorithm. The payoff and cut-off preparation costs are derived in Appendix~\ref{section:state prep}. The Schr\"odingerisation and Hamiltonian-simulation costs follow from Appendix~\ref{section:evoltuion non conservative}, in particular Proposition~\ref{theorem:QuantumHamiltonianSimulation}; the factor $2^{2n}$ is induced by the second-order finite-difference generator. Finally, Appendix~\ref{subsec:readout_complexity} shows that single-point readout on an $N^d$-point grid contributes the factor $2^{nd/2}$. Adding the state-preparation and evolution costs and multiplying by the readout overhead yields Eq.~\eqref{eq:main-bs-cost}.
\end{proof}

\begin{theorem}[Heston]
\label{thm:main-heston-multi}
Consider the $d$-asset Heston pricing PDE in Eq.~\eqref{eq:background-multi-heston-pde}. The state variables are the $d$ asset coordinates and the $d$ variance coordinates, hence the spatial dimension is $2d$. Let each spatial direction, including both asset and variance directions, be discretized with $N=2^n$ grid points using second-order finite differences, and take the payoff to be the Worst-of Call payoff in Eq.~\eqref{eq:background-worstof-call}, or another payoff whose boundary treatment is compatible with the finite-dimensional ODE representation in Eq.~\eqref{eq:background-discrete-ode}. Suppose the payoff and smooth cut-off states are prepared using Appendix~\ref{section:state prep}, with success probabilities boosted to $\mathcal O(1)$. Then the end-to-end gate cost for estimating one value $V(0,\vec S_0,\vec v_0)$ is
\begin{equation}
\label{eq:main-heston-cost}
\begin{aligned}
\mathcal C_{\mathrm q}^{\mathrm{Heston}}
=\mathcal{O}
\Bigl(
d^{3}n&+d^{2}n\log n+dT\,2^{2n}
\bigl[
dn\log n
\\&+n_{\xi}\log n_{\xi}+d^{4}n
\bigr]
\Bigr)2^{nd},
\end{aligned}
\end{equation}
up to factors polylogarithmic in the precision parameters. The resulting leading grid-size dependence is
\begin{equation}
\label{eq:main-heston-leading}
\begin{aligned}
\widetilde{\mathcal C_{\mathrm q}^{\mathrm{Heston}}}=
\widetilde{\mathcal O}\!\left(d^22^{n(d+2)}\right) =
\widetilde{\mathcal O}\!\left(d^2N^{d+2}\right),
\end{aligned}
\end{equation}
where we omitted $poly(n)$ factors for simplicity. Here $n_\xi$ is again the number of qubits in the auxiliary Schr\"odingerisation register. Following the result \cite{schrodingerisation_optimal_queries}, we take
\begin{equation}
\label{eq:main-heston-nxi-scaling}
n_\xi=\mathcal O\!\left(
\log\log\frac{1}{\epsilon_{\mathrm{Schr}}}
\right).
\end{equation}
\end{theorem}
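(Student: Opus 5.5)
The proof of Theorem~\ref{thm:main-heston-multi} will mirror that of Theorem~\ref{thm:main-bs-multi}, keeping track of the two places where the Heston structure changes the accounting: the doubled spatial dimension $2d$, and the square-root coefficients $\sqrt{v_iv_j}$ together with the mixed $S$--$v$ derivatives.

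\textbf{Step 1: reduction to a homogeneous ODE.} I will invoke Appendix~\ref{sec: ODE heston multi} to reduce Eq.~\eqref{eq:background-multi-heston-pde} to the homogeneous form of Eq.~\eqref{eq:background-discrete-ode}. This uses second-order central stencils in each of the $2d$ directions, the Worst-of Call boundary closure, and the constant-component augmentation $\{\vec V(T),1\}^\top$. The register then has $W=2dn$ qubits.

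\textbf{Step 2: block-encoding the generator.} I will write $A$ as a sum of $O(d^2)$ terms. These are the diagonal terms $\partial_{S_i}$, $\partial_{v_i}$, $\partial_{S_iS_i}$, $\partial_{v_iv_i}$, and the cross terms $\partial_{S_iS_j}$, $\partial_{S_iv_j}$, $\partial_{v_iv_j}$. Each term is a product of coordinate-multiplication operators $S_i$, $\sqrt{v_i}$, $v_i$ with tensor products of one-dimensional difference operators.

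The coefficients $S_iS_j\sqrt{v_iv_j}$ factorize across registers. Each factor can therefore be block-encoded by QSVT acting on the coordinate block-encodings at cost $O(n\log n)$. Here $\sqrt{v}$ replaces the polynomial $\sigma_i(S_i)$ of the Black--Scholes case, so the term $dQ_\sigma n\log n$ becomes $dn\log n$, provided $\sqrt{v}$ on the grid can be approximated by a polynomial of degree polylogarithmic in the precision. I expect this to be handled by working with a grid bounded away from $v=0$, or by absorbing the error into the $\widetilde{\mathcal O}$.

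Combining the $O(d^2)$ terms by linear combination of unitaries, using the same mixed-stencil construction as in the Black--Scholes case, contributes the $d^4n$ term. The normalization is $\|A\|=O(d^2 2^{2n})$ from the second-difference stencils, with the $d$ factors split between subnormalization and the number of terms as in Theorem~\ref{thm:main-bs-multi}.

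\textbf{Step 3: evolution.} I will apply Schr\"odingerisation and Proposition~\ref{theorem:QuantumHamiltonianSimulation}. With $n_\xi$ chosen as in Eq.~\eqref{eq:main-heston-nxi-scaling}, following \cite{schrodingerisation_optimal_queries}, the QSVT Hamiltonian simulation needs $O(dT2^{2n})$ queries. This is up to polylogarithmic precision factors. Each query costs $dn\log n+n_\xi\log n_\xi+d^4n$, where the $n_\xi\log n_\xi$ part comes from the momentum operator on the auxiliary register.

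\textbf{Step 4: state preparation.} The state-preparation cost $d^3n+d^2n\log n$ is taken unchanged from Appendix~\ref{section:state prep}. The payoff depends only on $S$, so the $v$-register is prepared in a uniform superposition at cost $O(dn)$, which is absorbed. Success probabilities are boosted to $O(1)$ by amplitude amplification, as the statement assumes.

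\textbf{Step 5: readout, the main obstacle.} Appendix~\ref{subsec:readout_complexity} gives a multiplicative overhead $\sqrt{\text{grid size}}$ for single-point readout, which for an $N^{2d}$-point grid is $2^{nd}$. The key point is that the normalization $\|\vec V(0)\|_2$ scales as $\sqrt{N^{2d}}$ times a typical entry. The amplitude of a single point is therefore $\Theta(2^{-nd})$, and amplitude estimation needs $O(2^{nd})$ repetitions. I will need to argue that the Heston solution does not concentrate, i.e.\ that the payoff is bounded below on a constant fraction of the $(S,v)$ grid after evolution. I plan to do this with the same monotonicity argument used for the Black--Scholes readout, since the worst-of payoff is independent of $v$.

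Adding the costs of Steps 3 and 4 and multiplying by $2^{nd}$ gives Eq.~\eqref{eq:main-heston-cost}. Dropping $\mathrm{poly}(n)$ factors, the leading term is $d^2 2^{2n}\cdot 2^{nd}$ after assigning the $d$ powers as in the Black--Scholes case, which yields Eq.~\eqref{eq:main-heston-leading}.
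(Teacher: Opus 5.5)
Your Steps 1--4 assemble the same ingredients as the paper's proof. These are the reduction of Appendix~\ref{sec: ODE heston multi}, the state preparation of Appendix~\ref{section:state prep}, Proposition~\ref{theorem:QuantumHamiltonianSimulation} with $d\to 2d$, and the readout factor of Appendix~\ref{subsec:readout_complexity}. Your caveat that the cross coefficients $\sqrt{v_iv_j}$ need a polynomial approximation is a genuine point that the paper silently absorbs into the $dn\log n$ term.

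The step that goes wrong is Step~5, which you single out as the main obstacle but argue in the wrong direction. In the paper's readout, the Hadamard test plus mIQAE must resolve $\psi_q$ to additive precision $\epsilon_V/N_V(0)$. The number of calls to $U$ is therefore $\mathcal O(N_V(0)\,\epsilon_V^{-1}\log(1/\delta))$, Eq.~\eqref{eq:U_complexity_readout}, so the cost is governed by $N_V(0)$. What the theorem needs is an \emph{upper} bound $N_V(0)=\mathcal O(2^{W/2})=\mathcal O(2^{nd})$ with $W=2dn$. This follows from $|V|=\mathcal O(1)$ on the grid: for the worst-of call, $0\le V\le\min_i S_i\le S_{\max}$ in the continuous model, and the paper simply assumes boundedness on the grid.

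A non-concentration bound ($V$ bounded below on a constant fraction of the grid) would give $N_V(0)=\Omega(2^{nd})$. That shows the readout overhead is tight; it contributes nothing to the upper bound, and if the solution did concentrate the readout would only become cheaper. There is also no monotonicity argument in the paper's Black--Scholes readout to reuse. For the worst-of payoff such a lower bound cannot even hold uniformly in $d$: since $V\le\min_iS_i$, the fraction of grid points with $V\ge c$ is at most roughly $(1-c/S_{\max})^d$.

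You also omit normalization recovery. The value $V_q=\psi_qN_V(0)$ needs $N_V(0)$, which the paper obtains from the analytically computed $N_V(T)$ and the Schr\"odingerisation postselection probability, Eq.~\eqref{eq:NV0_from_postselection}. This costs $\mathcal O(\epsilon_{\mathrm{shot}}^{-1}\log(1/\delta))$ further runs without the $2^{nd}$ factor. It is subleading, but it belongs in an end-to-end count.

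Second, the $d^2$ in Eq.~\eqref{eq:main-heston-leading} cannot come from ``assigning the $d$ powers as in the Black--Scholes case.'' The evolution term of Eq.~\eqref{eq:main-heston-cost} contains $dT2^{2n}\cdot d^4n$, i.e.\ a factor $d^5$. Likewise, your $\|A\|=\mathcal O(d^22^{2n})$ with $\mathfrak s=\mathcal O(d)$ in Proposition~\ref{theorem:QuantumHamiltonianSimulation} would give $\mathcal O(d^3T2^{2n})$ queries, not the $\mathcal O(dT2^{2n})$ you use. The latter rests on that proposition's convention $\|H\|_{\max}\sim\sigma_{\max}^2 2^{2n}$. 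The paper's proof does not justify the $d^2$ either; its appendix states the comparison only up to suppressed $\mathrm{poly}(d,n)$ factors. What actually follows from Eq.~\eqref{eq:main-heston-cost} is $\mathcal O(\mathrm{poly}(d,n)\,T\,2^{n(d+2)})$, and you should state the conclusion that way rather than claim the $d^2$ prefactor.
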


\begin{proof}
Appendix~\ref{sec: ODE heston multi} details the finite-difference reduction of the multi-asset Heston PDE to the homogeneous ODE system. Appendix~\ref{section:state prep} lays out the payoff and cut-off state-preparation costs. Appendix~\ref{section:evoltuion non conservative}, together with Proposition~\ref{theorem:QuantumHamiltonianSimulation}, gives the Schr\"odingerisation and Hamiltonian-simulation cost. Since the Heston grid has $2d$ spatial directions, Appendix~\ref{subsec:readout_complexity} produces the single-point readout factor $2^{nd}$ for an $N^{2d}$-point grid. Combining these contributions gives Eq.~\eqref{eq:main-heston-cost}.
\end{proof}

\begin{figure*}[ht!]
    \centering
    \includegraphics[width=1\linewidth]{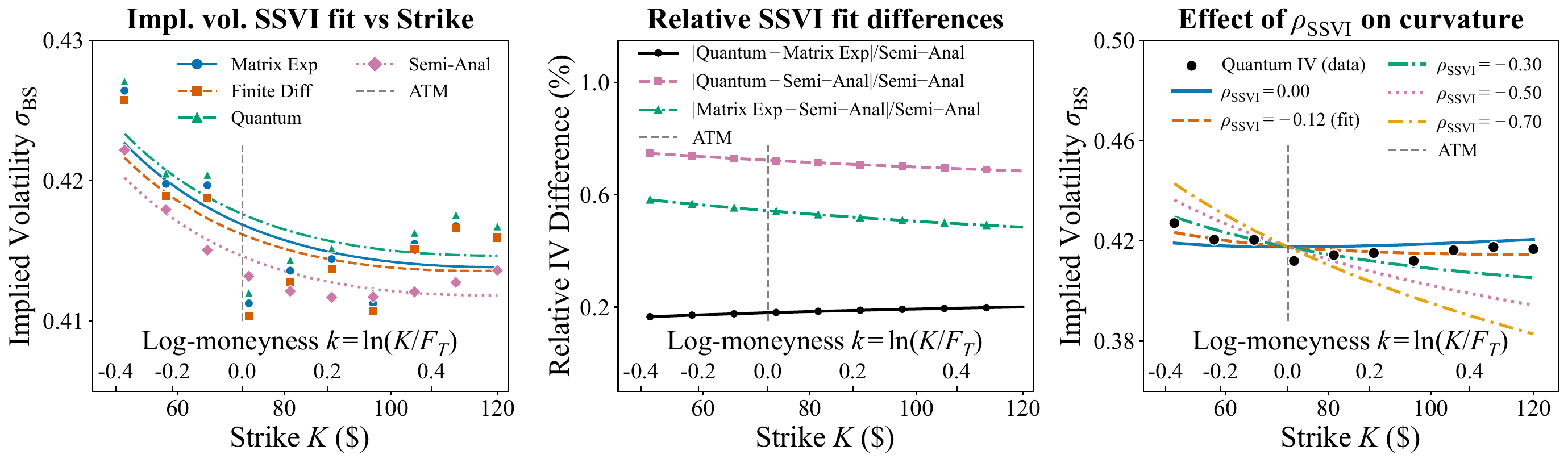}
    \caption{
    \textbf{Fixed-maturity Heston implied volatility smile shown in strike and forward log-moneyness coordinates ($k=\ln(K/F_T)$).} 
    \textbf{Left:} SSVI fits from four solvers. 
    \textbf{Centre:} Relative discrepancies against the semi-analytical benchmark. 
    \textbf{Right:} Sensitivity to the SSVI correlation parameter $\rho_{\mathrm{SSVI}}$. The best fitted $\rho_{\mathrm{SSVI}}$ produces a mild skew that preserves the arbitrage-free nature of the curve (Algorithm~\ref{alg:ssvi}). Simulation details are clarified in Appendix~\ref{subsec: Heston numerics}.}
    \label{fig:ssvi-three-panels}
\end{figure*}

The gate counts above are expressed in terms of CNOT gates and one-qubit Pauli-axis rotations. This is the natural logical-level description for our explicit circuits, since such gates are standard elementary operations in quantum-circuit models and are directly meaningful before specifying a particular error correction code. If one further assumes an error-corrected architecture, it is often useful to translate these counts into Clifford+$T$ resources, because Clifford operations are typically fault-tolerant under stabilizer-code constructions while non-Clifford $T$ gates carry the dominant synthesis overhead. The following Lemma records the corresponding compilation overhead for the rotation gates appearing in our estimates.

\begin{lemma}[Compilation to Clifford+$T$~\cite{10.5555/3179330.3179331}]
\label{thm:clifford-t-compilation}
Let $\epsilon>0$. Let $U$ be a quantum circuit composed of CNOT gates and
$\mathcal C$ one-qubit Pauli-axis rotations $R_{\alpha}(\theta)$, with
$\alpha\in\{x,y,z\}$. Then there exists a Clifford+$T$ circuit
$\widetilde U$ such that
\begin{equation}
\label{eq:clifford-t-approx-error}
\|U-\widetilde U\|_\mathrm{op}\leq \epsilon ,
\end{equation}
where $\|\cdot\|_\mathrm{op}$ denotes the operator norm.
The CNOT gates are already Clifford gates, and the $\mathcal C$ one-qubit
rotations can be replaced by Clifford+$T$ approximations with total $T$-count
\begin{equation}
\label{eq:clifford-t-total-count}
T_{\mathrm{count}}\leq 4\mathcal C\log_2\!\left(\frac{\mathcal C}{\epsilon}\right)+\mathcal O\!\left(\mathcal C\log\log\frac{\mathcal C}{\epsilon}\right).
\end{equation}
\end{lemma}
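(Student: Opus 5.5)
The plan is to split the total error budget evenly across the rotations and approximate each rotation with an optimal single-qubit Clifford+$T$ synthesis algorithm; errors then accumulate at most additively. Concretely, list the one-qubit rotations in $U$ as $R^{(1)},\dots,R^{(\mathcal C)}$ and set a per-gate target $\epsilon_0=\epsilon/\mathcal C$. The CNOT layers are left unchanged, since they are already Clifford. Each $R^{(k)}=R_{\alpha_k}(\theta_k)$ is replaced by a Clifford+$T$ word $\widetilde R^{(k)}$ with $\|R^{(k)}-\widetilde R^{(k)}\|_{\mathrm{op}}\le\epsilon_0$.

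For the synthesis step, a $z$-rotation $R_z(\theta)$ is handled by the number-theoretic ancilla-free algorithm of the cited work (Ross--Selinger type). It returns a Clifford+$T$ approximation with $T$-count at most $3\log_2(1/\epsilon_0)+\mathcal O(\log\log(1/\epsilon_0))$, or $4\log_2(1/\epsilon_0)+\mathcal O(\log\log(1/\epsilon_0))$ in the version without factoring assumptions, which is the constant in the statement. Rotations about $x$ and $y$ reduce to this case by Clifford conjugation, $R_x(\theta)=HR_z(\theta)H$ and $R_y(\theta)=SHR_z(\theta)HS^\dagger$. Conjugation by unitaries preserves the operator-norm error and adds no $T$ gates.

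Error accumulation follows from the telescoping (hybrid) argument. Write $U=G_m\cdots G_1$ and $\widetilde U=\widetilde G_m\cdots\widetilde G_1$, where each gate is tensored with identity on the other qubits. Unitary invariance of the operator norm gives $\|U-\widetilde U\|_{\mathrm{op}}\le\sum_i\|G_i-\widetilde G_i\|_{\mathrm{op}}$. Tensoring with identity does not change the norm, so this sum is at most $\mathcal C\epsilon_0=\epsilon$, which proves Eq.~\eqref{eq:clifford-t-approx-error}.

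Summing the per-gate $T$-counts then gives $\mathcal C\bigl[4\log_2(\mathcal C/\epsilon)+\mathcal O(\log\log(\mathcal C/\epsilon))\bigr]$, which is Eq.~\eqref{eq:clifford-t-total-count}. The only step that is more than bookkeeping is the per-rotation synthesis bound with an explicit leading constant. I would not reprove it but import it directly from~\cite{10.5555/3179330.3179331}, checking two points: that the cited constant (at most $4$) applies to arbitrary angles, and that the bound is stated in operator norm rather than up to a global phase. If the cited bound is phase-insensitive, the discarded global phase is harmless in every controlled use in this paper, because the rotations act as standalone gates. Alternatively, one can absorb the phase by synthesizing controlled versions at a constant-factor cost, which would change only the lower-order term.
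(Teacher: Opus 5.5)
Your argument is correct and is exactly the bookkeeping the paper leaves implicit when it imports the per-rotation synthesis bound from Ross--Selinger: an error budget of $\epsilon/\mathcal C$ per rotation, reduction of $R_x,R_y$ to $R_z$ by Clifford conjugation, and the telescoping bound on the accumulated operator-norm error. The global-phase contingency is unnecessary, because the cited synthesis already bounds $\|R_z(\theta)-\widetilde R\|_{\mathrm{op}}$ without phase freedom. Two further corrections, neither of which affects the argument: the $3$ versus $4$ in Ross--Selinger separates typical from worst-case angles rather than reflecting the factoring assumption, and your controlled-rotation fallback would raise the leading constant, not only the lower-order term.
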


\subsectionnotoc{Numerical validation}

We now validate the quantum pricing framework numerically. The main test is the Heston implied-volatility smile: in contrast to the flat curve of constant-volatility Black--Scholes, stochastic-volatility models generate strike-dependent smiles and skews  \cite{derman2016volsmile}, as illustrated in Fig.~\ref{fig:volsmileskew}. Recovering this structure is a crucial market-level reality and the quantum framework preserves the financial behavior of the Heston model.

Fig.~\ref{fig:ssvi-three-panels} shows the resulting fixed-maturity Heston smile. The left panel compares the smiles obtained from four solvers, the centre panel reports relative discrepancies against the semi-analytical benchmark, and the right panel shows the dependence of the fitted smile on the best-fit SSVI skew parameter from Algorithm~\ref{alg:ssvi}. Across the displayed strike range, the quantum framework agrees with the classical and semi-analytical references and reproduces the non-flat smile/skew structure associated with stochastic volatility models.

The four solvers compared in Fig.~\ref{fig:ssvi-three-panels} are the quantum framework developed in this paper, the semi-analytical Heston benchmark based on the characteristic-function representation \cite{heston1993closed}, and two classical numerical baselines: a matrix-exponential baseline and an implicit finite-difference baseline. The matrix-exponential method \cite{Higham2008FunctionsOfMatrices} applies the exponential action of the semi-discrete ODE system in Eq.~\eqref{eq:background-discrete-ode}. The finite-difference method is the standard grid-based PDE solver (implicit Euler method) that evolves the discretized pricing equation backward in time \cite{duffy2006fdm}. Details of the classical baselines and numerical implementation are given in Appendix~\ref{appendix:numerical methods}.

We also test the same workflow on the one-dimensional Black--Scholes model in Appendix~\ref{sec:bs1d-numerical-test}, where the quantum framework is compared with the analytical Black--Scholes formula, finite differences, and matrix exponentiation. This gives a closed-form benchmark for the pipeline before the stochastic-volatility smile experiment.

Together, these numerical validations complement the complexity results by showing that the framework produces the expected financial outputs. It matches the analytical Black--Scholes benchmark in the closed-form setting and recovers the Heston implied-volatility smile in the stochastic-volatility setting, supporting the end-to-end consistency of the proposed quantum pricing workflow. To the best of our knowledge, this is the first demonstration of a quantum PDE-based pricing workflow that recovers an implied-volatility smile/skew from quantum-computed option prices.

\bigskip
\subsectionnotoc{Quantum advantage}

For the grid-based PDE baselines described in Appendix~\ref{appendix:numerical methods}, the leading classical arithmetic cost for a pricing PDE with $s$ spatial variables and $N=2^n$ grid points per variable is
\begin{equation}
\label{eq:main-classical-W}
\mathcal C_{\mathrm{cl}}(s,N)
=
\widetilde{\mathcal O}\!\left(s^2N^{s+2}\right).
\end{equation}
Here $N^s$ is the number of spatial grid degrees of freedom. The additional factor $N^2$ comes from the second-order finite-difference discretization: for finite-difference time stepping it appears through the number of time steps required at grid resolution $\Delta x=\Theta(1/N)$, while for matrix-exponential-action methods it appears through the generator norm $\|A\|_{\max}=\Theta(\Delta x^{-2})=\Theta(N^2)$.

\begin{remark}[Disclaimer on end-to-end costs]
\label{rem:universal-disclaimer-ft-topology}
The resource estimates in this paper are \emph{logical-level} (algorithmic) scalings and should be
interpreted with care. First, quantum gate counts and circuit depth are not directly comparable to classical
FLOP-based operation counts, since one logical gate is not a single physical step and may be slower and more
costly to implement; thus the quantum--classical comparisons made here are at the level of asymptotic scaling
rather than like-for-like wall-clock cost. Second, we do not include (i) the overhead of quantum error-correction codes, including the number of physical qubits required per logical qubit and the additional depth required for fault-tolerant logical operations, nor (ii) hardware constraints such as limited qubit connectivity and chip topology, which can increase depth through routing (SWAP) and scheduling overhead.
\end{remark}

For the $d$-asset Black--Scholes PDE, the number of spatial variables is $s=d$, so $W=dn$. Therefore, Eq.~\eqref{eq:main-classical-W} gives the leading scaling
\begin{equation}
\label{eq:main-bs-classical}
\mathcal C_{\mathrm{cl}}^{\mathrm{BS}}=
\mathcal O\!\left(d^2 2^{n(d+2)}\right)=
\mathcal O\!\left(d^2 N^{d+2}\right),
\end{equation}
as derived in Appendix~\ref{subsec:multiasset BS scaling}. Combining this with Theorem~\ref{thm:main-bs-multi} gives
\begin{equation}
\label{eq:main-bs-gain}
\frac{
\mathcal C_{\mathrm{cl}}^{\mathrm{BS}}
}{\widetilde{
\mathcal C_{\mathrm q}^{\mathrm{BS}}}
}
=
\widetilde{\Omega}\!\left(2^{nd/2}\right)=
\widetilde{\Omega}\!\left(N^{d/2}\right),
\end{equation}
where we omitted $poly(n)$ factors for simplicity. Thus, for multi-asset Black--Scholes, the quantum pipeline gives an improvement factor $N^{d/2}$ for single-point option-price recovery.

For the $d$-asset Heston PDE, the state contains both asset and variance variables, so $s=2d$ and $W=2dn$. Therefore,
\begin{equation}
\label{eq:main-heston-classical}
\mathcal C_{\mathrm{cl}}^{\mathrm{Heston}}=
\mathcal O\!\left(d^2 2^{n(2d+2)}\right) =
\mathcal O\!\left(d^2 N^{2d+2}\right),
\end{equation}
as derived in Appendix~\ref{subsec: Heston multi}. Combining this with Theorem~\ref{thm:main-heston-multi} gives
\begin{equation}
\label{eq:main-heston-gain}
\frac{
\mathcal C_{\mathrm{cl}}^{\mathrm{Heston}}
}{\widetilde{
\mathcal C_{\mathrm q}^{\mathrm{Heston}}}
}
=
\widetilde{\Omega}\!\left(2^{nd}\right) =
\widetilde{\Omega}\!\left(N^{d}\right).
\end{equation}
Thus, for multi-asset Heston, the quantum pipeline gives an improvement factor $N^d$ for single-point option-price recovery.

Consequently, within the grid-based PDE setting considered here, the quantum framework gives a strict asymptotic improvement for every $d\geq 1$ and every nontrivial grid resolution $N>1$. For Black--Scholes, the improvement factor is $N^{d/2}$; for Heston, the improvement factor is $N^d$. These factors increase exponentially with the number of assets, so the advantage becomes more pronounced in higher-dimensional pricing problems.

This scaling should be viewed in the context of the curse of dimensionality. In a full-grid PDE method, adding spatial variables increases the number of grid degrees of freedom exponentially: $N^d$ for $d$-asset Black--Scholes and $N^{2d}$ for $d$-asset Heston. The proposed quantum framework reduces this grid-volume dependence to its square root for single-point output, namely from $2^W$ to $2^{W/2}$. Thus the framework provides a dimension-enhanced quantum advantage for the end-to-end single-point pricing task, while it does not eliminate the curse of dimensionality: the resulting costs still grow exponentially with the number of assets.

\bigskip\medskip
\sectionnotoc{Related work}\label{section:related-literature}

\textbf{PDE-centric quantum methods.} The state-of-the-art quantum PDE-centric methods for option pricing are methodologically close to our work: they use the pricing PDE as the computational object and implement the induced dynamics through Hamiltonian-simulation-type primitives or variational solvers. Compared with those methods, our contribution is a fully explicit end-to-end pipeline, with performance guarantees from classical input data to classical option-price output.

A closely related work \cite{alghassi2022variational} proposes a variational quantum algorithm based on
imaginary-time evolution to solve the Feynman--Kac PDE induced by a multidimensional SDE system. Notably for quantitative-finance applications, the authors discuss the non-unitary normalization issue
arising when the PDE evolution preserves probability mass (an $\ell_1$-type structure) rather than an
$\ell_2$ norm, and introduce a proxy normalization scheme to keep the solution approximately normalized. At the same time, the approach remains variational, so its practical performance depends on the choice, initialization, and expressivity of the ansatz.

A strong and closely related PDE-based quantum contribution is the finite-difference formulation of multi-asset Black--Scholes pricing in \cite{KuboFDMmultiasset2022}. The work is valuable because it treats a genuinely multidimensional pricing PDE and, importantly, makes explicit a central obstacle for quantum PDE methods in finance: the desired option value may be encoded in a single amplitude, so readout might potentially remove the apparent speedup with respect to the PDE dimension. Another strong aspect of the paper is that it proposes a concrete way to mitigate this issue, by reformulating the price extraction as an expectation at an intermediate time rather than as direct single-amplitude readout. However, this readout strategy is closely tied to the closed-form multivariate lognormal transition law of the Black--Scholes model. As a result, for local-volatility or stochastic-local-volatility models, where such a transition law is not available, the same mechanism does not apply directly and would require a different readout construction.

A further useful perspective is provided in \cite{guseynov2025quantumalgorithmlocalvolatilityoption}, which directly targets the readout bottleneck in quantum PDE-based option pricing. The paper argues that solving the forward Kolmogorov equation can turn price recovery into an expected-payoff estimation problem, rather than a single-amplitude readout problem, and proposes a swap-test-based strategy for extracting the option value. This gives a constructive route around one of the main obstacles in quantum PDE pricing, although the readout analysis is payoff dependent.

Another relevant direction is the variational quantum-simulation framework for stochastic differential equations in \cite{kubo2021variational}. The paper presents a flexible NISQ-oriented approach applicable to Black--Scholes-type diffusions and, in principle, to stochastic-volatility models such as Heston, by evolving a parametrized ansatz state $\ket{\psi(\theta)}$ according to McLachlan's variational principle. This provides a useful route for SDE-based option pricing. At the same time, the method remains variational, so its accuracy depends on the ansatz and conditioning of the parameter dynamics. Moreover, the finite-difference diffusion generator contains $1/\Delta x^2$ coefficients, so for $N=2^n$ grid points its norm scales as $\mathcal{O}(2^{2n})$, which may lead to many variational time-marching steps.

\textbf{SDE-centric quantum Monte Carlo methods.}
Most SDE-centric quantum approaches to derivative pricing aim to accelerate Monte Carlo estimators by combining coherent path simulation with quantum amplitude estimation (QAE). Under suitable path-loading and payoff-evaluation assumptions, this replaces the classical $\mathcal{O}(\varepsilon^{-2})$ sampling dependence for additive error $\varepsilon$ by a quantum $\widetilde{\mathcal{O}}(\varepsilon^{-1})$ scaling, where $\varepsilon$ is absolute error in option price. For comparison, consider a grid-based discretization of a PDE whose highest spatial derivatives are of second order. Let $N=2^n$ denote the number of spatial grid points, in one dimension or per spatial dimension in the multidimensional case. If the spatial discretization is performed using a finite-difference scheme with second-order accuracy, then the spatial discretization error scales as $\mathcal{O}(N^{-2})$ \cite{LeVeque2007FDM}. Therefore, when the target accuracy is informally matched to the spatial grid error, one may identify
\begin{equation}
    \varepsilon \sim N^{-2}=2^{-2n}.
\end{equation}
Under this parametrization, the QAE sampling factor, which scales as $\widetilde{\mathcal{O}}(\varepsilon^{-1})$, becomes $\widetilde{\mathcal{O}}(2^{2n})$.

The quantum multilevel Monte Carlo framework in \cite{An2021quantumaccelerated} gives rigorous quadratic-speedup guarantees for SDE-based pricing, including Black--Scholes, local-volatility models, and Greeks. These results provide an important theoretical foundation, although they heavily rely on quantum oracle access to the SDE increments and payoff, whose concrete implementation may be nontrivial for a given financial model. The approach in \cite{scriba2025montecarlooptionpricingquantum} moves toward a more explicit path-simulation construction and discusses stochastic-volatility-type settings, thereby reducing the reliance on abstract sampling oracles. Circuit-level constructions for local-volatility pricing are further developed in \cite{Kaneko2022localvol}; in particular, the use of pseudo-random numbers avoids storing one random-number register per time step, with the tradeoff that random numbers are generated sequentially and the circuit depth grows with their number. For stochastic volatility, \cite{Wang2024optionpricingunder} provides a detailed end-to-end resource estimate for single-asset Heston Asian and barrier options using Euler-type SDE discretizations with QAE. This gives a concrete and practically informative benchmark. Remarkably, the paper also demonstrates an important practical limitation of pathwise quantum Monte Carlo: the substantial memory overhead. For instance, the weak-Euler construction with $256$ time steps already requires approximately $2.2\times 10^4$ logical qubits.

In the multi-asset setting, \cite{chen2025quantummontecarloalgorithm} gives a rigorous quantum Monte Carlo algorithm for high-dimensional correlated Black--Scholes pricing with continuous piecewise-affine payoffs, overcoming the full-grid curse of dimensionality in that model. The construction exploits the multivariate lognormal transition law that is available in closed form. Extending the same strategy to local-volatility or stochastic-volatility models would require a different path-loading mechanism. A particularly complete analysis in the SDE setting is given in \cite{herman2026quantumspeedupsderivativepricing}, where input loading, path evolution, coherent arithmetic, payoff evaluation, and readout are all included in the resource analysis. The paper proves end-to-end quadratic improvements in the precision dependence for a restricted Heston model, where the model excludes variance-variance cross-correlations and variance processes correlated with multiple equity processes. The pathwise implementation also uses registers for discretized random inputs and monitoring dates yielding high demand for logical qubits.

A remarkable \textbf{classical} result based on multilevel Picard methods overcomes the curse of dimensionality for certain semilinear Black--Scholes PDEs with default risk \cite{JentzenMultipicard}. This relies on the known explicit form of the underlying Black--Scholes/GBM SDE solution, which provides the required sampling representation. For local-volatility or stochastic-volatility models, such a closed-form path representation is generally unavailable, so the same mechanism does not directly transfer to the setting considered here.

\medskip\bigskip

\sectionnotoc{Discussion}

We have developed an end-to-end quantum PDE framework for option pricing under multi-asset Black--Scholes and Heston-type models. The framework starts from classical model and contract data, prepares the payoff and auxiliary Schr\"odingerisation states, implements the non-unitary pricing dynamics through a unitary embedding, and recovers selected option values as classical outputs. The resulting pipeline comes with explicit performance guarantees for the main algorithmic components: state preparation, Hamiltonian evolution, normalization recovery, and amplitude-estimation-based readout.

A central caveat is the readout step. In the backward-PDE formulation used here, the desired option price at a specified market state is encoded in one selected amplitude of the final quantum state. Recovering this single amplitude requires a readout overhead proportional to the square root of the grid volume. This is the main reason why the end-to-end complexity retains the factors $2^{nd/2}$ for Black--Scholes and $2^{nd}$ for Heston. A promising direction is to combine the present  pricing engine with forward Kolmogorov-type formulations, as in \cite{guseynov2025quantumalgorithmlocalvolatilityoption}. In such approaches, the option value can be expressed through an expected payoff against an evolved density, which may lead to more favorable readout mechanisms than direct single-amplitude recovery.

A second caveat concerns payoff coverage. The present analysis focuses on European-style payoffs that can be prepared efficiently and whose PDE boundary treatment fits the assumptions of the quantum evolution framework. However, many practically important contracts require additional structure: early-exercise products involve iterative or variational PDE solves, while path-dependent contracts require augmenting the state variables to encode accumulated quantities such as running averages, barriers, or realized variance. These cases are not covered by the current complexity theorems.

At the same time, path dependence is compatible with the general PDE-based framework: it can be incorporated by enlarging the state space and applying the same discretization, state-preparation, evolution, and readout pipeline to the augmented PDE system. Extending the present guarantees to such contracts, and in particular to path-dependent and early-exercise derivatives, is a natural target for future work.

\bigskip\medskip
\starsectionnotoc{Acknowledgments}
The authors thank Xiajie Huang for fruitful discussions. NG acknowledges funding from NSFC grant W2442002. NL acknowledges the support of the NSFC grants No. 12341104 and No. 12471411, the Shanghai Pilot Program for Basic Research, the Science and Technology Commission of Shanghai Municipality (STCSM) grant no. 24LZ1401200
(21JC1402900), the Shanghai Jiao Tong University 2030 Initiative, and the Fundamental Research Funds for the
Central Universities. 

\bigskip\medskip
\starsectionnotoc{Data Availability Statement}\label{sec:data availability}

The code used to generate the numerical results and figures in this work, including the implementations for the Black--Scholes and Heston simulations, the semi-analytical validation routines, and the implied-volatility / SSVI post-processing workflows is available in a GitLab repository \cite{guseynov_gitlab_placeholder}. No external experimental datasets were used in this study; all numerical data were generated from the models and parameters specified in the main text and Supplementary Material.

\bigskip\medskip
\starsectionnotoc{Declarations}
The authors have no competing interests to declare.

\clearpage

\let\oldaddcontentsline\addcontentsline
\renewcommand{\addcontentsline}[3]{}
\bibliography{references}
\let\addcontentsline\oldaddcontentsline

\clearpage
\appendix
\makeatletter
\renewcommand{\p@subsection}{\thesection.}
\renewcommand{\p@subsubsection}{\thesection.\arabic{subsection}.}
\makeatother
\onecolumngrid

\begin{center}
    \LARGE\textbf{Appendix}
\end{center}

\twocolumngrid
\tableofcontents
\onecolumngrid
\section{Black--Scholes}\label{section:Black-Scholes}

The formulation of the Black--Scholes model in the early 1970s represents a seminal development in modern mathematical finance. Building on It\^{o}'s stochastic calculus \cite{ito1944109}, Black and Scholes \cite{BlackScholes1973} and Merton \cite{Merton1973} introduced a replicating-portfolio argument that links the price of a contingent claim to a self-financing trading strategy written on the underlying asset. This approach yields a unique arbitrage-free valuation rule under a risk-neutral measure, resulting in a linear parabolic partial differential equation (PDE) for the option price. The model became the standard reference due to its internal consistency, clear probabilistic structure, and the availability of closed-form solutions for European options. Following these developments, the derivatives industry expanded rapidly in both sophistication and the breadth of traded products.

Despite well-documented empirical limitations, such as the assumption of constant volatility, the Black--Scholes framework remains the industry benchmark. As noted by \cite{vaidya2020learning}, market practitioners have effectively adopted it as a quotation convention rather than a strictly predictive model. Traders observe the market value of a derivative in dollar terms and invert the Black--Scholes formula to derive the ``implied volatility." This practice allows for the standardization of prices across different assets and maturities, cementing the model's status as the backbone of the derivatives market.

\subsection{Black--Scholes PDE}\label{sec:discretization}
Let $V(S,t)$ denote the value of a European-style derivative written on an underlying with spot price $S$ at time $t\in[0,T]$.
Under the risk-neutral measure, assuming a constant risk-free rate $r$ and zero continuous dividend yield ($q=0$), the Black--Scholes partial differential equation (PDE) reads
\begin{equation}
\label{eq:BS-PDE}
\pdv{V}{t}
+ \frac{1}{2}\,\sigma(S)^2 S^2 \pdv[2]{V}{S}
+ r S \pdv{V}{S}
- r V = 0,
\qquad S>0,\; 0\le t<T.
\end{equation}

In what follows we investigate a local-volatility specification in which the instantaneous volatility depends on the spot, $\sigma=\sigma(S)$, and is assumed piecewise continuous in $S$ on $(0,\infty)$. The notation is summarized in Table~\ref{table:notation-bs}.

\begin{table}[h!]
\centering
\caption{Notation used in the Black--Scholes formulation.}
\label{table:notation-bs}
\begin{tabular}{@{}ll@{}}
\toprule
\textbf{Symbol} & \textbf{Meaning} \\
\midrule
$S$ & Underlying asset price (spot) \\
$t$ & Calendar time, with maturity at $t=T$ \\
$T$ & Option maturity date \\
$K$ & Strike price \\
$V(S,t)$ & Arbitrage-free option value (generic) \\
$V^{\mathrm{call}}(S,t)$, $V^{\mathrm{put}}(S,t)$ & Call/put prices solving \eqref{eq:BS-PDE} with \eqref{eq:call/put-terminal} \\
$r$ & Constant risk-free interest rate \\
$\sigma(S)$ & Local volatility of the underlying \\
\bottomrule
\end{tabular}
\end{table}

For a European option, the pricing PDE is posed backward in time on $[0,T)$ and requires a terminal (time) boundary at maturity $t=T$.
This boundary corresponds to the payoff at maturity, which determines the value of $V$ for every state $(S,T)$.
Specifically, for strike $K$,
\begin{equation}
\label{eq:call/put-terminal}
V^{\mathrm{call}}(S,T)=\max(0,S-K), 
\qquad 
V^{\mathrm{put}}(S,T)=\max(0,K-S).
\end{equation}
Eq.~\eqref{eq:call/put-terminal} encodes the cash flows of the contract at maturity and is independent of the volatility model or interest-rate dynamics assumed between $t$ and $T$.
The terminal condition provides the time boundary needed for well-posedness of the problem together with the spatial conditions in $S$.
In classical finite-difference schemes, it is imposed directly at the last time layer, initializing the backward time-march from $t=T$ to $t=0$.

To model the Black--Scholes PDE on a finite asset domain $S\in[0,S_{\max}]$, we impose boundary
conditions that mirror the no-arbitrage limits of the option value. At the lower boundary the underlying is worthless,
so we set Dirichlet data
\begin{equation}
\label{eq:bc-S0-Dirichlet}
\lim_{S\to 0^+} V^{\mathrm{call}}(S,t)=0,
\qquad
\lim_{S\to 0^+} V^{\mathrm{put}}(S,t)=K\,e^{-r(T-t)}.
\end{equation}
As $S\to+\infty$ the call replicates one share minus a discounted bond while the put vanishes:
\begin{equation}
\label{eq:bc-infty-asymptotic}
\lim_{S\to+\infty} V^{\mathrm{call}}(S,t)=S-K\,e^{-r(T-t)},
\qquad
\lim_{S\to+\infty} V^{\mathrm{put}}(S,t)=0.
\end{equation}
On a truncated grid this is enforced at $S=S_{\max}$ either in Dirichlet form
\begin{equation}
\label{eq:bc-Smax-Dirichlet}
V^{\mathrm{call}}(S_{\max},t)=S_{\max}-K\,e^{-r(T-t)},
\qquad
V^{\mathrm{put}}(S_{\max},t)=0,
\end{equation}
or in Neumann (slope) form \cite{jeong2015accuracy,LyuParknonuniformPDE}, using the asymptotic deltas,
\begin{equation}
\label{eq:bc-Smax-Neumann}
\partial_S V^{\mathrm{call}}(S_{\max},t)=1,
\qquad
\partial_S V^{\mathrm{put}}(S_{\max},t)=0.
\end{equation}
Both choices are consistent with the continuous limits \eqref{eq:bc-infty-asymptotic}. In practice one chooses $S_{\max}$ well
above moneyness (e.g., $S_{\max}\in[4K,10K]$ depending on $T$ and $\sigma(S)$).

To solve the Black--Scholes PDE numerically, we first discretize the spatial (stock price) variable $S$ while keeping time $t$ continuous (method of lines). Let $S_k=\dfrac{S_{\max}}{N_s-1}\,k,\; k=0,1,\dots,N_s-1$ be a uniform grid and define the grid spacing
\begin{equation}
\label{eq:DeltaS-def}
\Delta S \;=\; S_{k+1}-S_k \;=\; \frac{S_{\max}}{N_s-1}.
\end{equation}
Form the state vector as an ordered list
\begin{equation}
\label{eq:state-vector-curly}
\vec V(t)\;=\;\{\,V(S_0,t),\,V(S_1,t),\,\dots,\,V(S_{N_s-1},t)\,\}^{\top},
\end{equation}
where $^{\top}$ denotes transposition (so that \eqref{eq:state-vector-curly} is a column vector). Write $V_k(t)\equiv V(S_k,t)$ for brevity. From now on we consider call options only; put prices can be computed similarly.

Next, we approximate spatial derivatives on the interior nodes $k=1,\dots,N_s-2$ with central, second-order accurate formulas:
\begin{align}
\label{eq:D1-central}
(\partial_S V)_k &\approx \frac{V_{k+1}-V_{k-1}}{2\,\Delta S}+\mathcal{O}(\Delta S^2),\\
\label{eq:D2-central}
(\partial_{SS}^2 V)_k &\approx \frac{V_{k-1}-2V_k+V_{k+1}}{(\Delta S)^2}+\mathcal{O}(\Delta S^2).
\end{align}
We pick central differences because the Black--Scholes operator is diffusion-dominated and symmetric in space, so central stencils yield low bias with second-order accuracy; other popular choices are summarized in Table~\ref{table:famous-schemes}. We also stick to second-order accuracy for simplicity, since our paper focuses on quantum methods rather than high-order spatial discretizations.

\renewcommand{\arraystretch}{1.8}
\setlength{\tabcolsep}{8pt}
\newcolumntype{Y}{>{\raggedright\arraybackslash}X}

\begin{table}[t!]
\centering
\begin{tabularx}{\textwidth}{|c|c|c|Y|}
\hline
\textbf{Type} & \textbf{Der. order} & \textbf{Acc. order} & \multicolumn{1}{c|}{\textbf{Formula at } $x_i$} \\
\hline
Backward & First & First &
$\displaystyle \frac{\partial u}{\partial x}\Big|_{x_i} \approx
\frac{u_i - u_{i-1}}{\Delta x} + \mathcal{O}(\Delta x)$
\\
\hline
Forward & First & Second &
$\displaystyle \frac{\partial u}{\partial x}\Big|_{x_i} \approx
\frac{-u_{i+2} + 4u_{i+1} - 3u_i}{2\,\Delta x}
+ \mathcal{O}((\Delta x)^2)$
\\
\hline
\textbf{Central} & \textbf{First} & \textbf{Second} &
$\displaystyle \frac{\partial u}{\partial x}\Big|_{x_i} \approx
\frac{u_{i+1} - u_{i-1}}{2\,\Delta x}
+ \mathcal{O}((\Delta x)^2)$
\\
\hline
Backward & Second & First &
$\displaystyle \frac{\partial^2 u}{\partial x^2}\Big|_{x_i} \approx
\frac{u_i - 2u_{i-1} + u_{i-2}}{(\Delta x)^2}
+ \mathcal{O}(\Delta x)$
\\
\hline
\textbf{Central} & \textbf{Second} & \textbf{Second} &
$\displaystyle \frac{\partial^2 u}{\partial x^2}\Big|_{x_i} \approx
\frac{u_{i-1} - 2u_i + u_{i+1}}{(\Delta x)^2}
+ \mathcal{O}((\Delta x)^2)$
\\
\hline
Forward & Second & Second &
$\displaystyle \frac{\partial^2 u}{\partial x^2}\Big|_{x_i} \approx
\frac{2u_i - 5u_{i+1} + 4u_{i+2} - u_{i+3}}{(\Delta x)^2}
+ \mathcal{O}((\Delta x)^2)$
\\
\hline
\end{tabularx}
\caption{Representative finite-difference stencils for first/second spatial derivatives.}
\label{table:famous-schemes}
\end{table}

Substituting \eqref{eq:D1-central}–\eqref{eq:D2-central} into \eqref{eq:BS-PDE} yields the bulk semi-discrete equations, for $k=1,\dots,N_s-1$,
\begin{align}
\label{eq:bulk-semidiscrete}
\frac{d V_k}{dt}
&= r V_k
- \frac{1}{2}\,\sigma(S_k)^2 S_k^2 \,\frac{V_{k-1}-2V_k+V_{k+1}}{(\Delta S)^2}
- r S_k \,\frac{V_{k+1}-V_{k-1}}{2\,\Delta S}.
\end{align}
The right boundary condition is enforced via a ghost point $V_{N_s}(t)$ placed one step outside the computational domain. This lets us keep a centered second-order stencil for the Neumann slope at $S=S_{\max}$:
\begin{align}
\label{eq:BC-right-centered}
\frac{V_{N_s}(t)-V_{N_s-2}(t)}{2\,\Delta S} = 1
\quad\Longrightarrow\quad
V_{N_s}(t) = V_{N_s-2}(t) + 2\,\Delta S.
\end{align}
We adopt this Neumann view because the quantum framework \cite{guseynov2025quantumframeworksimulatinglinear,guseynov2024efficientPDE} selected in this paper does not support time-dependent boundary data. On the left boundary, the call’s value vanishes and is time-independent as well, $V_0(t)=0$. Consequently, the semi-discrete ordinary differential equation (ODE) system for the bulk nodes $k=1,\dots,N_s-1$ together with the two spatial boundary conditions can be written compactly as
\begin{align}
\label{eq:BS-semidiscrete-system}
\left\{
\begin{array}{ll}
\displaystyle \frac{d V_k}{dt}
= r V_k
- \frac{1}{2}\,\sigma(S_k)^2 S_k^2 \,\frac{V_{k-1}-2V_k+V_{k+1}}{(\Delta S)^2}
- r S_k \,\frac{V_{k+1}-V_{k-1}}{2\,\Delta S},
& k=1,\dots,N_s-1, \\[10pt]
\displaystyle V_0(t)=0,
 &\text{Dirichlet at }  S=0, \\[6pt]
\displaystyle \frac{V_{N_s}(t)-V_{N_s-2}(t)}{2\,\Delta S} = 1,
& \text{Neumann at } S=S_{\max}.
\end{array}
\right.
\end{align}
Here $\Delta S=S_{k+1}-S_k=S_{\max}/(N_s-1)$. Define the interior state vector, consistent with \eqref{eq:state-vector-curly}, as
\begin{align}
\label{eq:state-vector-interior}
\vec V(t) \equiv \{V_1(t),\dots,V_{N_s-1}(t)\}^{\top},
\end{align}
excluding the fixed Dirichlet node $V_0(t)=0$ and eliminating the ghost $V_{N_s}(t)$ by the Neumann relation \eqref{eq:BC-right-centered}. Then the ODE system \eqref{eq:BS-semidiscrete-system} can be written in affine form
\begin{align}
\label{eq:semidiscrete-affine}
\frac{d\vec V}{dt} = A\vec V + \vec b,
\qquad t\in[0,T),
\end{align}
with terminal data inherited from the payoff at $t=T$,
\begin{align}
\label{eq:terminal-data-vector}
\vec V(T) = \bigl(\max(0,S_1-K),\dots,\max(0,S_{N_s-1}-K)\bigr)^{\top}.
\end{align}
The matrix $A\in\mathbb{R}^{(N_s-1)\times(N_s-1)}$ is tridiagonal (i.e. has sparsity $\mathfrak s =3$); the vector $\vec b\in\mathbb{R}^{N_s-1}$ is zero everywhere except at the last interior node $k=N_s-1$, where the ghost elimination $V_{N_s}=V_{N_s-2}+2\Delta S$ adds a constant contribution. Substituting $V_{k+1}=V_{k-1}+2\Delta S$ at $k=N_s-1$ into \eqref{eq:BS-semidiscrete-system} yields
\begin{align}
\label{eq:b-last}
b_k = 0 \quad \text{for } k< N_s-1,
\qquad
b_{N_s-1} = -\frac{\sigma(S_{N_s-1})^2 S_{N_s-1}^2}{\Delta S} - r\,S_{N_s-1}.
\end{align}
Even though the continuous Black--Scholes PDE \eqref{eq:BS-PDE} is homogeneous, the semi-discrete ODE acquires a nonzero source $\vec b$ supported at the right boundary. To enable the quantum-friendly homogeneous form of the semi-discrete dynamics, following \cite{jin2024schr}, we augment the affine system \eqref{eq:semidiscrete-affine} by a single constant scalar state. Define $h(t)\equiv 1$, enforced dynamically by $\dot h(t)=0$ with $h(T)=1$, and introduce the augmented state
\begin{align}
\label{eq:aug-state}
\vec w(t)\equiv
\begin{bmatrix}
\vec V(t)\\[2pt] h(t)
\end{bmatrix}
\in\mathbb{R}^{N_s},
\end{align}
and the block $N_s\times N_s$ matrix formed by concatenation
\begin{align}
\label{eq:S-matrix}
S\equiv
\begin{bmatrix}
A & \vec b\\
\vec 0^{\top} & 0
\end{bmatrix}
\in\mathbb{R}^{N_s\times N_s}.
\end{align}
Then the affine ODE becomes a homogeneous linear system of size increased by one:
\begin{align}
\label{eq:homog-ode}
\frac{d\,\vec w}{dt}=S\,\vec w,
\qquad
\vec w(T)=
\begin{bmatrix}
\vec V(T)\\[2pt] 1
\end{bmatrix},
\end{align}
where the terminal vector $\vec V(T)$ is the payoff sampling from \eqref{eq:terminal-data-vector}. Thus the original affine semi-discrete Black--Scholes dynamics is embedded into a homogeneous system with only one additional grid point.

\begin{figure}[h!]
    \centering

    \begin{minipage}[t]{0.48\linewidth}
        \centering
        \includegraphics[width=\linewidth]{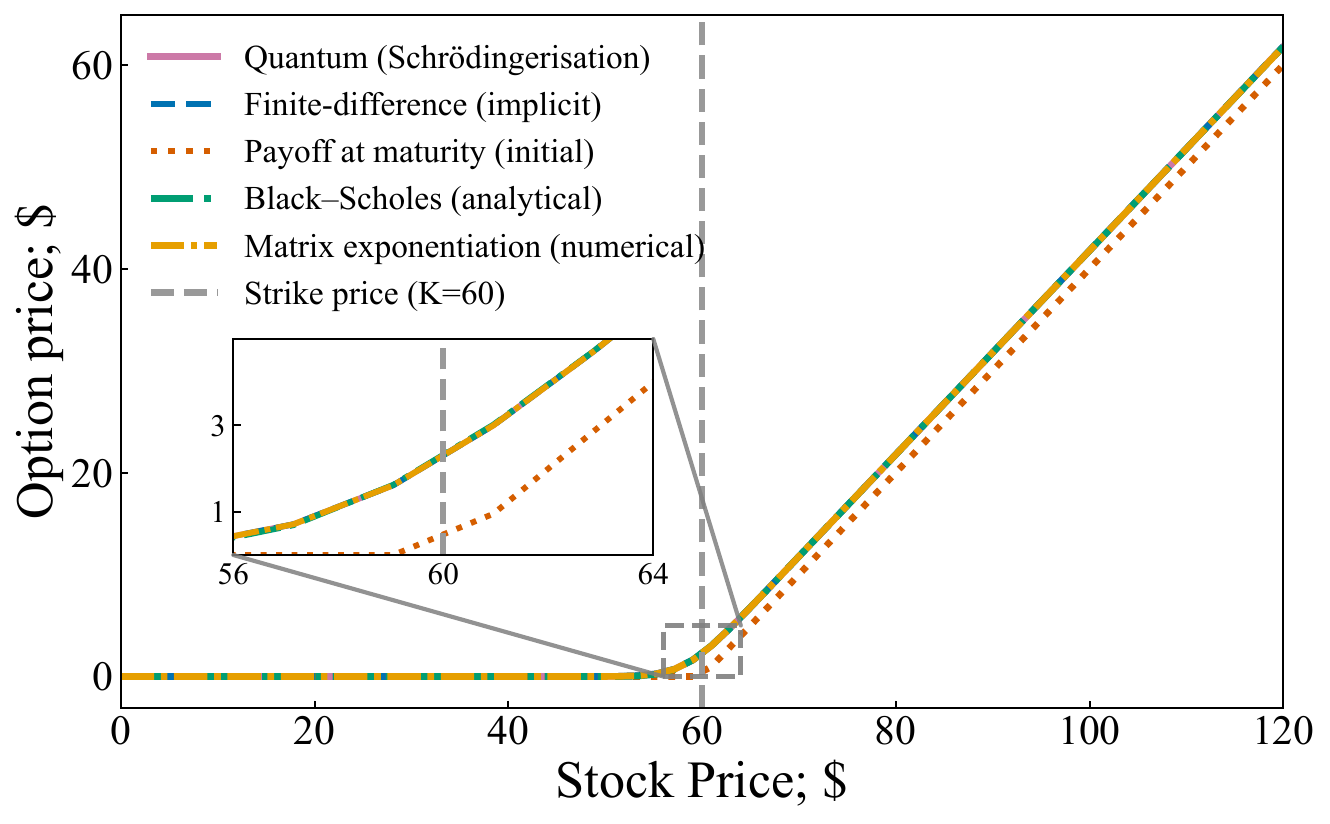}\\[2pt]
        \textbf{(A)}
    \end{minipage}\hfill
    \begin{minipage}[t]{0.48\linewidth}
        \centering
        \includegraphics[width=\linewidth]{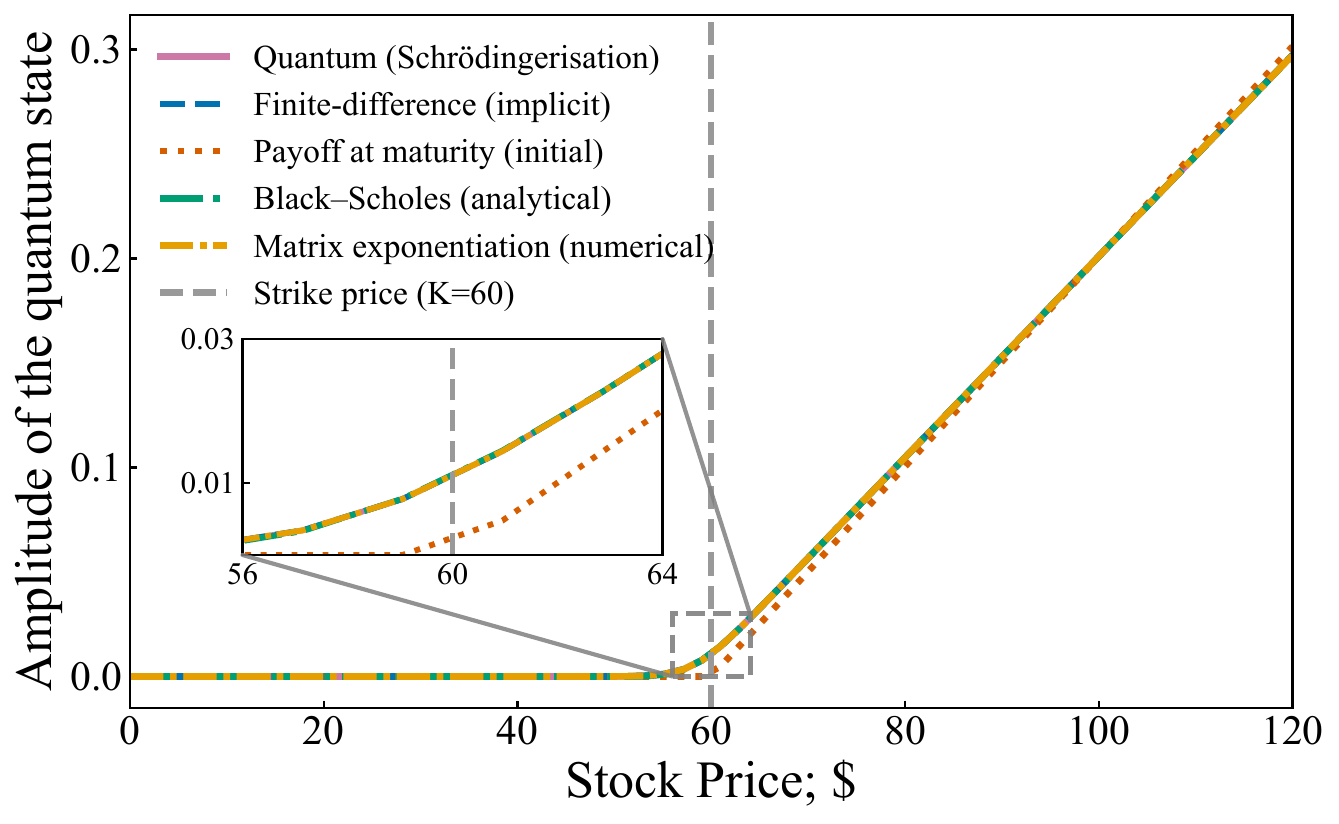}\\[2pt]
        \textbf{(B)}
    \end{minipage}

    \vspace{0.8em}

    \begin{minipage}[t]{0.48\linewidth}
        \centering
        \includegraphics[width=\linewidth]{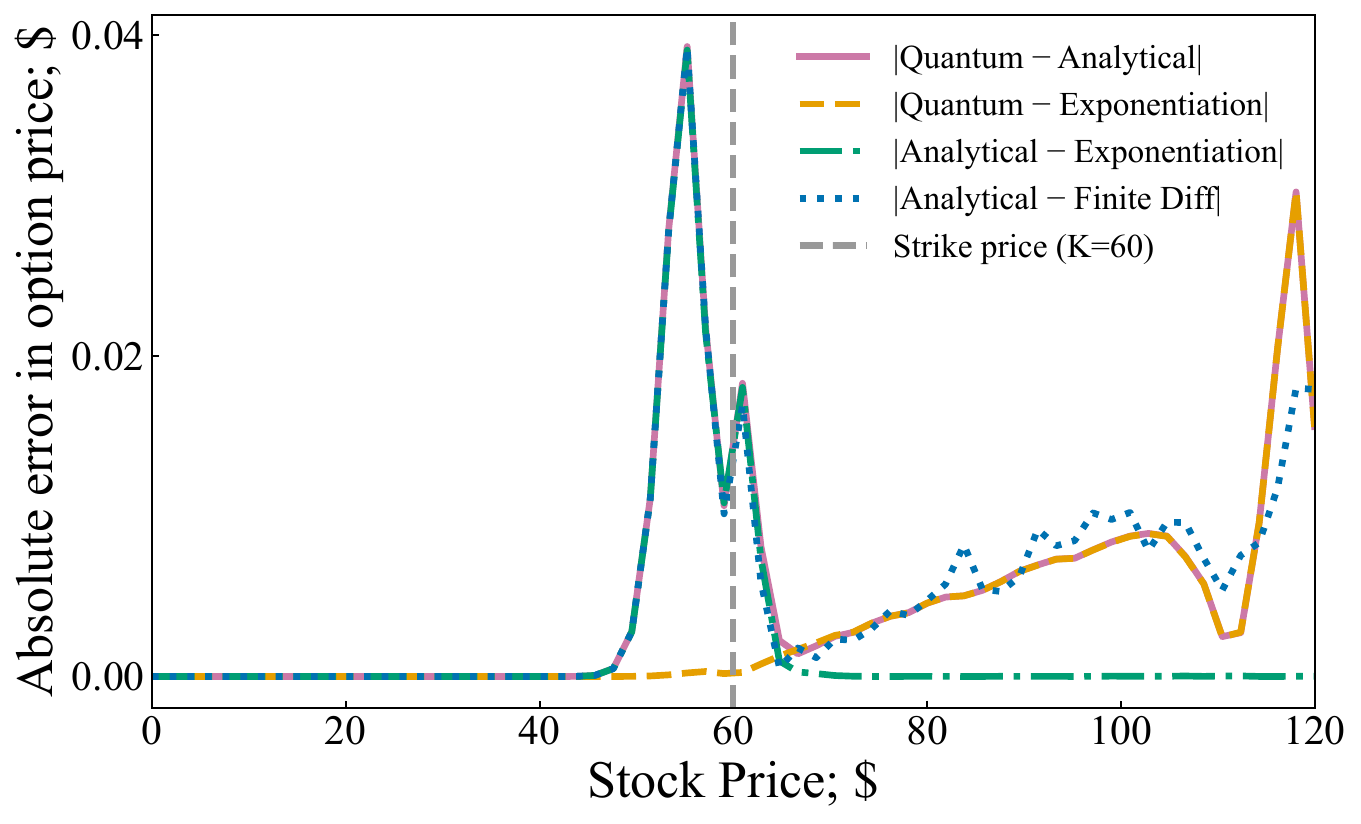}\\[2pt]
        \textbf{(C)}
    \end{minipage}

    \caption{
    Comparison of solvers for the 1D Black--Scholes PDE.
    \textbf{(A)} Solution $V(S,0)$ in dollars, together with the payoff at maturity
    $V(S,T)=\max(S-K,0)$ (shown only as an initial/terminal condition, not a PDE solution).
    \textbf{(B)} The same solution after normalization as a wave function, i.e.\ $\int |V(x)|^2\,dx=1$
    (continuous notation; discretized analog used numerically).
    \textbf{(C)} Pairwise absolute differences between solvers.
    We compare: Quantum (Schr\"odingerisation), Finite-difference (implicit),
    Matrix exponentiation (numerical), and the analytical Black--Scholes formula \cite{BlackScholes1973,Merton1973}.
    The classical baselines (matrix exponentiation and implicit finite differences) are described in Appendix~\ref{appendix:numerical methods}.
    The quantum curve corresponds to simulating only the evolution method described in Appendix~\ref{section:evoltuion non conservative};
    we do \emph{not} simulate state preparation or readout in this paper.
    All numerical parameters used in these simulations are listed in Table~\ref{tab:bs1d-params}.
    }
    \label{fig:bs-solver-comparison}
\end{figure}

\subsection{Numerical simulation}
\label{sec:bs1d-numerical-test}

In this subsection we consider numerical validation of
the framework from Appendix~\ref{section:evoltuion non conservative},
applying it to the one-dimensional Black--Scholes PDE in
Eq.~\eqref{eq:BS-PDE} with a European call terminal payoff and with constant volatility. The numerical implementation is available at \cite{guseynov_gitlab_placeholder}.

We solve the PDE backward in time on the truncated
domain $S\in[0,S_{\max}]$ and compare three numerical
approaches: a finite-difference method, a matrix-exponentiation
method, and the quantum PDE solver introduced earlier.
All classical numerical implementations are described in
Appendix~\ref{appendix:numerical methods}.

As an additional comparison, we evaluate the analytical
Black--Scholes European call price on the same grid, i.e.,
$V_{\mathrm{ref}}(S,t)=V_{\mathrm{BS}}^{\mathrm{call}}(S,t)$,
obtained from the closed-form risk-neutral valuation formula
for a lognormal underlying \cite{BlackScholes1973, Merton1973}.

\begin{equation}
\label{eqn:bs-call-analytic}
V_{\mathrm{BS}}^{\mathrm{call}}(S,t)
=
S\,\Phi(d_{1})
-
K\,e^{-r(T-t)}\Phi(d_{2}),
\quad
d_{1}=\frac{\ln(S/K)+(r+\sigma^{2}/2)(T-t)}{\sigma\sqrt{T-t}},
\quad
d_{2}=d_{1}-\sigma\sqrt{T-t}.
\end{equation}

All numerical parameters used in our test are summarized in
Table~\ref{tab:bs1d-params}, written exactly according to the
program settings, including the Schr\"odingerisation auxiliary
discretization $N_{\xi}=1024$ and its qubit count $n_{\xi}=10$.

\begin{table}[t!]
\centering
\renewcommand{\arraystretch}{1.15}
\begin{tabular}{l l l}
\hline\hline
Variable & Definition & Num.\ value \tabularnewline
\hline
$S_{\max}$ & Maximum stock price (domain truncation) & $120$ \tabularnewline
$T$ & Time to maturity & $1$ \tabularnewline
$r$ & Risk-free rate & $0.03$ \tabularnewline
$\sigma$ & Volatility & $0.05$ \tabularnewline
$K$ & Strike price & $60$ \tabularnewline
$S_{0}$ & Initial stock price & $50$ \tabularnewline
$N$ & Number of price steps (grid points in $S$) & $2^{6}$ \tabularnewline
$N_{t}$ & Number of time steps & $T\,N^{2}=4096$ \tabularnewline
$\Delta t$ & Time step size & $T/N_{t}=1/4096$ \tabularnewline
$\Delta S$ & Stock-price step size & $S_{\max}/(N-1)=120/63$ \tabularnewline
$N_{\xi}$ & Schr\"odingerisation auxiliary grid size & $1024$ \tabularnewline
$n_{\xi}$ & Qubits for auxiliary dimension ($N_{\xi}=2^{n_{\xi}}$) & $10$ \tabularnewline
\hline\hline
\end{tabular}
\caption{Numerical parameters for the one-dimensional Black--Scholes test
(as used in the implementation). The quantum parameters are clarified in Appendix~\ref{section:evoltuion non conservative}. The classical numerical methods are clarified in Appendix~\ref{appendix:numerical methods}.}
\label{tab:bs1d-params}
\end{table}

Let $V^{(m)}(S,0)$ denote the computed option value at time
$t=0$ for method
$m\in\{\mathrm{FD},\mathrm{EXP},\mathrm{Q}\}$, and let
$V_{\mathrm{ref}}(S,0)$ be the analytical reference.
We report the absolute pointwise error on the discrete grid
$\{S_{k}\}_{k=0}^{N_s-1}$.

\begin{equation}
\label{eqn:bs1d-absolute-error}
\varepsilon^{(m)}_{k}
:=
\bigl|V^{(m)}(S_{k},0)-V_{\mathrm{ref}}(S_{k},0)\bigr|,
\qquad
S_{k}=k\,\Delta S,
\qquad
k=0,1,\ldots,N_s-1.
\end{equation}

The numerical results are shown in Fig.~\ref{fig:bs-solver-comparison},
where we plot the pairwise absolute differences
$|\mathrm{Quantum}-\mathrm{Analytical}|$,
$|\mathrm{Quantum}-\mathrm{Exponentiation}|$,
$|\mathrm{Analytical}-\mathrm{Exponentiation}|$, and
$|\mathrm{Analytical}-\mathrm{Finite\ Diff}|$, together with
a vertical marker at the strike price $K=60$. The simulation uses $17$ qubits in total: (i) $6$ for main stock price register, (ii) $10$ for Schr\"odingerisation dimension (see Appendix~\ref{section:evoltuion non conservative}), (iii) $1$ for making ODE system homogeneous (see Eq.~\ref{eq:S-matrix})

\noindent
We conclude with the following observations:

\begin{enumerate}
\item
Near $S=S_{\max}$ the dominant error is induced by the
artificial Neumann boundary condition $\partial_{S}V=1$.
Nevertheless, the resulting deviation remains acceptable,
approximately four cents in our parameter regime, and
the quantum solution mimics the finite-difference error
behavior in this boundary-dominated region.
\item
Near $S\approx K$ the error increases due to the coarse
spatial grid, which poorly resolves the payoff kink.
We intentionally choose $N=2^{6}$ to demonstrate this effect.
\item
Overall, the quantum-solver errors are at a similar level
to those produced by the classical finite-difference and
exponentiation baselines under the same discretization, which validates the quantum method.
\end{enumerate}

\subsection{End-to-end complexity scaling}\label{subsec:1d BS scaling}
To highlight the dependence on the spatial grid size $N=2^{n}$ (and suppress lower-order factors including polylogarithmic factors in the target accuracy, e.g.\ $\log(1/\epsilon)$), we collect the dominant costs of the Black--Scholes pipeline for retrieving a single value $V(t=0,S_{0})$ at fixed $(K,\sigma)$. State preparation is described in Appendix~\ref{section:state prep}, the backward-in-time evolution $u(T,S)\mapsto u(0,S)$ is simulated as in Appendix~\ref{section:evoltuion non conservative}, and the measurement overhead is discussed in Appendix~\ref{section:readout}. The state-preparation and evolution costs add, whereas the readout overhead is multiplicative:
\begin{equation}
\label{eq:bs-end-to-end-scaling}
\Bigl(
\underbrace{n\log n + n_{\xi}\log n_{\xi}}_{\text{state prep.}}
+
\underbrace{2^{2n}T\bigl(Q_{\sigma}n\log n + n_{\xi}\log n_{\xi}\bigr)}_{\text{evolution}}
\Bigr)
\cdot
\underbrace{2^{n/2}}_{\text{readout}},
\end{equation}
where $Q_{\sigma}$ is the maximum degree of the polynomial approximating the local volatility $\sigma$; $n_\xi$ is the number of qubits in the auxiliary dimension controlling the introduced error (Appendix~\ref{section:evoltuion non conservative}). Note that $T$ is the simulation time.

In contrast, the classical baselines (implicit finite differences and matrix-exponential action) summarized in Appendix~\ref{appendix:numerical methods} scale as $\mathcal{O}(2^{3n})$ for this 1D discretization. Hence, in this single-point Black--Scholes setting, the quantum approach suggests a modest asymptotic improvement from $\mathcal{O}(2^{3n})$ to roughly $\mathcal{O}(2^{2.5n})$, up to the suppressed $poly(n)$ factors; we warn the reader that this advantage should be taken with care, see Remark~\ref{rem:universal-disclaimer-ft-topology}.

\FloatBarrier
\section{Multi-asset Black--Scholes PDE}
In a one-dimensional setting, Black--Scholes has an analytical formula. The probability density of the terminal stock price is log-normal, and the domain of integration for the payoff is a simple half-line (e.g., $S_T > K$). This symmetry allows the risk-neutral expectation to be solved analytically in terms of the cumulative normal distribution function, $\Phi(\cdot)$.

However, this analytical tractability rarely extends to the multidimensional case described in Eq.~\eqref{eqn:ICMultidimblack-scholes}. 

Let us state the multidimensional version, which is a linear, parabolic, second-order PDE \cite{alden2022model,guillaume2019multidimensional}. As with the one-dimensional case, the multidimensional terminal condition problem is
\begin{align}\label{eqn:ICMultidimblack-scholes}
&\frac{\partial V}{\partial t}
+ \sum_{i=1}^d r\,S_i \frac{\partial V}{\partial S_i}
+ \frac{1}{2}\sum_{i=1}^d \sigma_i^2 S_i^2 \frac{\partial^2 V}{\partial S_i^2}
+ \sum_{1 \le i < j \le d} \rho_{ij}\,\sigma_i \sigma_j\, S_i S_j \frac{\partial^2 V}{\partial S_i \partial S_j}
- r\,V \;=\; 0,\\
&V(S,T)=f(S)\nonumber
\end{align}
where \(f(S)\) denotes the payoff function. We examine some common payoff functions below. Underlying asset prices have to be positive,
$S_i \in (0,\infty)$ for $i=1,\dots,d$, $t \in [0,T)$, and
$(\rho_{ij})_{i,j=1}^d$ is a symmetric positive semidefinite correlation matrix with $\rho_{ii}=1$. 

\subsection{Typical Vanilla Payoffs}\label{sec:typical payoffs}
In the multidimensional Black--Scholes model with $d$ risky assets
$S_t = (S^1_t,\dots,S^d_t)$, the arbitrage-free price of a European derivative
with maturity $T$ and payoff $f(S_T)$ is given by the risk-neutral valuation formula
\cite{BlackScholes1973,Merton1973}
\begin{equation}
     V(t,S_t) = e^{-r(T-t)}\,\mathbb{E}^{\mathbb{Q}}\!\left[f(S_T)\,\middle|\,S_t\right],
\end{equation}

where $\mathbb{Q}$ is the risk-neutral measure. In general, this representation
does not yield a closed-form expression, except for certain special payoffs.

Below we collect several standard multivariate vanilla payoffs; see, e.g.,
\cite{BoyleNumerical,carmona2005generalizing,JohnsonMaxMinmulti}.

\begin{table}[ht]
\centering
\caption{Standard multivariate vanilla payoffs.}\label{tab:vanilla payoffs}
\renewcommand{\arraystretch}{2.2}
\begin{tabular}{@{}lll@{}}
\toprule
\textbf{Type} & \textbf{Payoff} & \textbf{Remarks} \\
\midrule
Basket Call
  & $\displaystyle\max\!\left(\sum_{i=1}^{d} w_i S^i_T - K,\, 0\right)$ 
  & $B_T = \sum_{i=1}^{d} w_i S^i_T$, with $w_i > 0$.\\
Basket Put
  & $\displaystyle\max\!\left(K - \sum_{i=1}^{d} w_i S^i_T,\, 0\right)$
  & \\
\midrule
Spread
  & $\displaystyle\max\!\left(S^1_T - S^2_T - K,\, 0\right)$
  & Two assets \\
Exchange
  & $\displaystyle\max\!\left(S^1_T - S^2_T,\, 0\right)$
  & Margrabe~(1978) \cite{Margrabe}; $K=0$ spread \\
\midrule
Best-of Call
  & $\displaystyle\max\!\left(\max_{1\le i\le d} S^i_T - K,\, 0\right)$
  & \multirow{2}{*}{\parbox{4.5cm}{Converge as $\rho_{ij}\!\uparrow\!1$ for all $i,j$}} \\
Worst-of Call
  & $\displaystyle\max\!\left(\min_{1\le i\le d} S^i_T - K,\, 0\right)$ 
  & \\
\bottomrule
\end{tabular}
\end{table}

The effective volatility for spread/exchange options under the bivariate
Black--Scholes model is
$\sigma_{12}^2 = \sigma_1^2 + \sigma_2^2 - 2\rho\,\sigma_1\sigma_2$.

While the governing Partial Differential Equation (PDE) remains linear and parabolic, fundamental obstacles prevent the derivation of a general closed-form formula for multi-asset derivatives:

\begin{enumerate}
    \item \textit{The Distributional Problem (Sums of Lognormals):} 
    While the individual assets $S_i$ follow log-normal diffusions, the weighted sum of these assets (as found in Basket options, $B_T = \sum w_i S_i$) does not. The sum of correlated log-normal variables follows no known parametric distribution. Without a closed-form probability density function for the sum, the risk-neutral expectation integral cannot be solved analytically. Therefore, the risk-neutral price
$\mathbb{E}\big[(B_T - K)_+\big]$
cannot be evaluated explicitly because the distribution of $B_T$ is not tractable.

    \item \textit{The Geometric Problem (Integration Boundaries):} 
    Even when the joint multivariate log-normal density is explicitly known, the payoff typically induces an integration region whose geometry renders analytic evaluation impossible. Several canonical examples illustrate this difficulty:
\begin{itemize}
    \item \emph{Basket options:}  
    The payoff depends on $\sum_{i=1}^d w_i S_i > K$, which corresponds to a curved hypersurface in the log-price coordinates 
    $(\log S_1, \ldots, \log S_d)$.

    \item \emph{Spread options:}  
    The exercise condition $S_1 - S_2 > K$ becomes a tilted, nonlinear boundary when expressed in log-space.

    \item \emph{Rainbow/max options:}  
    The exercise region is the union of polyhedral sectors with discontinuous boundaries.

    \item \emph{Asian multi-asset options:}  
    The situation is even more complex, involving high-dimensional, path-dependent, curved integration domains.
    \end{itemize}

\item \textit{Boundary and degeneracy problem (PDE domain truncation):}
Beyond the intractability of the valuation integral, high-dimensional PDE solvers
require boundary conditions on a truncated computational domain. This is delicate
for two distinct reasons: (i) the operator degenerates on the coordinate faces
$\{S:\ S_i=0\}$ (because the diffusion coefficients scale like $S_i^2$ and cross terms
like $S_i S_j$), and (ii) the outer boundaries $S_i=S_{i,\max}$ are artificial
truncations that must approximate the $S_i\to\infty$ asymptotics.
A common simplification is homogeneous Dirichlet data on $\{S:\ S_i=0\}$, but this is
only valid for payoffs that vanish whenever any underlying is worthless (e.g.,
minimum/worst-of structures).
For example, in a three-dimensional Black--Scholes solver for a minimum-driven
payoff, \cite{Kimkim3dBlackScholes} enforces
\begin{equation}
u(0,y,z,\tau)=0,\quad u(x,0,z,\tau)=0,\quad u(x,y,0,\tau)=0,
\end{equation}
implemented discretely as $u_{0jk}=u_{i0k}=u_{ij0}=0$, together with linear
(extrapolation/ghost-point) conditions at the upper truncation faces.

Achdou and Pironneau \cite{AchdouPironneau} instead emphasize that on $\{\,S\in\mathbb{R}_+^d:\ S_i=0\,\}$ the Black--Scholes operator becomes degenerate and the pricing problem naturally
reduces to a $(d-1)$-dimensional PDE with boundary payoff $f(0,S_{-i})$. This viewpoint avoids imposing artificial boundary values and yields the correct
boundary behavior for general payoffs (e.g., basket, spread, best-of) that do not
vanish when a single asset hits zero. Unlike the one-dimensional half-line, or the affine constraints that allow Gaussian integrals to be evaluated explicitly, these payoff boundaries are non-rectangular, non-separable, and do not align with the symmetries of the multivariate log-normal distribution. Consequently, the valuation integral 

\begin{equation}
 \int_{\mathcal{D}_+} f_{S_T}(s)\, ds,
\qquad s=(s_1,\ldots,s_d), \text{ and } \mathcal{D}_+
\;:=\;
\{\, s\in\mathbb{R}_+^d \mid f(S_T)>0 \,\}   
\end{equation}
admits no closed-form primitive.
\end{enumerate}

Thus, except for specific special cases such as the Geometric Basket option (where the product of log-normals remains log-normal) or the Exchange option (solved via Margrabe's formula), there are no exact analytical solutions for multidimensional Black--Scholes problems. To price these instruments and compute their sensitivities (Greeks) accurately, we must rely on numerical methods to solve the PDE \eqref{eq: operator form multi Black-Scholes} directly.

\begin{equation}
    \begin{gathered}
        \frac{\partial V}{\partial t} + \mathcal{L}V = 0,
\qquad
V(S,T)=f(S),
\qquad
S=(S_1,\ldots,S_d)^\top \in \mathbb{R}_{+}^{d}.\\
(\mathcal{L}V)(S,t)
:=
\sum_{i=1}^{d} r S_i \frac{\partial V}{\partial S_i}
+
\frac{1}{2} \sum_{i=1}^{d} \sigma_i^{2} S_i^{2}
\frac{\partial^{2} V}{\partial S_i^{2}}
+
\sum_{1 \le i < j \le d} \rho_{ij} \sigma_i \sigma_j
S_i S_j \frac{\partial^{2} V}{\partial S_i \partial S_j}
-
r V
    \end{gathered}\label{eq: operator form multi Black-Scholes}
\end{equation}
where $f(S)$ denotes the option payoff function,
determined by the specific derivative contract under study;
see Appendix~\ref{sec:typical payoffs} for typical payoff forms.

\subsection{Linear ODE system for multi-asset Black--Scholes PDE}
\label{sec:bs-multiasset-discretization}
We explain how the continuous operator form in Eq.~\eqref{eq: operator form multi Black-Scholes}
is reduced to the linear ODE form required by the quantum evolution framework of
Appendix~\ref{section:evoltuion non conservative}. The overall workflow closely mimics the
one-dimensional construction in Appendix~\ref{sec:discretization}; here we highlight only the
multi-dimensional features, in particular the treatment of mixed derivatives and boundary conditions.

Consider $d$ underlying assets with prices $S=(S_{1},\ldots,S_{d})$ and option value
$V(t,S)$ on a Cartesian grid. Let the uniform mesh sizes be $\Delta S_{i}$ along each
$S_{i}$-axis, and denote grid points by
$S^{(\boldsymbol{k})}=(k_{1}\Delta S_{1},\ldots,k_{d}\Delta S_{d})$ with multi-index
$\boldsymbol{k}=(k_{1},\ldots,k_{d})$. We collect the sampled values into a vector
$\vec{V}(t)\in\mathbb{R}^{N^{d}}$ (lexicographic ordering of $\boldsymbol{k}$).

We discretise each asset axis $S_i\in[0,S_{i,\max}]$ on a uniform grid with spacing
$\Delta S_i$, and write grid points as $S^{(\boldsymbol{k})}$ for a multi-index
$\boldsymbol{k}=(k_1,\ldots,k_d)$. Denote by $\boldsymbol{e}_i$ the unit multi-index in
direction $i$. Following the standard second-order central schemes (summarised in
Table~\ref{table:famous-schemes}), we approximate the drift and diffusion derivatives by
\begin{align}
\frac{\partial V}{\partial S_{i}}(t,S^{(\boldsymbol{k})})
&\approx
\frac{V\!\left(t,S^{(\boldsymbol{k}+\boldsymbol{e}_{i})}\right)
-
V\!\left(t,S^{(\boldsymbol{k}-\boldsymbol{e}_{i})}\right)}{2\Delta S_{i}},
\label{eq:fd-first-deriv-Si}
\\
\frac{\partial^{2} V}{\partial S_{i}^{2}}(t,S^{(\boldsymbol{k})})
&\approx
\frac{V\!\left(t,S^{(\boldsymbol{k}+\boldsymbol{e}_{i})}\right)
-
2V\!\left(t,S^{(\boldsymbol{k})}\right)
+
V\!\left(t,S^{(\boldsymbol{k}-\boldsymbol{e}_{i})}\right)}{\Delta S_{i}^{2}}.
\label{eq:fd-second-deriv-SiSi}
\end{align}
The multi-asset case differs from $d=1$ only through the correlation terms, which
introduce mixed derivatives $\partial^{2}V/(\partial S_i\partial S_j)$ for $i\neq j$.
Using the same central-difference philosophy, we employ the four-point stencil
\begin{equation}
\frac{\partial^{2} V}{\partial S_{i}\partial S_{j}}(t,S^{(\boldsymbol{k})})
\approx
\frac{
V\!\left(t,S^{(\boldsymbol{k}+\boldsymbol{e}_{i}+\boldsymbol{e}_{j})}\right)
-
V\!\left(t,S^{(\boldsymbol{k}+\boldsymbol{e}_{i}-\boldsymbol{e}_{j})}\right)
-
V\!\left(t,S^{(\boldsymbol{k}-\boldsymbol{e}_{i}+\boldsymbol{e}_{j})}\right)
+
V\!\left(t,S^{(\boldsymbol{k}-\boldsymbol{e}_{i}-\boldsymbol{e}_{j})}\right)
}{4\,\Delta S_{i}\Delta S_{j}},
\label{eq:fd-mixed-deriv-SiSj}
\end{equation}
which couples the four diagonal neighbours in the $(S_i,S_j)$ plane. Substituting
Eqs.~\eqref{eq:fd-first-deriv-Si}--\eqref{eq:fd-mixed-deriv-SiSj} into
Eq.~\eqref{eq: operator form multi Black-Scholes} and collecting all boundary-condition
contributions yields the semi-discrete affine system
$d\vec{V}(t)/dt = A\vec{V}(t)+\vec{b}$.

Boundary conditions (BC) are the most delicate part of the multi-asset discretization. We therefore
warn the reader that, while standard finite-difference solvers can accommodate fairly general (and
even time-dependent) boundary data, the quantum evolution framework adopted in
Appendix~\ref{section:evoltuion non conservative} is restricted to \emph{time-independent, constant-coefficient}
BC of Dirichlet/Neumann/Robin or periodic type. In particular, a Robin BC on a boundary face
$\Gamma$ takes the form
\begin{equation}
\label{eq:robin-bc}
\alpha\,V(t,S) + \beta\,\partial_{\boldsymbol{n}}V(t,S) = \gamma,
\qquad S\in \Gamma,
\end{equation}
where $\boldsymbol{n}$ is the outward normal and $\alpha,\beta,\gamma$ are constants. Throughout the remainder
we use ``BC'' for boundary conditions.

For the multi-asset Black--Scholes PDE, the correct BC depend strongly on the payoff function;
consult Appendix~\ref{sec:typical payoffs} for more details. In a finite-difference setting one must specify BC on each
truncation face $S_i=0$ and $S_i=S_{i,\max}$ for all $i=1,\ldots,d$ \cite{LeVeque2007FDM}. However,
for general payoffs it is \emph{not} always possible to express the correct boundary behavior in the
quantum-compatible form \eqref{eq:robin-bc} with constants. A particularly complicated situation occurs for basket and spread payoffs at the degenerate boundaries $S_i=0$, see Table~\ref{tab:vanilla payoffs}. The mathematically natural
treatment \cite{hirsa2024computational} is that, on the face $S_i=0$, the PDE reduces to a $(d-1)$-dimensional pricing PDE in the
remaining variables $S_{-i}:=(S_1,\ldots,S_{i-1},S_{i+1},\ldots,S_d)$, with terminal condition inherited
from the payoff restricted to that face:
\begin{equation}
\label{eq:reduced-pde-face}
\begin{aligned}
\frac{\partial V}{\partial t}
&+
\sum_{k\neq i} r S_k \frac{\partial V}{\partial S_k}
+
\frac{1}{2}\sum_{k\neq i}\sigma_k^2 S_k^2 \frac{\partial^2 V}{\partial S_k^2}
+
\sum_{\substack{k<\ell\\ k,\ell\neq i}}
\rho_{k\ell}\sigma_k\sigma_\ell S_k S_\ell
\frac{\partial^2 V}{\partial S_k\partial S_\ell}
- rV
=0,\\
&\qquad
V(S_{-i},T)=f(S_1,\ldots,0,\ldots,S_d).
\end{aligned}
\end{equation}

Implementing \eqref{eq:reduced-pde-face} means solving a hierarchy of lower-dimensional PDEs on
faces/edges/corners, which is outside the BC model supported by
Appendix~\ref{section:evoltuion non conservative}. In such cases one may adopt approximations that fit
the allowed BC class (e.g.\ imposing a simplified Neumann condition $\partial_{S_i}V=0$ at $S_i=0$),
but this introduces an additional modelling/discretization error that must be accounted for, which is out of scope of this paper.

By contrast, for payoffs that vanish whenever any underlying is worthless, the BC become much
simpler. For the Worst-of Call (see Table~\ref{tab:vanilla payoffs}), it is natural to impose
\emph{Dirichlet} conditions $V=0$ on each face $S_i=0$ \cite{Kimkim3dBlackScholes}. At the artificial upper truncation faces
$S_i=S_{i,\max}$, a common and stable choice is an asymptotic \emph{Neumann} condition
$\partial_{S_i}V=0$, which approximates saturation of the option value in the far field and is known
to control truncation artifacts in practice \cite{AchdouPironneau,duffy2006fdm,foulon2010adi, Kangrofarfield, TyskSpacetimefinitediff}. These Dirichlet/Neumann BC are
directly compatible with the quantum framework. Therefore, in the remainder of the multi-asset analysis, we restrict attention to the Worst-of Call.

With the interior stencils fixed by Eqs.~\eqref{eq:fd-first-deriv-Si}--\eqref{eq:fd-mixed-deriv-SiSj}
and with a chosen set of BC enforced by standard finite-difference mechanisms (Dirichlet
elimination, ghost points, or Robin relations), substituting the discrete derivatives into
Eq.~\eqref{eq: operator form multi Black-Scholes} produces the semi-discrete affine system 
\begin{equation}
\frac{d\vec{V}(t)}{dt}=A\,\vec{V}(t)+\vec{b}. 
\label{eq:bs-affine-ode}
\end{equation} 
Here $A\in\mathbb{R}^{N^{d}\times N^{d}}$ is sparse, with nonzero pattern determined by the local
stencils (including the mixed-derivative couplings), while $\vec{b}$ collects the contributions
arising from enforcing the BC on the truncated domain.

Finally, we convert the affine ODE~Eq.~\eqref{eq:bs-affine-ode} into a homogeneous linear system
via the standard augmentation trick (see~\cite{jin2024schr}). In our setting, it is convenient
to \emph{double} the state (add one qubit) by appending an auxiliary vector $\vec{r}(t)\in\mathbb{R}^{N^{d}}$ of
the same dimension as $\vec{b}$, which is held constant in time. Introduce a diagonal matrix
$B$ such that $B\vec{r}_0=\vec{b}$ entrywise. Define
\begin{eqnarray}
\label{eq:bs-homog-augmentation-vector}
\begin{gathered}
\frac{d}{dt}
\underbrace{
\left[
\begin{array}{c}
\vec{V}(t)\\
\vec{r}(t)
\end{array}
\right]
}_{\vec{w}(t)}
=
\underbrace{
\left[
\begin{array}{cc}
A & B\\
0 & 0
\end{array}
\right]
}_{S}
\left[
\begin{array}{c}
\vec{V}(t)\\
\vec{r}(t)
\end{array}
\right],
\qquad
\left[
\begin{array}{c}
\vec{V}(T)\\
\vec{r}(T)
\end{array}
\right]
=
\left[
\begin{array}{c}
\vec{V}_0\\
\vec{r}_0
\end{array}
\right],\\[6pt]
\vec{r}(t)=\vec{r}(0)=\vec{r}_0:=\frac{1}{\sqrt{2^{dn}}}(1,1,\ldots,1)^\top\in\mathbb{R}^{N^{d}},
\qquad
B:=\sqrt{2^{dn}}\times \mathrm{diag}\{b_0,b_1,\ldots,b_{N^{d}-1}\},\\[4pt]
\frac{d\vec{w}(t)}{dt}=S\,\vec{w}(t),\qquad \vec{w}(T)=\vec{w}_0,
\end{gathered}\label{eq: homogeneous BS after augmentation}
\end{eqnarray}
which is exactly of the homogeneous form~Eq.~\eqref{eq:MAINhomog-ode} and therefore suitable for the quantum evolution procedure of Appendix~\ref{section:evoltuion non conservative}.

\subsection{End-to-end complexity scaling}
\label{subsec:multiasset BS scaling}
We now state the end-to-end scaling for the \emph{multi-asset} Black--Scholes pipeline, in a form
consistent with the 1D summary in Appendix~\ref{subsec:1d BS scaling}. We use the Worst-of Call payoff state-preparation routine from Table~\ref{table:quantum state prep of payoffs}, with its success probability boosted to $\mathcal{O}(1)$ using the amplification technique in Appendix~\ref{app:amplification}. The backward
evolution cost follows from Appendix~\ref{section:evoltuion non conservative}, in particular
Proposition~\ref{theorem:QuantumHamiltonianSimulation}. Single-point retrieval $V(t=0,\vec S_0)$ uses the readout
procedure of Appendix~\ref{section:readout}.

Suppressing lower-order factors (e.g.\ polylogarithmic in $1/\epsilon$) and writing $N=2^{n}$,
the dominant end-to-end cost for extracting one value $V(t=0,\vec S_0)$ on an $N^{d}$ grid is
\begin{equation}
\label{eq:bs-multiasset-end-to-end-scaling}
\Bigl(
\underbrace{d^{3}n+d^{2}n\log n}_{\text{state prep.}}
+
\underbrace{dT\,2^{2n}\Bigl[dQ_{\sigma}n\log n+n_{\xi}\log n_{\xi}+d^{4} n\Bigr]}_{\text{evolution}}
\Bigr)
\cdot
\underbrace{2^{nd/2}}_{\text{readout}}.
\end{equation}
Here $Q_{\sigma}$ is the maximum polynomial degree used to approximate the local volatility,
$n_{\xi}$ is the number of qubits in the auxiliary dimension introduced by Schr\"odingerisation.

For comparison, the classical multi-asset methods described in Appendix~\ref{appendix:numerical methods}
scale as
\begin{equation}
\label{eq:bs-multiasset-classical-scaling}
\mathcal{C}_{\mathrm{classical}}=\mathcal{O}\!\left(d^{2}2^{n(d+2)}\right).
\end{equation}
Therefore, up to suppressed $poly(d,n)$ factors, the quantum pipeline scales as
$\mathcal{O}\!\left(2^{n(2+d/2)}\right)$ in its leading exponential dependence on $n$ (from evolution
and readout), versus $\mathcal{O}\!\left(2^{n(d+2)}\right)$ classically, suggesting an exponential-in-$n$
advantage in the multi-asset regime; we warn the reader that this advantage should be taken with care, see Remark~\ref{rem:universal-disclaimer-ft-topology}. We conclude that the larger the number of assets~$d$ in the contract, the more pronounced the quantum advantage becomes. However, we note that the so-called ``curse of dimensionality,'' namely the exponential growth of complexity with the dimension~$d$, is not eliminated by the proposed quantum algorithm. It is only mitigated.

\section{Heston}\label{section:Heston}
The Heston model augments the Black--Scholes setting by endogenizing the
instantaneous variance of the underlying through a mean-reverting square-root
process. Variance itself is stochastic; thus, the assumptions of standard Black--Scholes no longer apply.  Introducing stochastic variance can produce a fat-tailed terminal stock-price distribution. Consequently, this produces implied-volatility smiles and skews, while maintaining a
tractable Markovian two-factor structure. In this subsection we restrict
attention to the one-asset case.

Let $S_t$ denote the underlying price and let $v_t$ denote its instantaneous
variance. Under the risk-neutral measure $\mathbb{Q}$ the Heston dynamics are
\begin{align}
dS_t &= r S_t\,dt + \sqrt{v_t}\,S_t\, dW^{(S)}_t, \label{eq:HestonSDE_S}\\
dv_t &= \kappa(\theta - v_t)\,dt + \sigma\sqrt{v_t}\, dW^{(v)}_t, 
\label{eq:HestonSDE_v}
\end{align}
where the Brownian drivers satisfy
\[
dW^{(S)}_t\, dW^{(v)}_t = \rho\, dt,
\]
with $r$ the risk-free rate, $\kappa>0$ the mean-reversion speed of the variance,
$\theta>0$ its long-run mean, $\sigma>0$ the volatility of variance, and
$\rho\in[-1,1]$ the instantaneous correlation between price and volatility shocks.

Let $V(S,v,t)$ denote the arbitrage-free value of a European-style derivative
with maturity $T$ and payoff $V(S,v,T)=f(S)$. By risk-neutral valuation, the discounted process $e^{-rt}V(S_t,v_t,t)$ is a $\mathbb Q$-martingale under~\eqref{eq:HestonSDE_S}--\eqref{eq:HestonSDE_v},
which yields the parabolic pricing PDE
\begin{align}
\frac{\partial V}{\partial t}
&\;+\; \frac{1}{2} v S^{2}\, \frac{\partial^{2} V}{\partial S^{2}}
\;+\; \rho\, \sigma\, v S\, \frac{\partial^{2} V}{\partial S \partial v}
\;+\; \frac{1}{2}\sigma^{2} v\, \frac{\partial^{2} V}{\partial v^{2}}
\label{eq:HestonPDE}\\[2pt]
&\;+\; r S\, \frac{\partial V}{\partial S}
\;+\; \kappa(\theta - v)\, \frac{\partial V}{\partial v}
\;-\; r V
\;=\; 0, 
\qquad S>0,\; v\ge 0,\; 0\le t<T. \nonumber
\end{align}

At maturity the solution is fixed by the payoff:
\begin{equation}
V(S,v,T) = f(S).
\label{eq:HestonTerminal}
\end{equation}

The mixed derivative term $\partial^2_{S v} V$ and the degeneracy at $v=0$
distinguish the Heston operator from the Black--Scholes case. After spatial
discretization, \eqref{eq:HestonPDE} takes the homogeneous augmented ODE form
$d\vec{w}/dt=S\vec{w}$ considered in Appendix~\ref{section:evoltuion non conservative}, enabling Schr\"odingerisation and
subsequent quantum evolution.
\subsection{Implied-volatility slices in Heston: arbitrage-free regularization}
\label{subsec:ssvi-main}

To compare Heston-generated prices (classical or quantum) away from the at-the-money (ATM) strike,
we work with implied-volatility \emph{slices} at fixed maturity. In practice, quoted strikes are sparse
around ATM, and in our setting the inversion from prices to implied volatility is further corrupted by
quantum sampling / amplitude-estimation noise. A naïve interpolation in implied volatility (e.g.\ cubic
splines) can therefore introduce \emph{static} arbitrage even when the underlying model is arbitrage-free:
smoothness of $\sigma_{\mathrm{BS}}(K,T)$ does not imply convexity of the call price in $K$.

When convexity fails, $\partial_{KK}C(K,T)$ becomes negative on some region, and the
Breeden--Litzenberger identity interprets this as a negative ``density'', signalling butterfly arbitrage
\cite{BreedenLitzenberger}. Concretely, for a discounted call
\begin{equation}
C(K,T)=e^{-rT}\int_K^\infty (s-K)\,p_{S_T}(s)\,ds,
\end{equation}
differentiation in $K$ yields $\partial_{KK}C(K,T)=e^{-rT}p_{S_T}(K)$ in the classical/distributional sense.
Multiplying prices by a positive constant (e.g.\ passing to the undiscounted forward call) does not change
the sign of $\partial_{KK}C$, so the convexity/butterfly criterion is invariant under discounting. Equivalently, for the undiscounted/forward call $\widetilde C=e^{rT}C$ one has $\partial_{KK}\widetilde C(K,T)=p_{S_T}(K)$.

We therefore apply \emph{SSVI} solely as an arbitrage-preserving regularizer of the noisy, sparse implied-volatility
data used for plotting and comparison; the underlying option prices produced by the Heston PDE (classical or quantum)
are not altered beyond this parametric projection. SSVI stands for Surface Stochastic Volatility Inspired. Here we consider only a single maturity slice of the volatility surface.

We parameterize strike by forward log-moneyness $k=\log(K/F_T)$ with forward
$F_T=S_0e^{r\,T}$, and define total implied variance $w(k,T):=\sigma_{\mathrm{BS}}(k,T)^2\,T$.
At fixed maturity $T$, SSVI fits the slice $k\mapsto w(k,T)$ via \cite{Gatheral02012014}
\begin{equation}\label{eq:ssvi_slice_main}
w^{\mathrm{SSVI}}(k;\theta,\rho,\lambda)
=
\frac{\theta}{2}\left(
1+\rho\,\varphi(\theta;\lambda)\,k
+\sqrt{\big(\varphi(\theta;\lambda)\,k+\rho\big)^2+(1-\rho^2)}
\right),
\qquad |\rho|<1,
\end{equation}
with $\theta=w(0,T)$ the ATM total variance, $\rho$ controlling skew that is different from the correlation between Brownian dynamics, and $\varphi$ governing wing slopes.
The SSVI parameter $\rho_{\mathrm{SSVI}}$ or interchangeably for ease of notation $\rho$ here should not be confused with the Heston correlation parameter $\rho$. The former is a fitted smile-shape parameter, whereas the latter is the instantaneous correlation in the Heston dynamics. For notational convenience, we shall overload the meaning $\rho$, and make it explicit with the $\mathrm{SSVI}$ when necessary.

We use the Heston-like choice \cite[Example~4.1]{Gatheral02012014}
\begin{equation}\label{eq:phi_hestonlike_main}
\varphi(\theta;\lambda)
=\frac{1}{\lambda\theta}\left(1-\frac{1-e^{-\lambda\theta}}{\lambda\theta}\right),
\qquad \lambda>0.
\end{equation}
Gatheral--Jacquier provide closed-form \emph{sufficient} constraints (plus a standard tail condition)
that guarantee absence of butterfly arbitrage for the slice and enforce Lee-type wing behavior
\cite{Gatheral02012014,LeeMoment2004}.

We intentionally evaluate only a small number of strikes (three on each side of ATM) to mimic market quoting.
With this level of sparsity, and with stochastic errors in price estimates from quantum routines, the map
``prices $\rightarrow$ implied vols'' is ill-conditioned; direct interpolation can exhibit spurious arbitrage
that is purely numerical. Fig.~\ref{fig:ssvi-panels} illustrates this effect and the resulting implied densities.

\noindent\textbf{Constrained calibration.}
For each maturity we calibrate $(\theta,\rho,\lambda)$ by weighted least squares in total variance
subject to the Gatheral--Jacquier sufficient inequalities. The procedure is summarized in
Algorithm~\ref{alg:ssvi}.

\begin{algorithm}[h]
\caption{Constrained SSVI calibration with Heston-like $\varphi$}
\DontPrintSemicolon
\label{alg:ssvi}
{\color{blue}
\KwIn{Strikes $K_i$, implied vols $\sigma_{\mathrm{BS},i}$ at maturity $T$, spot $S_0$, rate $r$ (classical/quantum)}
}
\KwOut{SSVI slice parameters $(\theta^\ast,\rho_{\mathrm{SSVI}}^\ast,\lambda^\ast)$}
\BlankLine
\tcp{Step 1: Preprocessing (forward, log-moneyness, total variance)}
$F_T \gets S_0\,e^{rT}$\;
$k_i \gets \ln(K_i/F_T)$ for $i=1,\dots,n$\;
$w_i \gets \sigma_{\mathrm{BS},i}^2\,T$ \tcp*{total variance data}
\BlankLine
\tcp{Step 2: Initialize $\theta$ from the ATM total variance}
$i^\ast \gets \arg\min_i |k_i|$ \tcp*{closest-to-ATM}
$\theta_{\mathrm{init}} \gets w_{i^\ast}$ \tcp*{ATM proxy for initialization}
\BlankLine
\tcp{Step 3: Model specification (SSVI with Heston-like $\varphi$)}
Define
\begin{equation}
\varphi(\theta;\lambda)
=
\frac{1}{\lambda\theta}\!\left(1-\frac{1-e^{-\lambda\theta}}{\lambda\theta}\right),
\qquad \lambda>0.
\end{equation}
When $\lambda\theta$ is small, use a Taylor expansion for numerical stability\;
Define the SSVI total variance slice
\begin{equation}
w^{\mathrm{SSVI}}(k;\theta,\rho_{\mathrm{SSVI}},\lambda)
=
\frac{\theta}{2}\!\left(
1+\rho_{\mathrm{SSVI}}\,\varphi(\theta;\lambda)\,k
+\sqrt{\big(\varphi(\theta;\lambda)\,k+\rho_{\mathrm{SSVI}}\big)^2+(1-\rho_{\mathrm{SSVI}}^2)}
\right).
\end{equation}
\BlankLine
\tcp{Step 4: Initialization}
Choose $\theta_{\mathrm{init}}$ from Step 2, $\rho_{\mathrm{SSVI,init}}\in(-1,1)$, and $\lambda_{\mathrm{init}}>0$\;
\BlankLine
\tcp{Step 5: Constrained least-squares fit over $(\theta,\rho_{\mathrm{SSVI}},\lambda)$}
Choose weights $\omega_i>0$ (e.g.\ uniform or vega-based)\;
Solve
\begin{equation}
(\theta^\ast,\rho_{\mathrm{SSVI}}^\ast,\lambda^\ast)
\in
\arg\min_{\theta>0,\;|\rho_{\mathrm{SSVI}}|<1,\;\lambda>0}
\sum_{i=1}^n \omega_i\Big(
w^{\mathrm{SSVI}}(k_i;\theta,\rho_{\mathrm{SSVI}},\lambda)-w_i
\Big)^2
\end{equation}
subject to the sufficient no-butterfly constraints of Gatheral--Jacquier:
\begin{equation}
\theta\,\varphi(\theta;\lambda)\,(1+|\rho_{\mathrm{SSVI}}|)\le 4-\varepsilon,
\qquad
\theta\,\varphi(\theta;\lambda)^2\,(1+|\rho_{\mathrm{SSVI}}|)\le 4.
\end{equation}
\BlankLine
\Return{$(\theta^\ast,\rho_{SSVI}^\ast,\lambda^\ast)$}\;
\end{algorithm}

Once $(\theta^\ast,\rho_{\mathrm{SSVI}}^\ast,\lambda^\ast)$ are obtained, the fitted implied volatility is
$\sigma^{\mathrm{SSVI}}(k)=\sqrt{w^{\mathrm{SSVI}}(k;\theta^\ast,\rho_{\mathrm{SSVI}}^\ast,\lambda^\ast)/T}$.

\begin{figure}[h]
\centering

\begin{minipage}[t]{0.48\textwidth}
    \centering
    \includegraphics[width=\textwidth]{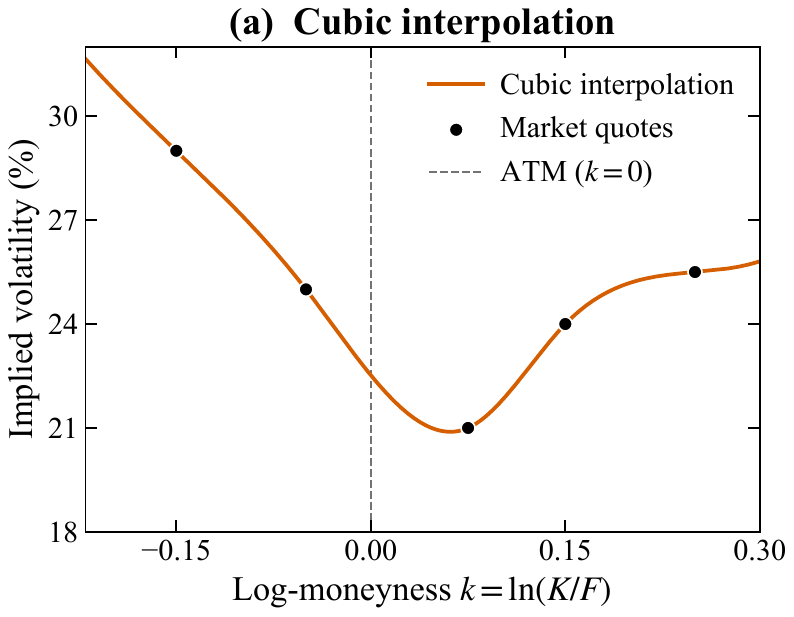}
\end{minipage}
\hfill
\begin{minipage}[t]{0.48\textwidth}
    \centering
    \includegraphics[width=\textwidth]{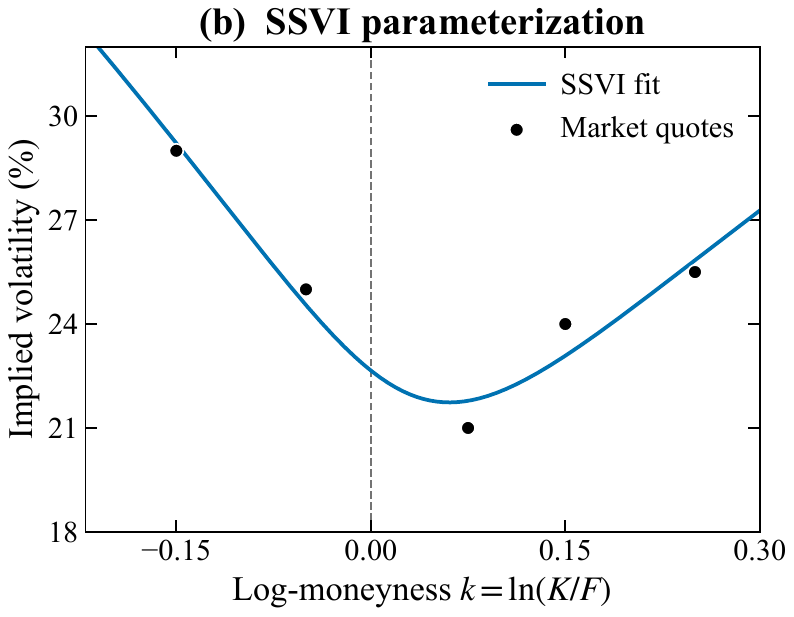}
\end{minipage}

\vspace{0.8em}

\begin{minipage}[t]{0.48\textwidth}
    \centering
    \includegraphics[width=\textwidth]{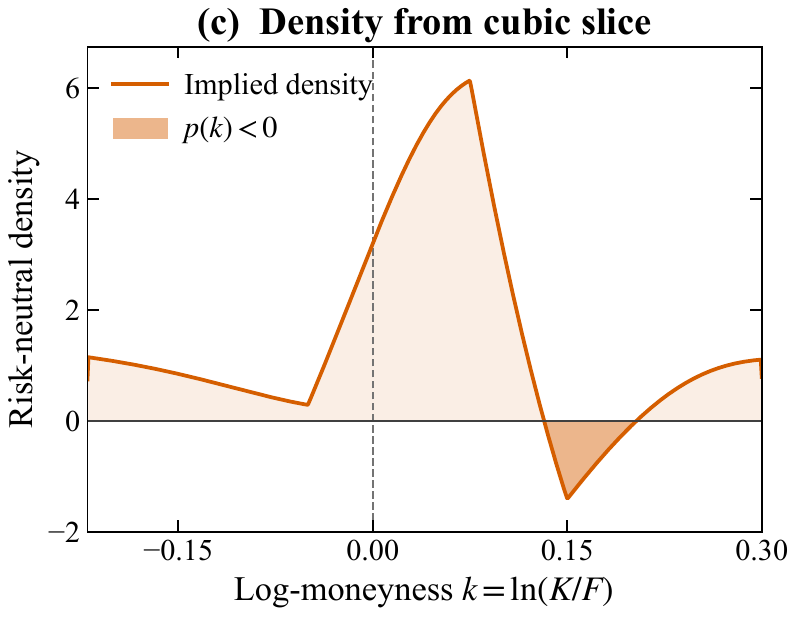}
\end{minipage}
\hfill
\begin{minipage}[t]{0.48\textwidth}
    \centering
    \includegraphics[width=\textwidth]{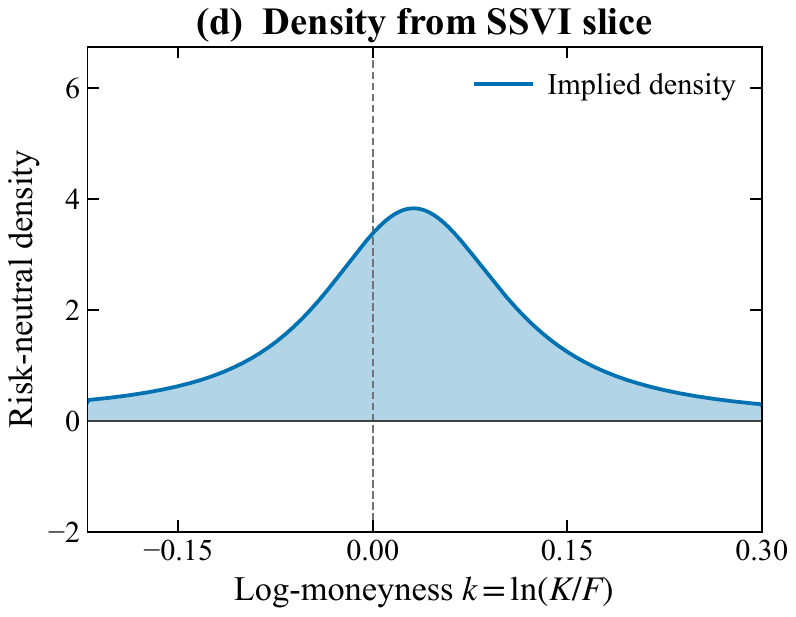}
\end{minipage}
\caption{\textbf{Arbitrage-free regularization of implied volatility slices.}
(a) Sparse quotes (red circles) with cubic interpolation of implied volatility.
(b) SSVI fit to the same quotes.
(c) Implied density from Breeden--Litzenberger applied to the cubic interpolation; negative regions indicate butterfly arbitrage.
(d) Implied density from the SSVI slice, which is non-negative under the enforced constraints.
SSVI provides an arbitrage-free benchmark for comparing classical and quantum Heston-PDE outputs.}
\label{fig:ssvi-panels}
\end{figure}
\medskip
\noindent To diagnose butterfly arbitrage we compute the implied risk-neutral density from each fitted slice.
For a $C^2$ total-variance smile $w(k)$ the density factorizes as a strictly positive prefactor times
Durrleman’s function $g(k)$; the full argument appears in Appendix~\ref{app:durrlemang} and \cite{durrleman2010implied,Gatheral02012014}.

\subsection{Linear ODE system for Heston model}
\label{sec: ODE heston 1d}
The overall discretization pipeline is the same as in
Appendix~\ref{sec:discretization} and
\ref{sec:bs-multiasset-discretization}; in particular, we employ the
standard second-order accurate centered finite-difference schemes
summarized in Table~\ref{table:famous-schemes} and
Eq.~\eqref{eq:fd-mixed-deriv-SiSj}. The terminal payoff provides the
initial condition for the backward-time evolution (or, equivalently, the
terminal condition for the PDE); we refer to
Appendix~\ref{sec:typical payoffs} for typical vanilla choices.

The only additional point that must be fixed to close the semi-discrete
system is the choice of boundary conditions. For each underlying price
coordinate~$S^i$, we impose the same boundary treatment as in
Appendix~\ref{sec:discretization}: a Dirichlet condition at the origin, $V(S^i=0,\,\cdot\,)=0,$ and a Neumann condition at the upper truncation,
\begin{equation}
 \partial_{S^i} V(S^i=S_{\max}^i,\,\cdot\,)=1.
\end{equation}

For each volatility coordinate~$v^i$, we truncate the variance interval
to $[v_{\min},v_{\max}]$, where $v_{\min}>0$ is chosen very close to~$0$
and $v_{\max}$ is taken sufficiently large relative to the region of
interest (typically $v_{\max}\approx 2$--$3$), and impose homogeneous
Neumann conditions at both artificial boundaries,
\begin{equation}
 \partial_{v^i} V(v^i=v_{\min},\,\cdot\,)=0,
\qquad
\partial_{v^i} V(v^i=v_{\max},\,\cdot\,)=0.   
\end{equation}

This is a standard numerical closure for artificial far-field boundaries
in finite-difference treatments of parabolic PDEs; for general
discussions of boundary treatments in financial PDE discretizations, see
\cite{duffy2006fdm, hirsa2024computational}. Since
$v_{\min}$ and $v_{\max}$ lie well outside the region where the option
price carries significant sensitivity to variance, the zero-flux
condition introduces negligible modelling error.

We emphasise that the Feller condition
$2\kappa_i\theta_i\geq\sigma_i^2$ is \emph{not} assumed. This condition
ensures strict positivity of each~$v^i$ at the SDE level but is
typically violated in calibrated models. At the boundary $\{v^i=0\}$,
the inward-pointing drift $\kappa_i\theta_i>0$ renders it a Fichera
boundary requiring no additional boundary data; the pricing PDE remains
well-posed without the Feller assumption. The truncation $v_{\min}>0$
sidesteps the degeneracy of the diffusion coefficient at $v^i=0$
entirely, which is convenient for the Schr\"{o}dingerisation step that
follows.

With these boundary conditions and the spatial discretization specified by Table~\ref{table:famous-schemes}, the Heston PDE reduces to an inhomogeneous linear ODE system of the form \(\dot{u}(t)=A u(t)+b\) (up to the same conventions as in Appendix~\ref{sec:discretization}). Finally, we make the system homogeneous via the standard augmentation, exactly as in Eq.~\eqref{eq: homogeneous BS after augmentation}, obtaining the homogeneous linear ODE
\begin{equation}
\frac{\mathrm{d}}{\mathrm{d}t}\,\tilde{u}(t)=\tilde{A}\,\tilde{u}(t).
\end{equation}
This form allows us to apply the quantum framework presented in Appendix~\ref{section:evoltuion non conservative}.

\begin{figure}[t!]
    \centering
    \begin{minipage}[t]{0.32\linewidth}
        \centering
        \includegraphics[width=\linewidth]{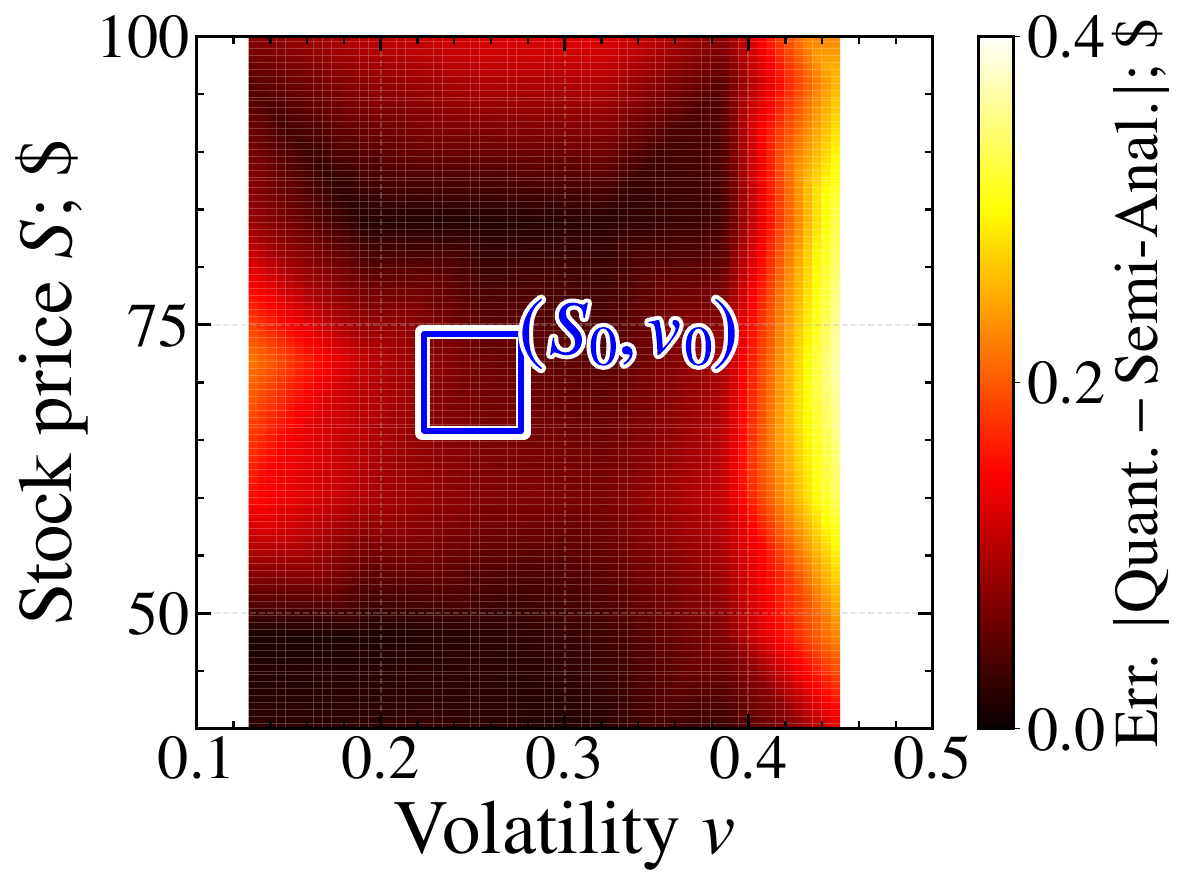}\\[2pt]
        \textbf{(A)}
    \end{minipage}\hfill
    \begin{minipage}[t]{0.32\linewidth}
        \centering
        \includegraphics[width=\linewidth]{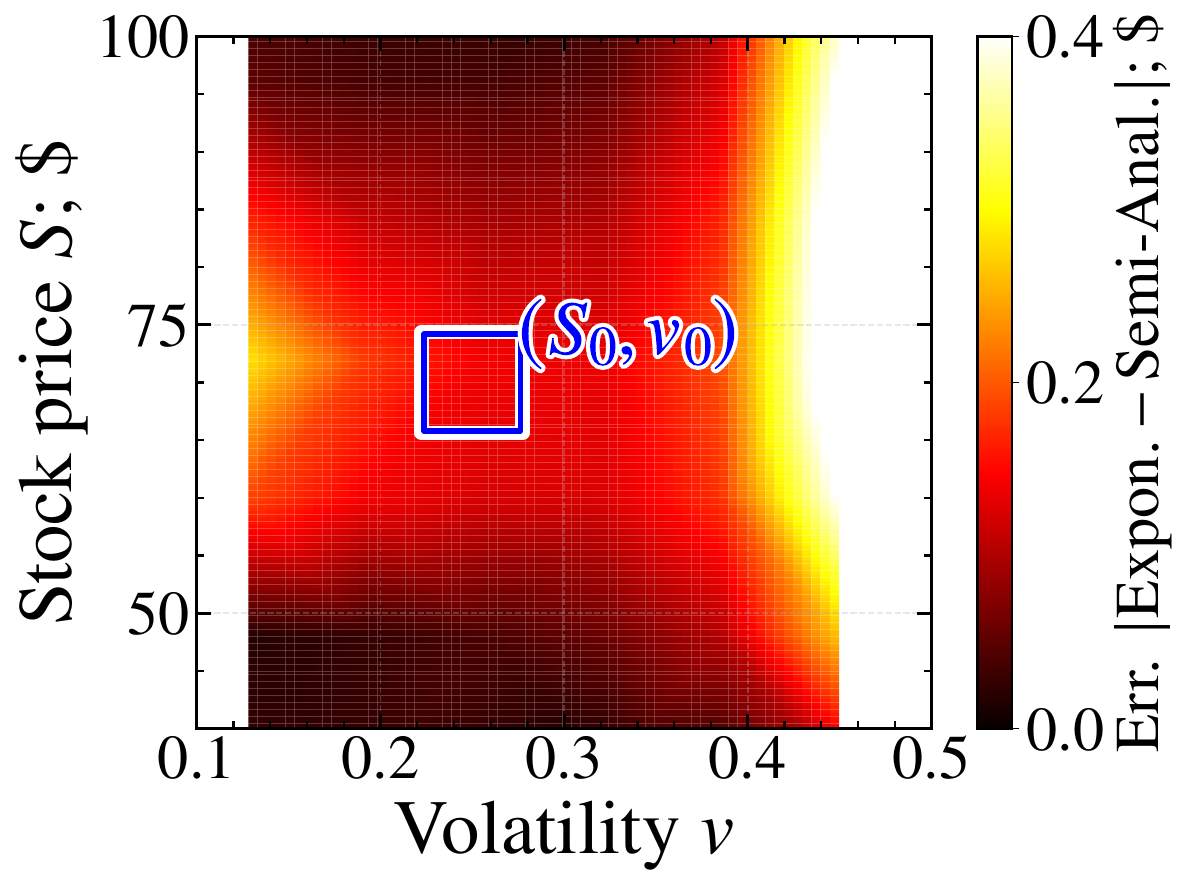}\\[2pt]
        \textbf{(B)}
    \end{minipage}\hfill
    \begin{minipage}[t]{0.32\linewidth}
        \centering
        \includegraphics[width=\linewidth]{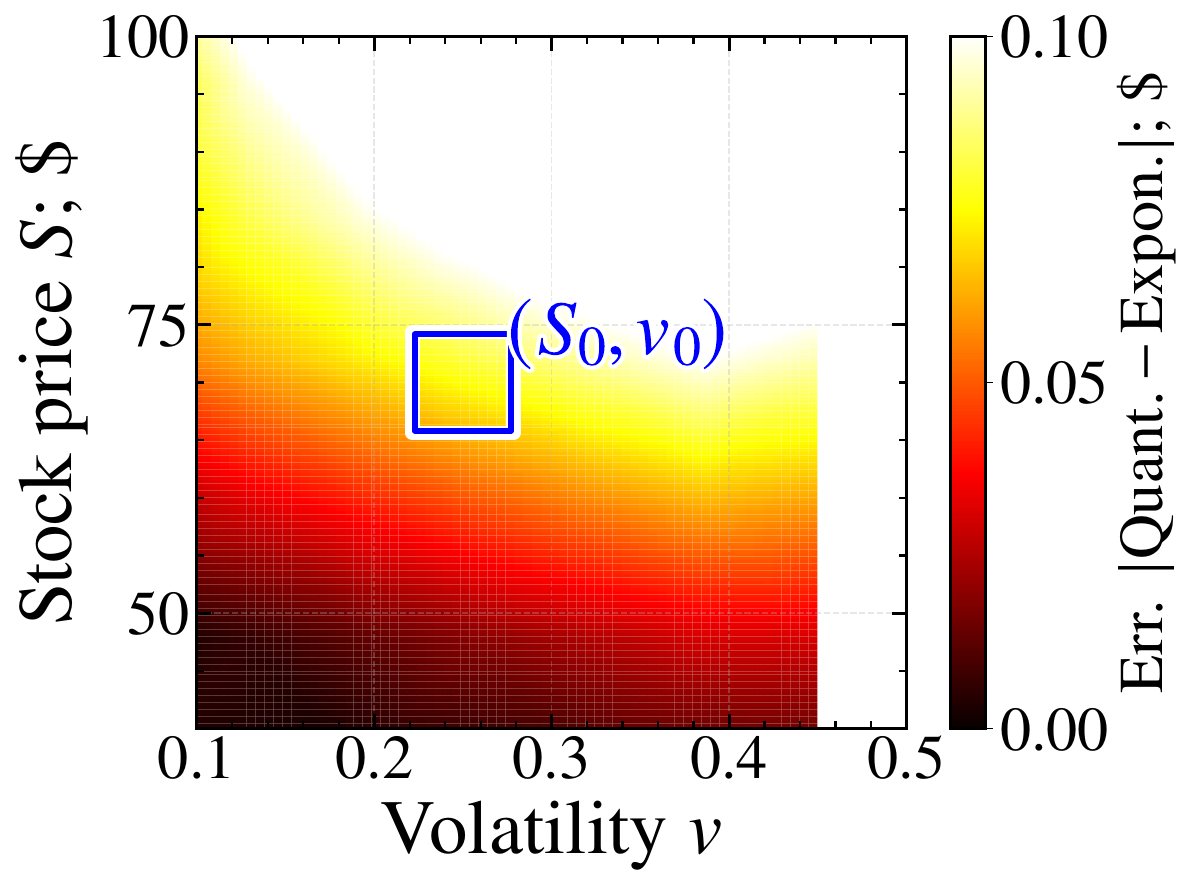}\\[2pt]
        \textbf{(C)}
    \end{minipage}

    \caption{
    \textbf{Absolute-error heat maps for option prices in the one-asset Heston pricing problem \eqref{eq:HestonPDE}.} 
    \textbf{(A)} Pointwise error $|V_{\mathrm{Q}}-V_{\mathrm{SA}}|$ between the quantum framework solution and the semi-analytical Heston benchmark. 
    \textbf{(B)} Similar pointwise error $|V_{\mathrm{EXP}}-V_{\mathrm{SA}}|$ for the matrix-exponential solver. 
    \textbf{(C)} Pointwise difference $|V_{\mathrm{Q}}-V_{\mathrm{EXP}}|$ between the quantum framework and matrix-exponential solutions. 
    In all three panels, the color bar gives the absolute option-price error (in dollars) on the truncated $(v,S)$ grid; the blue box marks the neighborhood of the region of interest $(S_0,v_0)$, the notation and parameters of the simulation are given in Table~\ref{tab:heston-shared-params}.
    }
    \label{fig:heston-pde-error-heatmaps}
\end{figure}

\begin{figure}[t!]
    \centering
    \begin{minipage}[t]{0.49\linewidth}
        \centering
        \includegraphics[width=\linewidth]{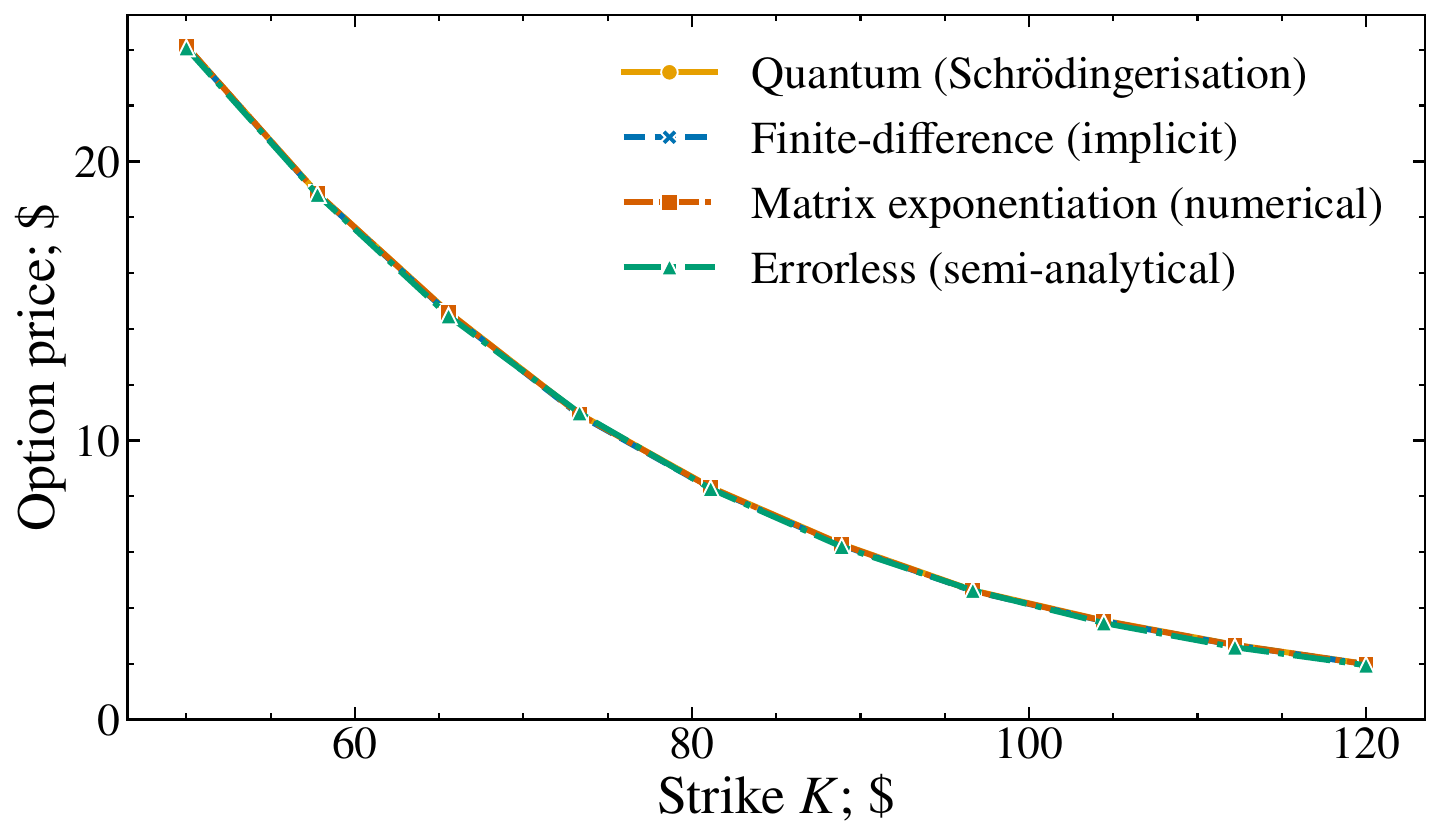}\\[2pt]
        \textbf{(A)}
    \end{minipage}\hfill
    \begin{minipage}[t]{0.49\linewidth}
        \centering
        \includegraphics[width=\linewidth]{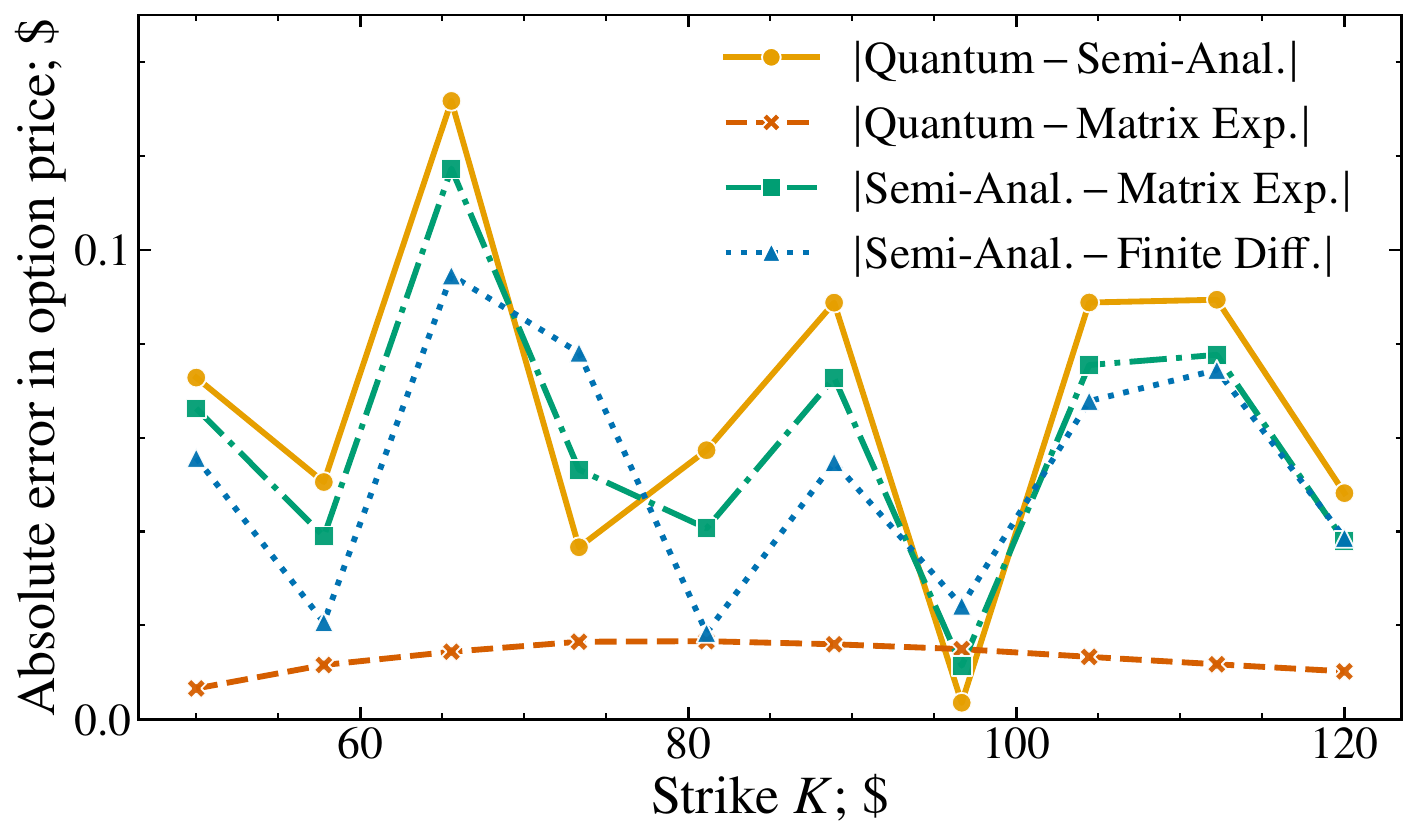}\\[2pt]
        \textbf{(B)}
    \end{minipage}
    \caption{
    \textbf{One-asset Heston option values in dollars at the query state $(S_0,v_0)$, plotted as functions of strike $K$.}
    \textbf{(A)} Prices $V(0,S_0,v_0;K)$ computed by the quantum Schr\"odingerisation workflow based on the semi-discrete ODE of Appendix~\ref{sec: ODE heston 1d} and the evolution method of Appendix~\ref{section:evoltuion non conservative}, together with the classical implicit finite-difference and matrix-exponential methods from Appendix~\ref{appendix:numerical methods}, and the semi-analytical Heston benchmark \cite{heston1993closed}. 
    \textbf{(B)} Absolute pairwise pricing differences among the methods. 
    The notation and simulation parameters are given in Table~\ref{tab:heston-shared-params}.}
    \label{fig:heston-strike-scan}
\end{figure}

\begin{table}[t!]
\centering
\renewcommand{\arraystretch}{1.15}
\begin{tabular}{l l l}
\hline\hline
Variable & Definition & Num.\ value \tabularnewline
\hline
$T$ & Time to maturity & $1$ \tabularnewline
$r$ & Risk-free interest rate & $0.03$ \tabularnewline
$\sigma$ & Volatility of volatility & $0.4$ \tabularnewline
$\rho$ & Spot--variance correlation & $-0.1$ \tabularnewline
$\kappa$ & Mean-reversion speed of variance & $2$ \tabularnewline
$\theta$ & Long-run variance mean & $0.12$ \tabularnewline
$S_{0}$ & Initial stock price & $70$ \tabularnewline
$v_{0}$ & Initial variance & $0.25$ \tabularnewline
$K_{\mathrm{surf}}$ & Strike price for option-price surface scan & $75$ \tabularnewline
$K_{\mathrm{smile}}$ & Strike range for implied-volatility smile scan & $[50,120]$ \tabularnewline
$S_{\max}$ & Maximum stock price (domain truncation) & $180$ \tabularnewline
$S_{\min}$ & Minimum stock price (domain truncation) & $0$ \tabularnewline
$v_{\min}$ & Minimum variance (domain truncation) & $0$ \tabularnewline
$v_{\max}$ & Maximum variance (domain truncation) & $0.45$ \tabularnewline
\hline\hline
\end{tabular}
\caption{\textbf{Model and domain parameters for the one-asset Heston numerical experiments.} Both the option-price surface and implied-volatility-smile programs use the same Heston parameter set. The surface experiment uses the fixed strike $K_{\mathrm{surf}}=75$, while the smile experiment scans strikes over $K_{\mathrm{smile}}\in[50,120]$.}
\label{tab:heston-shared-params}
\end{table}

\subsection{Numerical simulation}\label{subsec: Heston numerics}

This subsection demonstrates the application of the quantum framework developed in Appendices~\ref{section:evoltuion non conservative}--\ref{section:readout} to the one-asset Heston model \eqref{eq:HestonPDE}. The purpose of these numerical tests is to verify that the quantum workflow produces results consistent with classical benchmarks and semi-analytical solutions. Specifically, we compare the quantum results with classical numerical PDE solvers and the Heston characteristic-function-based benchmark \cite{Ball_Roma_1994,heston1993closed,mathworks_financial_instruments_toolbox_r2025a}. The code used for these simulations is available at \cite{guseynov_gitlab_placeholder}.

The first simulation studies the full price surface $V(0,S,v;K_{\mathrm{surf}})$ at fixed strike $K_{\mathrm{surf}}$. For the quantum computation, we discretize the stock-price variable by $N_S=2^4$, the variance variable by $N_v=2^3$, and the Schr\"odingerisation variable by $N_\xi=2^9$, and we include one augmentation qubit as in Eq.~\eqref{eq:bs-homog-augmentation-vector}; altogether this corresponds to an 18-qubit simulation. For the classical exponentiation and finite-difference solvers from Appendix~\ref{appendix:numerical methods}, we use the same $(S,v)$ discretization. The remaining Heston parameters are listed in Table~\ref{tab:heston-shared-params}. The resulting error heat maps are shown in Fig.~\ref{fig:heston-pde-error-heatmaps}. Because classical simulation of the quantum workflow becomes expensive rather quickly, we restrict the computational intervals in both $S$ and $v$ to relatively small segments. This truncation is the main reason for the larger errors near the artificial boundaries, most visibly close to $v_{\max}$. In the region of interest marked in Fig.~\ref{fig:heston-pde-error-heatmaps}, however, the agreement with the semi-analytical benchmark remains relatively good: in both \textbf{(A)} and \textbf{(B)}, the absolute error is approximately $0.1$ dollars. The key message comes from \textbf{(C)}. The dominant error is caused by the coarse discretization and domain truncation, while the additional purely quantum error introduced by Schr\"odingerisation is subleading. In principle, all these errors are caused by coarse discretization and can be reduced further by refining the discretization, although such tests are limited by the cost of classically simulating larger quantum systems.

The second simulation focuses on the single-point price $V(0,S_0,v_0;K)$ as a function of strike. This is the more realistic market setting, since in practice one prices options at the current state $(S_0,v_0)$ and then scans across strikes to build the implied-volatility smile. We use the same discretization as in the surface experiment above, namely $N_S=2^4$, $N_v=2^3$, $N_\xi=2^9$, and one augmentation qubit, for a total of $18$ qubits. The classical exponentiation and finite-difference solvers use the same $(S,v)$ discretization, and the remaining parameters are again listed in Table~\ref{tab:heston-shared-params}. The results are shown in Fig.~\ref{fig:heston-strike-scan}. Panel~\textbf{(A)} shows that all four methods are well aligned across the full strike range. The error analysis in panel~\textbf{(B)} shows that all numerical methods have the same main source of error, namely the coarse discretization, while the small residual difference between the exponentiation and quantum results is the additional error introduced by Schr\"odingerisation.

Starting from the strike scan in Fig.~\ref{fig:heston-strike-scan}, we invert each price curve to Black--Scholes implied volatilities and then regularize the resulting sparse slice by the arbitrage-free SSVI parametrization \cite{Gatheral02012014} introduced in Appendix~\ref{subsec:ssvi-main} and Appendix~\ref{app:ssvi}. This produces Fig.~\ref{fig:ssvi-three-panels}. The main message is panel \textbf{(Left)}: after SSVI fitting, the quantum, finite-difference, matrix-exponential, and semi-analytical Heston prices all generate the same non-flat smile/skew. In contrast to the flat Black--Scholes picture in Fig.~\ref{fig:volsmileskew}, the obtained Heston dynamics clearly reproduce a genuine smile structure. Panel \textbf{(Centre)} quantifies the fit discrepancy through
\begin{equation}\label{eq:heston-ssvi-relative-error}
\varepsilon_{\mathrm{rel}}^{(X,Y)}(K)
:=
\frac{\left|\sigma_{\mathrm{SSVI}}^{(X)}(K)-\sigma_{\mathrm{SSVI}}^{(Y)}(K)\right|}
{\sigma_{\mathrm{SSVI}}^{(\mathrm{SA})}(K)}\times 100\%,
\end{equation}
where $X,Y\in\{\mathrm{Q},\mathrm{EXP},\mathrm{SA}\}$. The relative errors remain below $1\%$, showing that the quantum workflow captures the market-relevant smile with high fidelity. Panel \textbf{(Right)} isolates the role of the SSVI skew parameter and shows that the fitted value $\rho_{\mathrm{SSVI}}\approx -0.12$ from Algorithm~\ref{alg:ssvi} reproduces the best skew curve for the data points. These residual discrepancies are primarily discretization effects and can be further suppressed by increasing the spatial and auxiliary resolutions, i.e.\ by adding qubits; this is comparatively natural for the target quantum device, but becomes rapidly expensive in classical full-wavefunction simulations.

\subsection{End-to-end complexity scaling}\label{subsec: Heston 1d caling}

We now collect the dominant gate complexities of the \emph{one-asset Heston} pipeline for estimating a single option value $V(0,S_{0},v_{0})$. The construction of the initial state and the smooth cut-off for Schr\"odingerisation are given in Appendix~\ref{section:state prep}. The auxiliary register size $n_{\xi}$ and the resulting non-conservative evolution are discussed in Appendix~\ref{section:evoltuion non conservative}, with the simulation cost governed by Proposition~\ref{theorem:QuantumHamiltonianSimulation}. The overhead for extracting a \emph{single} grid value from the final state is summarized in Appendix~\ref{section:readout}. Since state preparation and time evolution are performed once per run, while the pointwise readout dominates the sampling cost, the end-to-end complexity is
\begin{equation}
\label{eq:heston-1d-end-to-end-scaling}
\Bigl(
\underbrace{n\log n + n_{\xi}\log n_{\xi}}_{\text{state preparation}}
+
\underbrace{2^{2n}T\bigl(n\log n + n_{\xi}\log n_{\xi}\bigr)}_{\text{Hamiltonian simulation}}
\Bigr)
\cdot
\underbrace{2^{n}}_{\text{single-point readout}}.
\end{equation}
Here $n$ is the number of qubits per spatial axis (we take the same resolution in $S$ and $v$ for simplicity), $n_{\xi}$ is the number of qubits in the Schr\"odingerisation auxiliary register, and $T$ is the evolution time. The factor $2^{2n}$ arises from the scaling of the discretized generator (finite differences for second derivatives), i.e.\ $\|H\|_{\max}\sim 2^{2n}$.

For comparison, standard classical solvers for the same Heston PDE discretization are summarized in Appendix~\ref{appendix:numerical methods}; at the scaling level they typically exhibit leading dependence of order $\mathcal{O}(2^{4n})$ (up to tolerance- and iteration-dependent factors), while \eqref{eq:heston-1d-end-to-end-scaling} yields a leading dependence of order $\mathcal{O}(2^{3n})$ after suppressing $poly(n)$ contributions. As emphasized in Remark~\ref{rem:universal-disclaimer-ft-topology}, this comparison is intended only as a scaling-level benchmark.

\section{Multi-asset Heston PDE}
The multi-asset Heston model has received limited attention in PDE form. The more common approach is to work with the characteristic function or backward stochastic differential equation (BSDE) representation directly, bypassing the explicit PDE \cite{GnoattoDeepQuadraticHedge, GrasselliFXsmiles2013}. Our approach is complementary to these works and we work with the pricing PDE explicitly. To the best of our knowledge, working with this form has not been previously tackled.

Let $S(t)=(S^1(t),\dots,S^d(t))\in\mathbb R_+^d$ denote $d$ asset prices and
$v(t)=(v^1(t),\dots,v^d(t))\in\mathbb R_+^d$ their (idiosyncratic) variance factors.
Throughout we work under a fixed risk-neutral measure $\mathbb Q$ with constant risk-free rate $r$
and no dividends. The postulated dynamics are
\begin{align}
dS^i(t) &= r\,S^i(t)\,dt + S^i(t)\sqrt{v^i(t)}\,dW^{S,i}(t),\qquad i=1,\dots,d, \label{eq:multiHeston_S}\\
dv^i(t) &= \kappa_i(\theta_i-v^i(t))\,dt + \sigma_i\sqrt{v^i(t)}\,dW^{v,i}(t),\qquad i=1,\dots,d. \label{eq:multiHeston_v}
\end{align}
Collect the Brownian drivers into the $2d$-vector
\begin{equation}
    W(t)=\big(W^{S,1}(t),\dots,W^{S,d}(t),\;W^{v,1}(t),\dots,W^{v,d}(t)\big),
\end{equation}
with instantaneous correlation matrix $\Gamma\in\mathbb R^{2d\times 2d}$, i.e.
\begin{equation}
d\langle W^\alpha, W^\beta\rangle_t = \Gamma_{\alpha\beta}\,dt.
\label{eq:Gamma_cov}
\end{equation}
We view $\Gamma$ in block form
\begin{equation}
\Gamma=
\begin{pmatrix}
\Gamma^{SS} & \Gamma^{Sv}\\
(\Gamma^{Sv})^\top & \Gamma^{vv}
\end{pmatrix},
\qquad
\Gamma^{SS}_{ij}=\rho_{S_i,S_j},\;\;
\Gamma^{Sv}_{ij}=\rho_{S_i,v_j},\;\;
\Gamma^{vv}_{ij}=\rho_{v_i,v_j}.
\end{equation}

It is convenient to write \eqref{eq:multiHeston_S}--\eqref{eq:multiHeston_v} in vector form.
Define the $2d$-dimensional state $X(t)=(S(t),v(t))$ and the drift
\begin{equation}
b(S,v)=
\begin{pmatrix}
r\,S\\[1pt]
\kappa\odot(\theta-v)
\end{pmatrix},
\qquad
\kappa=(\kappa_1,\dots,\kappa_d),\;\theta=(\theta_1,\dots,\theta_d),
\end{equation}
where $\odot$ denotes component-wise multiplication.

\noindent
Define the $d\times d$ diagonal matrices $D_S(S,v):=\mathrm{diag}\!\big(S^1\sqrt{v^1},\dots,S^d\sqrt{v^d}\big),\qquad
D_v(v):=\mathrm{diag}\!\big(\sigma_1\sqrt{v^1},\dots,\sigma_d\sqrt{v^d}\big)$. Then the larger block diagonal matrix is set to
\begin{equation}
G(S,v):=
\begin{pmatrix}
D_S(S,v) & 0\\
0 & D_v(v)
\end{pmatrix}\in\mathbb R^{2d\times 2d}, 
\end{equation}

where $0$ denotes a $d\times d$ zero matrix.
Now the instantaneous covariance of $X$ is
\begin{equation}
a(S,v)=G(S,v)\,\Gamma\,G(S,v)^\top \in \mathbb R^{2d\times 2d}.
\label{eq:a_matrix}
\end{equation}

Let $f:\mathbb R_+^d\to\mathbb R$ be a European payoff depending only on terminal prices $S(T)$.
Risk-neutral valuation sets
\begin{equation}
  V(t,s,v)=\mathbb E^{\mathbb Q}\!\left[e^{-r(T-t)}\,f(S(T))\mid S(t)=s,\ v(t)=v\right].
\end{equation}

By the Feynman--Kac theorem (equivalently: the discounted process $e^{-rt}V(t,S(t),v(t))$ is a
$\mathbb Q$-martingale and It\^o's formula identifies its drift), $V$ satisfies the backward equation \cite{chan2024financial}
\begin{equation}
\partial_t V + \mathcal L^{\mathbb Q}V - rV = 0,
\qquad
V(T,s,v)=f(s),
\label{eq:multiHeston_generator_PDE}
\end{equation}
where the generator acts on smooth test functions $\varphi=\varphi(S,v)$ as
\begin{equation}
(\mathcal L^{\mathbb Q}\varphi)(S,v)
=
b(S,v)\cdot \nabla \varphi(S,v)
+\frac12\,\mathrm{Tr}\!\Big(a(S,v)\,\nabla^2\varphi(S,v)\Big)
\label{eq:generator_compact}
\end{equation}
or
\begin{align}
0 \;=\;& \frac{\partial V}{\partial t}
\;+\; \sum_{i=1}^{d} rS^i \frac{\partial V}{\partial S^i}
\;+\; \sum_{i=1}^{d} \kappa_i(\theta_i - v^i)\frac{\partial V}{\partial v^i}
\;-\; rV
\nonumber\\
&\;+\; \frac{1}{2}\sum_{i,j=1}^{d}\rho_{S_i,S_j}\,S^i S^j\sqrt{v^i v^j}\,
\frac{\partial^2 V}{\partial S^i \partial S^j}
\;+\; \sum_{i,j=1}^{d}\rho_{S_i,v_j}\,\sigma_j\,S^i\sqrt{v^i v^j}\,
\frac{\partial^2 V}{\partial S^i \partial v^j}
\;\nonumber \\ &+ \frac{1}{2}\sum_{i,j=1}^{d}\rho_{v_i,v_j}\,\sigma_i\sigma_j\sqrt{v^i v^j}\,
\frac{\partial^2 V}{\partial v^i \partial v^j}.
\label{eq:multiHeston_fullPDE}
\end{align}

Expanding \eqref{eq:generator_compact} reproduces the familiar componentwise multi-asset Heston PDE,
with the stock--stock, stock--variance, and variance--variance second-order couplings encoded
compactly by the blocks of $\Gamma$ in \eqref{eq:Gamma_cov} and the covariance map \eqref{eq:a_matrix}.

\textbf{Why $\Gamma$ must be positive semidefinite?}
For any $c\in\mathbb R^{2d}$, the scalar increment $c^\top(W_{t+\Delta t}-W_t)$ has variance
$\mathrm{Var}(c^\top\Delta W)=c^\top\Gamma c\,\Delta t$, which must be nonnegative.
Hence $c^\top\Gamma c\ge 0$ for all $c$, i.e.\ $\Gamma\succeq 0$.
Equivalently, $\Gamma\succeq 0$ guarantees the existence of a factorization $\Gamma=LL^\top$
(e.g.\ Cholesky), allowing one to represent correlated drivers as $dW=L\,dB$ for a standard Brownian
motion $B$.

\subsection{Linear ODE system for Multi-asset Heston model}
\label{sec: ODE heston multi}

Starting from the continuous multi-asset Heston PDE in Eq.~\eqref{eq:multiHeston_fullPDE}, we perform the same method-of-lines reduction as in Appendix~\ref{sec:bs-multiasset-discretization}. Consider $d$ underlying assets with prices $\mathbf{S}=(S_{1},\ldots,S_{d})$ and variance variables $\mathbf{v}=(v_{1},\ldots,v_{d})$, and let $V(t,\mathbf{S},\mathbf{v})$ denote the option value. We discretize each coordinate on a Cartesian grid, with mesh sizes $\Delta S_i$ and $\Delta v_i$, and collect the interior grid values into a vector $\vec V(t)\in\mathbb{R}^{N^{2d}}$ using lexicographic ordering. 

Pure first- and second-order derivatives are approximated by the centered second-order formulas in Eqs.~\eqref{eq:fd-first-deriv-Si} and \eqref{eq:fd-second-deriv-SiSi}, while all mixed derivatives are discretized by the four-point centered stencil in Eq.~\eqref{eq:fd-mixed-deriv-SiSj}. This applies to spot--spot, spot--variance, and variance--variance couplings alike. We emphasize that this discretization is not restricted to these particular second-order finite-difference schemes; higher-order alternatives may also be employed, depending on the desired accuracy and stability requirements \cite{langtangen2017finite}.

The terminal condition is given by the payoff at maturity; for example, for the Worst-of Call payoff from Table~\ref{tab:vanilla payoffs},
\begin{equation}
\label{eq:multi-heston-terminal}
V(T,\mathbf{S},\mathbf{v})
=
\max\!\left(\min_{1\leq i\leq d} S_i - K,\,0\right).
\end{equation}

As in Appendix~\ref{sec:bs-multiasset-discretization}, the main issue for applying the quantum framework is the boundary treatment. For general multi-asset payoffs, the mathematically natural boundary conditions on faces such as $S_i=0$ may require solving reduced lower-dimensional PDEs on faces, edges, and corners. Implementing such a hierarchy of auxiliary boundary PDEs lies outside the boundary-condition model supported by the quantum framework of Appendix~\ref{section:evoltuion non conservative}. One may instead impose artificial boundary conditions that fit the allowed constant-coefficient Dirichlet/Neumann/Robin class, but this introduces an additional modelling and discretization error that we do not analyze in this paper. We therefore focus on payoffs, such as the Worst-of Call, for which simple compatible conditions are available: on each face $S_i=0$ we impose Dirichlet data $V=0$, on each truncation face $S_i=S_{i,\max}$ we impose asymptotic Neumann data $\partial V/\partial S_i=0$, and for each variance coordinate we impose homogeneous Neumann conditions
\begin{equation}
\label{eq:multi-heston-v-bc}
\frac{\partial V}{\partial v_i}\Big|_{v_i=v_{i,\min}}=0,
\qquad
\frac{\partial V}{\partial v_i}\Big|_{v_i=v_{i,\max}}=0,
\qquad i=1,\ldots,d.
\end{equation}

After standard boundary elimination, the semi-discrete system takes the affine form
\begin{equation}
\label{eq:multi-heston-affine}
\frac{d\vec V(t)}{dt}
=
A\vec V(t)+\vec b,
\qquad
\vec V(T)=\vec V_T,
\end{equation}
where $\vec V_T$ is the sampled terminal payoff and $\vec b$ collects the constant contributions generated by the boundary closures. Finally, exactly as in Eq.~\eqref{eq: homogeneous BS after augmentation}, we convert Eq.~\eqref{eq:multi-heston-affine} into a homogeneous linear system,
\begin{equation}
\label{eq:multi-heston-homogeneous}
\frac{d\vec w(t)}{dt}
=
S\vec w(t),
\end{equation}
which is the linear ODE form required by the quantum evolution procedure of Appendix~\ref{section:evoltuion non conservative}.

\subsection{End-to-end complexity scaling}\label{subsec: Heston multi}

We summarize the dominant end-to-end cost for the multi-asset Heston model \eqref{eq:multiHeston_fullPDE}. As in Appendix~\ref{subsec:multiasset BS scaling}, the total cost is the sum of state preparation and backward evolution, multiplied by the single-point readout overhead. For this estimate, we use the Worst-of Call payoff in Eq.~\eqref{eq:multi-heston-terminal}. The main difference from the multi-asset Black--Scholes case is the replacement $d\to 2d$ in the evolution and readout costs, since the Heston. The payoff does not depend on the variance variables, so the payoff-state preparation cost remains the same and uses the amplified routine from Table~\ref{table:quantum state prep of payoffs} and Appendix~\ref{app:amplification}.

Suppressing lower-order factors, including polylogarithmic in $1/\epsilon$, and writing $N=2^{n}$, the dominant cost for extracting one value $V(0,\vec S_{0},\vec v_{0})$ is
\begin{equation}
\label{eq:heston-multi-end-to-end-scaling}
\Bigl(
\underbrace{d^{3}n+d^{2}n\log n}_{\text{state prep.}}
+
\underbrace{d\,T\,2^{2n}\Bigl[d\,n\log n+n_{\xi}\log n_{\xi}+d^{4} n\Bigr]}_{\text{evolution}}
\Bigr)
\cdot
\underbrace{2^{nd}}_{\text{readout}}.
\end{equation}
The evolution term follows from Appendix~\ref{section:evoltuion non conservative} and Proposition~\ref{theorem:QuantumHamiltonianSimulation} with $d\to 2d$, while the readout factor $2^{nd}$ follows from Appendix~\ref{subsec:readout_complexity}, since the Heston state occupies $W=2dn$ system qubits.

For comparison, the classical baselines of Appendix~\ref{appendix:numerical methods} applied to the $2d$-dimensional Heston discretization scale as
\begin{equation}
\label{eq:heston-multi-classical-scaling}
\mathcal{C}_{\mathrm{classical}}
=
\mathcal{O}\!\left(d^{2}2^{n(2d+2)}\right).
\end{equation}
Therefore, up to suppressed $poly(d,n)$ factors, the quantum pipeline has leading exponential dependence $\mathcal{O}\!\left(2^{n(d+2)}\right)$, compared with $\mathcal{O}\!\left(2^{n(2d+2)}\right)$ classically. Thus, we conclude that the quantum workflow is a promising approach for multidimensional option-pricing models: (i) even in the one-dimensional case $d=1$, it already suggests a polynomial advantage in the grid size $2^{n}$, and (ii) as the number of assets in the model \eqref{eq:multiHeston_fullPDE} increases, the relative advantage of the quantum framework becomes more pronounced. We note that this comparison should be interpreted only at the logical-scaling level; see Remark~\ref{rem:universal-disclaimer-ft-topology}.

\FloatBarrier
\section{Evolution of non-conservative ODE system using quantum computing}\label{section:evoltuion non conservative}

The quantum simulation evolves the option price backward in time, $\vec V(T)\rightarrow \vec V(0)$.  Here we consider only the evolution stage (see Fig.~\ref{fig:quantum_scheme_without_postselection}), deferring the state preparation and the readout for the next sections. After spatial discretization of PDEs considered in this paper, we always have a form
\begin{align}
\label{eq:MAINhomog-ode}
\frac{d\,\vec w}{dt}=S\,\vec w,
\qquad
\vec w(T)=\vec w_0,
\end{align}
where the matrix $S$ has sparsity $\mathfrak{s}$ and size $2^{W+1}\times2^{W+1}$, where $W$ is the number of system qubits and the additional qubit is used for the homogeneous augmentation.
 The first problem is that, as is well known, quantum computers implement unitary dynamics, meaning
the norm of the wave function is preserved. In its current form \eqref{eq:MAINhomog-ode} does not correspond to unitary dynamics
unless the matrix $S$ is anti-Hermitian, i.e., $S=iH$ with $H=H^\dagger$.

One of the most promising ways to overcome this problem is
to use Schr\"odingerisation \cite{PRL2024,PRA2023,analog,cao2023quantum,PRS2024,jin2024quantumsimulationfokkerplanckequation,JIN2025114138,jin2025quantumpreconditioningmethodlinear,doi:10.1137/23M1563451,JinLiuMa2024MaxwellSchrodingerisation,jin2025schrodingerizationbasedquantumalgorithms}. The idea is to enlarge
the dimension of the problem (Hilbert space) so that part
of the system can follow the dynamics of the ODE system
\eqref{eq:MAINhomog-ode}. The schematic of Schr\"odingerisation is as follows.
We first define
\begin{equation}
\label{eq:S1,S2}
S_1=\tfrac{1}{2}(S+S^{\dagger}),\qquad
S_2=\tfrac{1}{2i}(S-S^{\dagger}).
\end{equation}
Using these, we rewrite the ODE system as
\begin{equation}
    \frac{d\,\vec w}{dt}=S_1\vec w+iS_2\vec w.
    \label{eq:ode-herm-antiherm}
\end{equation}

The problematic part in Eq.~\eqref{eq:ode-herm-antiherm} is the absence
of the imaginary unit $i$ before $S_1$. To address this,
we introduce an auxiliary continuous variable $\xi$ and, using
the new variable $\vec \phi=\vec w e^{-\xi}$, we obtain
\begin{equation}
\begin{gathered}
    \frac{d\,\vec \phi}{dt}=S_1\vec w e^{-\xi}+iS_2\vec w e^{-\xi}=\\
    =-\frac{\partial}{\partial\xi}\!\big(S_1\vec w  e^{-\xi}\big)+iS_2\vec w e^{-\xi}
    =-\frac{\partial}{\partial\xi}\!\big(S_1\vec \phi\big)+iS_2\vec \phi,
\end{gathered}
\label{eq:schrodingerisation-transform}
\end{equation}
where the chosen $\xi$-dependence yields a derivative operator. We
consider the region $\xi>0$ as the aforementioned subsystem. The
retrieval process for Eq.~\eqref{eq:schrodingerisation-transform} is chosen as
\begin{equation}
    \vec{w}(T)=\int_{\lambda^{\min}_{S_1} T}^{+\infty}\vec\phi(T,\xi)\,d\xi,
    \label{eq:postselection Schrodingerisation}
\end{equation}
where $\lambda^{\min}_{S_1}$ is the absolute value of the smallest
negative eigenvalue of $S_1$; if there are no negative
eigenvalues, we use the default value $0$. This particular choice
of retrieval is explained in \cite{PRA2023}. Briefly, the form of
the differential system \eqref{eq:schrodingerisation-transform} in the $\xi$
dimension has an advection form; by considering the system in
the eigenbasis of $S$, one obtains
\begin{equation}
    \frac{d\,\vec{\tilde{\phi}}}{dt}=-\frac{\partial}{\partial\xi}D^{(S)}_1\vec{\tilde{\phi}}+iD^{(S)}_2\vec{\tilde{\phi}},
    \label{eq:advection-form}
\end{equation}
where $D^{(S)}_i$ are diagonal matrices. The analytical solution
of the advection form is a shift of the initial state
by $-\lambda_{S_1}T$. Consequently, the ``junk'' part from the region
$\xi<0$ can penetrate the target region $\xi>0$ by at most
$\lambda^{\min}_{S_1}T$ during the evolution.

\begin{remark}
    One can notice that the required postselection \eqref{eq:postselection Schrodingerisation} requires knowledge of $\lambda_{S_1}^{\min}$ which generally requires $\mathcal{O}(N_s^3)$ operations. However, we stress that the method still works if one uses a value greater than $\lambda_{S_1}^{\min}$. Thus, we only need to know the upper bound of $\lambda_{S_1}^{\min}$ which can be loose. 
\end{remark}

Another important aspect is choosing the initial condition
$\Phi(t=T,\xi)=\Phi_0(\xi)$ for the introduced dimension $\xi$. As
mentioned above, for the region $\xi>0$ we use $e^{-\xi}$; however,
we still have the opportunity to select the initial data in
$\xi<0$ arbitrarily. Recent advances \cite{schrodingerisation_optimal_queries}
show that one effective choice is a cut-off initial profile,
as illustrated in Fig.~\ref{fig:cutoff_theta}. Further, we discretize
the $\xi$ dimension similarly to the main stock-price dimension and
apply the Fourier transform. To achieve the optimal error
dependence $\mathcal{O}(\log(1/\epsilon_{\text{Schr}}))$, the initial function
$\Phi(t=T,\xi)$ should belong to the class $C^{\infty}(\mathbb{R})$,
meaning derivatives of every order exist and are continuous.

\begin{figure}[h!]
    \centering
    \includegraphics[width=0.43\textwidth]{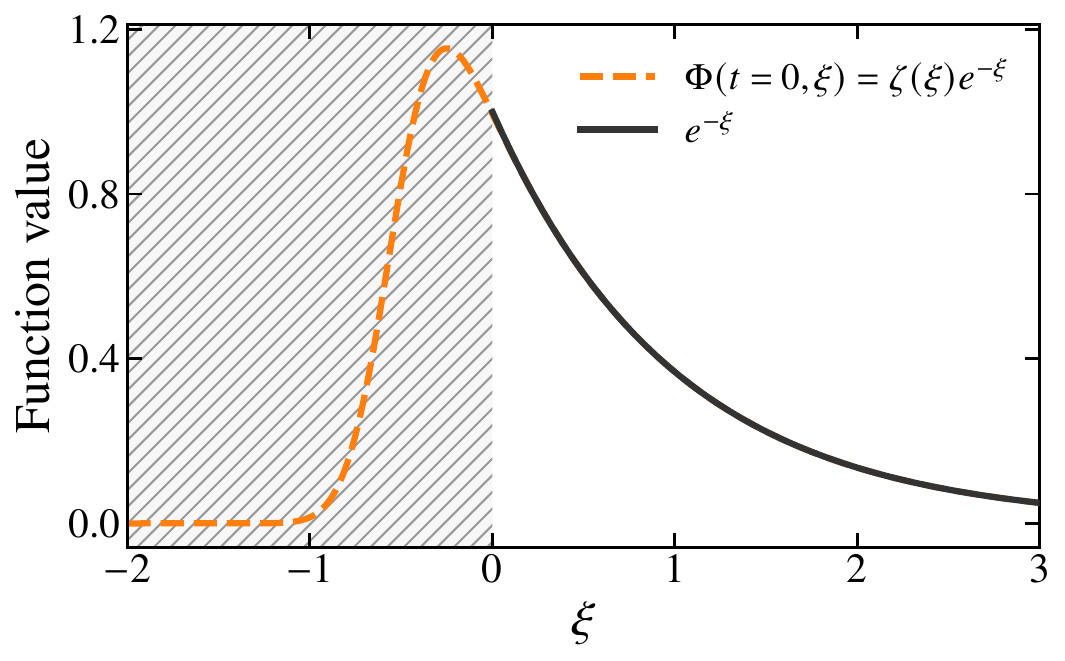}
    \caption{\textbf{Initial state $\Phi(t=T,\xi)$ as a smooth cut-off function.}
    The smooth initialization ensures optimal query complexity
    $\mathcal{O}(\log(1/\epsilon_{\text{Schr}}))$ for Schr\"odingerisation-based quantum
    simulation. The introduced smooth cut-off function coincides with
    $e^{-\xi}$ in the target region $\xi>0$.}
    \label{fig:cutoff_theta}
\end{figure}

To construct a smooth initialization for the Schr\"odingerisation register, we
take an exponentially decaying target and mask it with a $C^\infty$ cut-off.
Specifically, we set
\begin{equation}
\Phi(t=T,\xi)=\Phi_0(\xi):=\zeta(\xi)\,e^{-\xi},
\label{eq:initial schrodingerisation}
\end{equation}
where the window $\zeta$ is defined using the error function:
\begin{equation}
\zeta(\kappa)=\frac{C_\zeta}{2}+\frac{C_\zeta}{\pi}\int_0^{g(\kappa)}e^{-\vartheta^2}d\vartheta;\qquad g(\kappa)=(2\kappa+1)\ln^{1/2}\frac{1}{\epsilon_\text{Schr}}
\end{equation}
with $C_\zeta$ fixing the normalization. The error $\epsilon_\text{Schr}$ is the maximum absolute difference between $e^{-\xi}$ and $\Phi_0(\xi)$ for $\xi>0$. This construction yields a profile $\Phi_0$ that is infinitely differentiable for $\xi<0$,  whose Fourier coefficients decay rapidly, which is precisely the regularity required for Schr\"odingerisation to attain the optimal
query-complexity scaling $\mathcal{O}(\log(1/\epsilon_{\text{Schr}}))$.

Now, we finally discretize the Schr\"odingerisation dimension $\xi$ on the interval $\xi\in[-L_\xi,L_\xi-\Delta_\xi]$ using $N_\xi$ grid points $\Delta_\xi=\frac{2L_\xi}{N_\xi}$, which immediately raises a question regarding the boundary conditions imposed. The natural boundaries $\lim_{\xi\rightarrow \pm \infty}\Phi(t,\xi)=0$ are commonly turned to Dirichlet boundaries $\Phi(t,-L_\xi)=\Phi(t,L_\xi-\Delta)=0$ during discretization. However, in the Schr\"odingerisation framework, periodic boundaries are commonly used \cite{PRS2024}. The reason is the absence of dynamics near the boundaries if we select $-L_\xi,L_\xi$ large enough, which is clearly seen from Fig.~\ref{fig:cutoff_theta}. This allows us to impose periodic boundary
\begin{equation}
    \Phi(-L_\xi,t)=\Phi(L_\xi,t)
    \label{eq:periodic boundaries}
\end{equation}
which is the easiest to implement using the quantum framework \cite{guseynov2025quantumframeworksimulatinglinear}.

The benefits of introducing the new dimension $\xi$ are immediately seen if we switch \eqref{eq:schrodingerisation-transform} to the Fourier basis $\xi \rightarrow\eta$; with $\mathcal{F}_{\xi \rightarrow\eta}(\vec\phi)=\vec\psi$
\begin{equation}
    \frac{d\,\vec \psi}{dt}=-i\eta S_1\vec \psi+iS_2\vec \psi,
    \label{eq: ODE with coninious eta}
\end{equation}
where $\xi,\eta$ are still continuous. We also use the standard Fourier-transform identity $\mathcal{F}_{\xi \rightarrow\eta}(\partial/\partial\xi )=i\eta$. Now, we discretize Eq.~\eqref{eq: ODE with coninious eta}. The coordinate operator $\eta$ becomes
\begin{equation}
\begin{aligned}
    &X_\eta=diag(a_\eta+\frac{\Delta_\eta N_\xi}{2},a_\eta+\Delta_\eta[\frac{N_\xi}{2}+1],\dots,b_\eta-\Delta_\eta,a_\eta,a_\eta+\Delta_\eta,\dots,a_\eta+\Delta_\eta[\frac{N_\xi}{2}-1]);\\
    &a_\eta=-\frac{\pi  N_\xi }{ 2L_\xi};\qquad b_\eta=\frac{\pi N_\xi }{ 2L_\xi};\qquad \Delta_\eta=\frac{b_\eta-a_\eta}{N_\xi},
    \end{aligned}
    \label{eq:coordinate operator in eta}
\end{equation}
where we assumed that $N_\xi$ is an even number. One might expect that the coordinate operator would have monotonically increasing diagonal, $diag(a_\eta,a_\eta+\Delta_\eta,\dots,b_\eta-\Delta_\eta)$; however, as shown in Appendix~\ref{appendix:fourier to discrete fourier} the use of the discrete Fourier transform (DFT) instead of natural continuous Fourier transform causes these $N_\xi/2$ cyclic permutations.

Now, collecting everything together we get the final ODE system
\begin{equation}
\left\{
\begin{gathered}
\frac{d\,\vec \psi}{dt}
= i\big[-X_\eta\otimes S_1 + I_\eta\otimes S_2\big]\vec \psi
= iH\vec\psi,\\
\vec{\psi}(t=T)=
\vec \Psi_0\otimes\!\vec w_0,
\end{gathered}
\right.
\label{eq:ODE system SChrodingerised}
\end{equation}
where $\vec\Psi_0$ is obtained by applying discrete Fourier on $\vec \Phi_0$. From now on the $\vec \psi$ is fully discrete. We immediately see that the resulting equation is a Schr\"odinger equation with Hamiltonian $H$.

We now work in the discrete qubit Hilbert space. We introduce $n_\xi$ qubits (i.e. $2^{n_\xi}$ grid points) for the auxiliary Schr\"odingerisation grid $\xi_k$ or $\eta_k$, where $2^{W+1}=N_s$, and $2^{n_\xi}=N_\xi$. The resulting Hilbert space $\mathcal{H}_2^{\otimes(W+n_\xi+1)}$ is described by the normalized vector of length $N_sN_\xi$ (i.e. wave function):
\begin{equation}
    \ket{\psi}=\frac{\vec\psi}{\|\vec\psi\|_2};\qquad\|\vec\psi\|_2=\sqrt{\sum_{i=0}^{N_sN_\xi-1}\psi_i\psi_i^*},
    \label{eq:wave function definition}
\end{equation}
where $*$ is complex conjugation. We suppose that the norm $\mathcal{N}_\psi:=\|\vec\psi(t=0)\|_2$ is analytically computable from the representation in Eq.~\eqref{eq:ODE system SChrodingerised}; this will be important during the readout part.

Thus, after the mapping to the $W+n_\xi+1$-qubit space the Schr\"odinger equation \eqref{eq:ODE system SChrodingerised} becomes
\begin{equation}
\left\{
\begin{gathered}
\frac{d\,|\psi\rangle}{dt}
= i\big[-X_\eta\otimes S_1 + I_\eta\otimes S_2\big]| \psi\rangle
= iH| \psi\rangle,\\
\ket{\psi(T)}=\frac{1}{\mathcal{N}_\psi}
\vec \Psi_0\otimes\!\vec w_0,
\end{gathered}
\right.
\label{eq:final ODE system SChrodingerised}
\end{equation}
where $H$ is a $2^{W+n_\xi+1} \times 2^{W+n_\xi+1}$ Hermitian matrix. Hence, the dynamics of \eqref{eq:final ODE system SChrodingerised} is described by action of the evolution operator $e^{-iHT}$ 
\begin{equation}
    \ket{\psi(t=0)}=e^{-iHT}\ket{\psi(t=T)}=U_H(T)\ket{\psi(t=T)}.
    \label{eq:evolution operator definitio}
\end{equation}
From $\ket{\psi(0)}$, one can retrieve the normalized vector of option prices $\vec V(0)$ in two steps: (i) applying the inverse quantum Fourier transform (block 5 from Fig.~\ref{fig:quantum_scheme_without_postselection}) and (ii) applying the postselection procedure described in Eq.~\eqref{eq:postselection Schrodingerisation}, which in the discrete case means postselecting on bit strings $N_\xi/2+\lceil \frac{\lambda^{\min}_{S_1}}{\Delta_\xi}\rceil,N_\xi/2+\lceil \frac{\lambda^{\min}_{S_1}}{\Delta_\xi}\rceil+1,\dots,N_\xi-1$ (the 6th block from the Fig.~\ref{fig:quantum_scheme_without_postselection}). 

\subsection{Block-encoding as an input model}

In this subsection we address an important problem on how to input the model Hamiltonian 
\begin{equation}
    H=-X_\eta\otimes S_1+I_\eta\otimes S_2.
    \label{eq:Hamiltonian Black-Scholes}
\end{equation}
inside the quantum device. We focus on the block-encoding technique as it is known how to implement it on gate level efficiently \cite{guseynov2025quantumframeworksimulatinglinear,guseynov2024efficientPDE}. Block-encoding technique incorporates a non-unitary matrix of interest as a sub-block in an enlarged unitary matrix.

\begin{definition}[Block-encoding (\cite{gilyen2019quantum})]\label{def:block-encoding}
Let $A$ be an $n$-qubit operator. Given $\alpha, \epsilon \in \mathbb{R}_+$
and an integer $s \in \mathbb{Z}_+$, we say that a unitary $U$ acting on
$s + n$ qubits is an $(\alpha, s, \epsilon)$-block-encoding of $A$ if:
\begin{equation}
  U \ket{0}^s \ket{\psi} =  \ket{\phi}^{n+s},  
\end{equation}
with:
\begin{equation}
 (\bra{0}^s \otimes I) \ket{\phi}^{n+s} = \tilde{A} \ket{\psi},
\quad \text{and} \quad \| A - \alpha \tilde{A} \| \leq \epsilon.  
\end{equation}
\end{definition}

In general, creating a block-encoding for an arbitrary matrix requires exponential resources. However, for the specific task of sparse and structured matrices as those considered in this paper (derived using finite-difference) we have solutions with polynomial complexity.

\begin{proposition}[Multi-dimensional block-encoding (Theorem 6 from \cite{guseynov2025quantumframeworksimulatinglinear})]\label{theorem: block-encoding multidimensional}
Let $H$ be a $2^{W+n_\xi+1} \times 2^{W+n_\xi+1}$ Hermitian matrix with sparsity $\mathfrak s$. 
The matrix $H$ represents the Hamiltonian of a $W+n_\xi+1$-qubit system, governed by the equation:
\begin{equation}
\frac{d\psi}{dt} = iH\psi; 
\end{equation}

where the Hamiltonian takes the form derived from the Schr\"odingerisation 
technique applied to the $d$ dimensional PDE problems \eqref{eqn:ICMultidimblack-scholes}, \eqref{eq:multiHeston_fullPDE}:
\begin{equation}
H =  -X_\eta\otimes S_1+I_\eta\otimes S_2. 
\end{equation}

The matrices \( S_1 \) and \( S_2 \) are given by:
\begin{equation}
S_1 = \begin{pmatrix} \frac{A + {A}^\dagger}{2} & \frac{B}{2} \\ 
\frac{B}{2} & 0 \end{pmatrix}, \quad
S_2 = \begin{pmatrix} \frac{A - {A}^\dagger}{2i} & 
\frac{B}{2i} \\ -\frac{B}{2i} & 0 \end{pmatrix},    
\end{equation}

we can then create a 
$(\mathcal{O}(\mathfrak{s} \|H\|_{\max}), d \lceil \log_2 n \rceil+ 
\lceil \log_2 n_\xi \rceil + 
\lceil \log_2 \mathfrak{s} \rceil + \lceil \log_2 \eta \rceil + 4d + 5, 0)$-block-encoding of $H$. The total amount of resources scales as:
\begin{enumerate}
    \item $\mathcal{O}(dQ_\sigma  n \log n + n_\xi\log n_\xi+d \eta \mathfrak{s} n)$ quantum gates,
    \item $\mathcal{O}(n)$ ancilla qubits.
\end{enumerate}
The $Q_\sigma$ parameter describes the maximum degree of the polynomial approximation for the local volatility $\sigma_i(S_i)\approx\sum_{k=0}^{Q_\sigma-1}\varkappa_kS_i^k$ (can be generalized to piecewise local vol). The parameter $\eta\sim d^2$ is the number of terms in the original PDE.
\end{proposition}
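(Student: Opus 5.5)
The plan is to build the block-encoding of $H=-X_\eta\otimes S_1+I_\eta\otimes S_2$ as a linear combination of unitaries (LCU) over its structured pieces. Each piece is block-encoded separately with explicit circuits from the framework of \cite{guseynov2025quantumframeworksimulatinglinear}, and the pieces are then summed using a PREPARE/SELECT pair.

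First, we expand $A$ (the finite-difference generator of \eqref{eqn:ICMultidimblack-scholes} or \eqref{eq:multiHeston_fullPDE}) term by term. The PDE has $\eta\sim d^2$ terms: $d$ drift terms, $d$ diffusion terms, $\binom{d}{2}$ (or $\mathcal O(d^2)$ for Heston) mixed terms, and the $-rV$ term. After discretization, each term is a product of three factors.
\begin{enumerate}
\item A diagonal coefficient operator, such as $\mathrm{diag}(S_i)$, $\mathrm{diag}(\sigma_i(S_i)^2S_i^2)$, $\mathrm{diag}(\sqrt{v_iv_j})$, or $\kappa_i(\theta_i-\mathrm{diag}(v_i))$.
\item A shift-based stencil acting on one or two $n$-qubit axis registers: $(L_i-L_i^\dagger)/2\Delta$, $(L_i+L_i^\dagger-2I)/\Delta^2$, or the four-point mixed stencil \eqref{eq:fd-mixed-deriv-SiSj}.
\item Identities on all other registers.
\end{enumerate}

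Second, we block-encode each factor explicitly.
\begin{itemize}
\item The coordinate operator $\mathrm{diag}(S_i)$ on $n$ qubits is a sum of $n$ weighted $Z$'s. It can be block-encoded with $\mathcal O(n)$ gates, and with $\mathcal O(n\log n)$ gates if it is realized via a $\lceil\log_2 n\rceil$-qubit PREPARE.
\item A polynomial of degree $Q_\sigma$ in this coordinate, used for $\sigma_i(S_i)$, costs $\mathcal O(Q_\sigma n\log n)$ by QSVT or by repeated products. Summing over the $d$ axes gives $\mathcal O(dQ_\sigma n\log n)$.
\item The square roots $\sqrt{v_i}$ in the Heston case are handled the same way with a fixed-degree polynomial, which gives the $dn\log n$ term.
\item The cyclic shifts $L_i$ are incrementers costing $\mathcal O(n)$ CNOTs and rotations. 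Boundary rows (Dirichlet elimination, Neumann ghost points) are fixed by controlled corrections on the extreme basis states, again $\mathcal O(n)$.
\item Each stencil has at most $\mathfrak s=\mathcal O(1)$ to $\mathcal O(d)$ shift components. Summed over the $\eta$ terms, the shifts therefore cost $\mathcal O(d\eta\mathfrak s n)$.
\item On the $\xi$ register, $X_\eta$ is again a coordinate operator up to the cyclic relabeling \eqref{eq:coordinate operator in eta}, costing $\mathcal O(n_\xi\log n_\xi)$.
\end{itemize}

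Third, we assemble the pieces. The Hermitian and anti-Hermitian parts $S_1,S_2$ come from $A$ and $A^\dagger$ via the standard trick: an ancilla qubit controls whether $U$ or $U^\dagger$ is applied, with Hadamards and a phase gate around it. The off-diagonal $B$ blocks are added using the augmentation qubit, and $B$ is diagonal, so it is handled like the coefficient operators. The tensor products $X_\eta\otimes S_1$ and $I_\eta\otimes S_2$ are block-encoded as products of commuting block-encodings on disjoint registers.

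The LCU over all terms uses $\lceil\log_2\eta\rceil+\lceil\log_2\mathfrak s\rceil$ index qubits. The other ancillas are per-axis coordinate encodings, giving $d\lceil\log_2 n\rceil+\lceil\log_2 n_\xi\rceil$, plus a constant number per axis for the shift and boundary controls, giving $4d+5$. This matches the stated ancilla count.

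The subnormalization is the $\ell_1$ norm of the LCU coefficients. Each entry of $H$ is at most $\|H\|_{\max}$, and each row has $\mathfrak s$ nonzero entries, so the subnormalization is $\mathcal O(\mathfrak s\|H\|_{\max})$. Because every factor is encoded exactly (polynomial coefficients are exact once $\sigma$ is replaced by its polynomial), the error parameter is $0$.

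The main obstacle I expect is the subnormalization bookkeeping. Coordinate block-encodings carry factors like $S_{\max}$, and naive products of them inflate the subnormalization beyond $\|H\|_{\max}$. The mixed term $S_iS_j\sqrt{v_iv_j}$ is especially prone to this. My first attempt would be to rescale each coefficient diagonal by its exact maximum over the grid. The LCU weights would then be the true maximal stencil entries $\propto\|H\|_{\max}$, and summing at most $\mathfrak s$ of them per row gives the stated bound.
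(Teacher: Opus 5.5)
The paper does not prove this proposition itself; it imports Theorem~6 of the cited framework. Your architecture matches the construction used in that framework: finite-difference terms written as shift operators times polynomial diagonals, coordinates as weighted sums of $Z$'s, polynomials by QSVT, everything combined by LCU, then tensored with $X_\eta$ and added to $I_\eta\otimes S_2$.

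The step that does not follow from what you describe is the $\mathcal O(dQ_\sigma n\log n)$ term. You charge the coefficient polynomials once per axis, but you assemble $A$ as an LCU over all $\eta\sim d^2$ PDE terms. In a term-by-term SELECT, every diffusion or mixed term carries its own coefficient block-encoding: $\tfrac12\sigma_i(S_i)^2S_i^2$, $\rho_{ij}\sigma_i(S_i)\sigma_j(S_j)S_iS_j$, or $S_iS_j\sqrt{v_iv_j}$ for Heston. The SELECT cost is then a sum over terms, $\mathcal O(\eta Q_\sigma n\log n)=\mathcal O(d^2Q_\sigma n\log n)$. That is a factor $d$ above the claim, and it is not absorbed by $d\eta\mathfrak s n$ once $Q_\sigma\log n\gg d\mathfrak s$. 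Your two counts are also inconsistent, since you do sum the shifts over the $\eta$ terms.

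The missing idea is to factor SELECT:
\begin{itemize}
\item Write each term's coefficient as a product of one-axis factors ($S_i$, $\sigma_i(S_i)S_i$, $\sqrt{v_i}$, \dots).
\item Give every axis its own coordinate ancillas, which is consistent with the $d\lceil\log_2 n\rceil$ in the stated count.
\item Apply each axis factor a constant number of times, controlled on whether axis $i$ occurs in the selected term and with which factor.
\item Select the shifts separately, and put the scalars ($\rho_{ij}$, $1/(\Delta_i\Delta_j)$, the per-axis subnormalizations) into PREPARE.
\end{itemize}

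Two other steps need an argument rather than an assertion.

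First, the subnormalization. The $\ell_1$ norm of an LCU is $\sum_\tau\beta_\tau$ over terms, where each $\beta_\tau$ is a maximum over the whole grid, so a per-row count does not bound it. It is still $\mathcal O(\mathfrak s\|H\|_{\max})$ here, for two reasons. The number of terms is $\mathcal O(\mathfrak s)$; both are $\Theta(d^2)$ under full correlation. And each term's coefficient is dominated by an entry of $A$: mixed coefficients are themselves entries, while drift and diffusion on an axis add with equal sign at one of the two neighbours. There is a further issue on the $X_\eta\otimes S_1$ branch. Building $S_1$ from $U_A$ and $U_A^\dagger$ gives that branch weight $\|X_\eta\|\,\alpha_A$, whereas $\|H\|_{\max}$ only dominates $\|X_\eta\|\,\|S_1\|_{\max}$. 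So you need $\|A\|_{\max}=\mathcal O(\|S_1\|_{\max})$. This holds because $\mathrm{diag}(S_1)=\mathrm{diag}(A)$, and that diagonal dominates all other entries of $A$ up to $\mathcal O(2^n)$ drift contributions. Alternatively, block-encode $S_1$ directly.

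Second, exactness in the Heston case. $\sqrt{v_iv_j}$ with $i\neq j$ is not a polynomial. Error $0$ and a degree-free $dn\log n$ term therefore hold only for a fixed polynomial surrogate of $\sqrt v$ on $[v_{\min},v_{\max}]$. The degree of such a surrogate grows with the target accuracy and with $v_{\max}/v_{\min}$. The statement leaves this implicit as well, but your proof should make it explicit.
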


We highlight that in Proposition~\ref{theorem: block-encoding multidimensional} the resulting block-encoding $U_H$ constant $\alpha$ necessarily contains $\|H\|_{\max}$ which is one of the most restrictive bottlenecks in this paper, the importance of this fact will be shown in Appendix~\ref{subsection: evolution operator creation}. From the view of the multidimensional PDEs in this paper \eqref{eqn:ICMultidimblack-scholes}, \eqref{eq:multiHeston_fullPDE} one can calculate that after imposing finite-difference approximation for the derivatives $\partial S_i\rightarrow\Delta S_i$ the norm for the ODE system matrix $\|A\|_{\max}$ scales as
\begin{equation}
    \|H\|_{\max}\sim\|A\|_{\max}=\mathcal{O}(d^22^{2n}),
\end{equation}
where $d^2$ corresponds to the number of terms in the analytical expressions of PDEs considered. The factor $2^{2n}$ comes from the maximum derivative degree contributing PDEs, for the case of multidimensional Heston and Black--Scholes the maximum degree is $2$.

\subsection{Evolution operator implementation}\label{subsection: evolution operator creation}

In this subsection we consider how several block-encodings $U_H$ (or conjugated version $U^\dagger_H$ ) can be used to implement the evolution operator $e^{iHt}$. The key technique here is called Alternating phase modulation sequence which allows one to polynomially transform the matrix under block-encoding.

\begin{definition}[Alternating phase modulation sequence \cite{gilyen2019quantum}]
    \label{def:Alternating phase modulation sequence}
    Let $U_H$ be a $(1,m,0)$-block-encoding of hermitian matrix $H$ such that

    \begin{equation}
       H=(\ket{0}^m\bra{0}^m\otimes I^{\otimes n})U_H(\ket{0}^m\bra{0}^m\otimes I^{\otimes n}); 
    \end{equation}
    let $\Phi\in\mathbb{R}^q$, then we define the $q$-phased alternating sequence $U_\Phi$ as follows
\begin{equation}
U_{\Phi} := 
\begin{cases} 
e^{i \phi_1 (2 \Pi - I)} U_H \prod_{j=1}^{(q-1)/2} \left( e^{i \phi_{2j} (2 \Pi - I)} U_H^\dagger e^{i \phi_{2j+1} (2 \Pi - I)} U_H \right) & \text{if } q \text{ is odd, and} \\
\prod_{j=1}^{q/2} \left( e^{i \phi_{2j-1} (2 \Pi - I)} U_H^\dagger e^{i \phi_{2j} (2 \Pi - I)} U_H \right) & \text{if } q \text{ is even,}
\end{cases}
\end{equation}
where $2\Pi-I=(2\ket{0}^m\bra{0}^m-I^{\otimes m})\otimes I^{\otimes n}$. The phase gate is given by
\begin{equation}
    C_\phi^{00\dots0}=e^{i\phi(2\Pi-I)}=e^{i\phi}\ketbra{00\dots0}+e^{-i\phi}(I-\ketbra{00\dots0}).
    \label{eq:phase gadget in alternating sequence}
\end{equation}
The circuit view is depicted in Fig.~\ref{fig:Alternating phase modulation sequence}.
\end{definition}

The following Proposition uses the combination of alternating phase modulation sequences to apply polynomial transformation of the block encoded matrix $H$.

\begin{figure}[h!]
    \centering
    \includegraphics[width=0.65\linewidth]{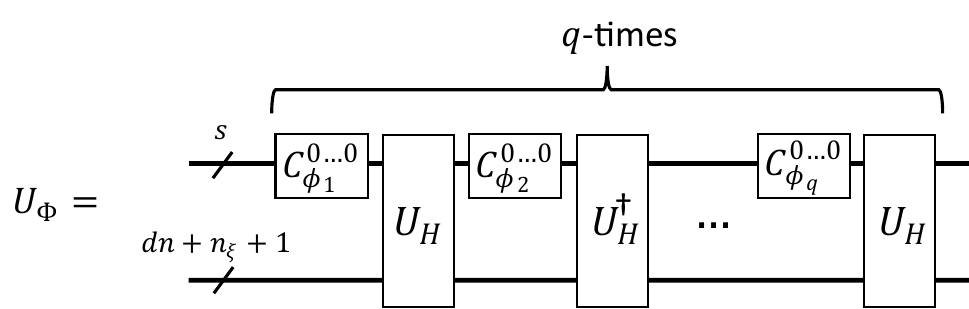}
    \caption{A schematic circuit view for the Alternating phase transformation sequence (for odd polynomial degree $q$). The phase gate $C_\phi^{00\dots 0}$ is given by \eqref{eq:phase gadget in alternating sequence}. The circuit allows to implement a block-encoding of a polynomial $poly(H)$ which obeys certain parity and symmetry.}
\label{fig:Alternating phase modulation sequence}
\end{figure}

\begin{proposition}[Singular value transformation by real polynomials (Corollary 18 from \cite{gilyen2019quantum})]\label{theorem:Singular value transformation by real polynomials}
    Let $P_{q}(x)$ be a degree-$q$ polynomial with the following properties:
\begin{enumerate}
    \item $P_{q}(x)$ has parity $(0$ or $1 \mod2)$, which means that it has only odd or even degrees of $x$;\label{cond1}
    \item $P_{q}(x)$ is a real polynomial, which means that all the coefficients are real;\label{cond2}
    \item For all $x\in[-1,1]$: $\abs{P_{q}(x)}<1$.\label{cond3}
\end{enumerate}
and $U_\Phi$ is the Alternating phase modulation sequence based on the $(1,\mathfrak{m},0)$-block-encoding of the Hamiltonian matrix $U_H$  as in Definition~\ref{def:Alternating phase modulation sequence}. Then there exist $\Phi\in\mathbb{R}^q$, such that
\begin{equation}
\begin{gathered}
    \bra{0}^{\mathfrak{m}+1}\bra{j}^n{U}_{P_q(H)}\ket{0}^{\mathfrak{m}+1}\ket{i}^n=(P_q(H))_{ji}\\
    {U}_{P_q(H)} = \left(H_W\otimes I^{\mathfrak m} \right) \left( |0\rangle \langle 0| \otimes U_{\Phi} + |1\rangle \langle 1| \otimes U_{-\Phi} \right)\left(H_W\otimes I^{\mathfrak{m}} \right).
    \end{gathered}
\end{equation}
    where $H_W = \frac{1}{\sqrt{2}},
\begin{pmatrix}
1 & 1 \\
1 & -1
\end{pmatrix}.$ Moreover, given $P_q(x)$ and $\delta \geq 0$ we can find a $P^\prime_q(x)$ and a corresponding angles $\Phi$, such that $|P^\prime_q(x) - P_q(x)| \leq \delta$ for all $x \in [-1,1]$, using a classical computer in time $\mathcal{O}(\text{poly}(q, \log(1/\delta)))$.
\end{proposition}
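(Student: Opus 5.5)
The plan is to reduce the statement to single-qubit quantum signal processing (QSP), using the block structure of $U_H$ on the eigenspaces of the Hermitian matrix $H$, and then to remove the unwanted imaginary part with the Hadamard-controlled pair $U_\Phi$, $U_{-\Phi}$.

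\textbf{Step 1 (invariant subspaces).} Write $H=\sum_\lambda \lambda\ket{\lambda}\bra{\lambda}$ with $\lambda\in[-1,1]$, which holds because $U_H$ is a $(1,\mathfrak m,0)$-block-encoding. I would invoke the qubitization (Jordan) lemma for the pair of projectors $\Pi$ and $U_H^\dagger\Pi U_H$. For each eigenvector, $U_H$ maps $\ket{0}^{\mathfrak m}\ket{\lambda}$ into a two-dimensional subspace, and $U_H^\dagger$ maps a partner vector back. In these bases $U_H$ acts as the reflection-type matrix
\[
R(\lambda)=\begin{pmatrix}\lambda & \sqrt{1-\lambda^2}\\ \sqrt{1-\lambda^2} & -\lambda\end{pmatrix},
\]
and $e^{i\phi(2\Pi-I)}$ acts as $e^{i\phi Z}$. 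Hence $U_\Phi$ restricted to each such subspace is the QSP product $e^{i\phi_1 Z}R(\lambda)e^{i\phi_2 Z}R(\lambda)\cdots$. Its top-left entry is a polynomial $P_{\mathbb C}(\lambda)$ of degree $q$ with parity $q \bmod 2$. The one-dimensional invariant subspaces (eigenvalues $\pm1$) are handled as degenerate cases.

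\textbf{Step 2 (QSP characterization).} I would use the standard QSP theorem. A complex pair $(P_{\mathbb C},Q_{\mathbb C})$ of degrees $(q,q-1)$ with parities $(q,q-1)\bmod 2$ is achievable by some $\Phi\in\mathbb R^q$ if and only if
\[
|P_{\mathbb C}|^2+(1-x^2)|Q_{\mathbb C}|^2=1 \quad \text{on } [-1,1].
\]
The key existence claim is then: given real $P_q$ with the right parity and $|P_q|\le1$ on $[-1,1]$, there exists an achievable $P_{\mathbb C}$ with $\mathrm{Re}\,P_{\mathbb C}=P_q$.

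This is the step I expect to be the main obstacle. I would set $P_{\mathbb C}=P_q+iB$ and need $1-P_q^2=B^2+(1-x^2)C^2$ for real polynomials $B$, $C$ of the right parity. The plan is:
\begin{enumerate}
\item $1-P_q^2$ is nonnegative on $[-1,1]$ and even.
\item Factor it through its roots, pairing complex-conjugate roots and roots symmetric about the origin. Roots in $(-1,1)$ have even multiplicity, and the factor $1-x^2$ absorbs the roots at $\pm1$.
\item Apply a Fejér–Riesz-type sum-of-two-squares decomposition to obtain real $B$ and $C$ with the correct parities.
\end{enumerate}

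\textbf{Step 3 (extracting the real part).} Replacing $\Phi$ by $-\Phi$ conjugates every $e^{i\phi Z}$ while leaving $R(\lambda)$ real. So the top-left entry of $U_{-\Phi}$ is $P_{\mathbb C}(\lambda)^*$. Sandwiching the controlled $\ket{0}\bra{0}\otimes U_\Phi+\ket{1}\bra{1}\otimes U_{-\Phi}$ between $H_W$ gates yields the block
\[
\tfrac12\bigl(P_{\mathbb C}(H)+P_{\mathbb C}(H)^*\bigr)=P_q(H)
\]
on $\ket{0}^{\mathfrak m+1}$, which is exactly the claimed matrix element identity. Here $P_{\mathbb C}(H)^*$ means $P_{\mathbb C}^*$ applied spectrally.

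\textbf{Step 4 (classical computability).} To find $P'_q$ and $\Phi$ in $\mathcal O(\mathrm{poly}(q,\log(1/\delta)))$ time, I would proceed as follows:
\begin{enumerate}
\item Rescale slightly, $P_q\mapsto(1-\delta/2)P_q$, so that $1-P_q^2$ is strictly positive. This makes the root finding numerically stable.
\item Compute the roots to precision $\mathrm{poly}(\delta/q)$.
\item Build $B$ and $C$ as in Step 2.
\item Extract the phases by the standard layer-stripping recursion, which peels off one phase at a time and reduces the degree by one.
\end{enumerate}
The accumulated error is kept below $\delta$ by working with $\mathcal O(\mathrm{poly}(q)\log(1/\delta))$ bits of precision.
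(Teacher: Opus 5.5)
Your proposal is correct and follows essentially the same route as the proof of the cited result in Gily\'en et al.\ (the paper only quotes it): restrict $U_\Phi$ to two-dimensional invariant subspaces where it becomes reflection-convention QSP, realize $P_q$ as the real part of an achievable complex polynomial via a sum-of-squares decomposition $1-P_q^2=B^2+(1-x^2)C^2$ with the right parities, obtain $P_{\mathbb C}^*$ from $U_{-\Phi}$ and average the two with the Hadamard-controlled circuit, and rely on a Haah-type numerically stable root-finding analysis for the classical part. One small correction: with $\Phi\in\mathbb{R}^q$ in the reflection convention, the ``if and only if'' holds for $P_{\mathbb C}$ alone (existence of some admissible $Q_{\mathbb C}$), not for arbitrary pairs---for $q=1$, with $Q_{\mathbb C}\sqrt{1-x^2}$ the top-right entry, one only reaches $Q_{\mathbb C}=P_{\mathbb C}/x$, so e.g.\ $(x,-1)$ is not attainable---but that weaker statement is exactly what your Step~2 uses.
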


By selecting the polynomial $P_q$ to approximate $\sin x$ and $\cos x$ and combining them using linear combination of unitaries \cite{dalzell2023quantum} to create $e^{ix}$ and using the block-encoding $U_H$ from Proposition~\ref{theorem: block-encoding multidimensional} as an input one can prove the following.

\begin{proposition}[Quantum simulation of the option-pricing Hamiltonian (Theorems~8--9 of \cite{guseynov2025quantumframeworksimulatinglinear})]
\label{theorem:QuantumHamiltonianSimulation}
Let $H$ be a $\mathfrak{s}$-sparse Hamiltonian as in Eq.~\eqref{eq:Hamiltonian Black-Scholes}; and the corresponding $(\mathcal{O}(\mathfrak{s} \|H\|_{\max}), d \lceil \log_2 n \rceil+ 
\lceil \log_2 n_\xi \rceil + 
\lceil \log_2 \mathfrak{s} \rceil + \lceil \log_2 \eta \rceil + 4d + 5, 0)$-block-encoding as in Proposition~\ref{theorem: block-encoding multidimensional},
acting on $W+n_\xi+1$ qubits, where $W$ encodes the discretized
asset prices and volatilities, and $n_\xi$ is the auxiliary
register for Schr\"odingerisation, and the $1$ qubit is for making the ODE system homogeneous. Then, for any precision $\epsilon_\text{evol} > 0$ and evolution time $T > 0$, we can implement an $\epsilon_\text{evol}$-precise block-encoding of the evolution operator $e^{-iHT}$, with
block-encoding parameters
\begin{equation}
(2,\, c,\, \epsilon_\text{evol}), \quad
c =  d \lceil \log_2 n \rceil+ 
\lceil \log_2 n_\xi \rceil + 
\lceil \log_2 \mathfrak{s} \rceil + \lceil \log_2 \eta \rceil + 4d + 7.
\end{equation}

\noindent The total gate complexity (C-NOTs and one-qubit rotations) scales as
\begin{equation}
\mathcal{O}\left(\underbrace{\left[
\mathfrak{s}\, \lVert H \rVert_{\max}\, T
+ \frac{\log(1/\epsilon_\text{evol})}{\log\left( e + \frac{\log(1/\epsilon_\text{evol})}{\mathfrak{s}\,\lVert H \rVert_{\max}\, T} \right)}
\right]}_{\text{Ham. sim.}}\right.
\times \underbrace{\left[
dQ_\sigma  n \log n + n_\xi\log n_\xi+d \eta \mathfrak{s} n
\right]}_{\text{Block encoding of } H}\Biggl)  
\end{equation}
where $\mathfrak{s} = \mathcal{O}(d)$ for the central-difference discretization, the number of terms in the original PDE $\eta\sim d^2$ , and $\lVert H \rVert_{\max} \sim \sigma_{\max}^2 2^{2n}$,
with $\sigma_{\max} := \max_{S,i} \sigma_i(S, t)$. This is valid for $\mathfrak{s}\, \lVert H \rVert_{\max}\, T=\mathcal{O}\Bigl(\frac{\log(1/\epsilon_\text{evol})}{\log\left( e + (\mathfrak{s}\,\lVert H \rVert_{\max}\, T)^{-1}\log(1/\epsilon_\text{evol}) \right)}\Bigl)$.
\end{proposition}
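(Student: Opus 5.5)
The plan is to build the evolution operator from the block-encoding $U_H$ of Proposition~\ref{theorem: block-encoding multidimensional} by quantum signal processing. First rescale: $U_H$ is an $(\alpha,m,0)$-block-encoding of $H$ with $\alpha=\mathcal O(\mathfrak s\|H\|_{\max})$, so it is a $(1,m,0)$-block-encoding of $\widetilde H=H/\alpha$ with $\|\widetilde H\|\le 1$. Writing $\tau=\alpha T$, we need $e^{-iHT}=e^{-i\tau\widetilde H}=\cos(\tau\widetilde H)-i\sin(\tau\widetilde H)$. The real, even function $\cos(\tau x)$ and the real, odd function $\sin(\tau x)$ each get a polynomial approximation on $[-1,1]$ via the truncated Jacobi--Anger expansions
\[
\cos(\tau x)=J_0(\tau)+2\sum_{k\ge1}(-1)^kJ_{2k}(\tau)T_{2k}(x),\qquad \sin(\tau x)=2\sum_{k\ge0}(-1)^kJ_{2k+1}(\tau)T_{2k+1}(x).
\]
Truncate at degree $q$ and multiply by a factor such as $1/(1+\epsilon')$, so that the parity, reality and $|P_q|<1$ conditions of Proposition~\ref{theorem:Singular value transformation by real polynomials} all hold. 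That proposition then gives circuits $U_{P_{\cos}(\widetilde H)}$ and $U_{P_{\sin}(\widetilde H)}$, each using $q$ queries to $U_H$ or $U_H^\dagger$ plus $\mathcal O(q)$ multi-controlled phase gates on the $m$ ancillas. Each such gate costs $\mathcal O(m)$ CNOTs and rotations, which is dominated by the cost of the block-encoding.

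Next, combine the two blocks by a one-ancilla linear combination of unitaries: prepare $\ket{+}$, control on it to select $U_{P_{\cos}}$ or $U_{P_{\sin}}$ (with the phase $-i$ attached to the second), then unprepare. This yields a block-encoding of $\tfrac12\bigl(P_{\cos}-iP_{\sin}\bigr)(\widetilde H)$, that is, normalization $2$. It uses one more ancilla for the LCU and one for the Hadamard trick in the QSVT construction, which accounts for the $+2$ relative to the ancilla count $m$ of $U_H$ and gives $c=m+2$. The error is bounded by the triangle inequality: $\|e^{-i\tau\widetilde H}-(P_{\cos}-iP_{\sin})(\widetilde H)\|\le 2\max_{x\in[-1,1]}\bigl(|\cos\tau x-P_{\cos}(x)|+|\sin\tau x-P_{\sin}(x)|\bigr)$, and setting this to $\epsilon_{\mathrm{evol}}$ gives the precision. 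Controlled versions of $U_H$ only add a constant factor, since the control can be placed on the ancilla projectors and phases.

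The main obstacle is the degree bound. Using the Bessel tail estimate $|J_k(\tau)|\le (e\tau/2k)^k/\sqrt{2\pi k}$ (or the standard Gilyén et al.\ lemma), the truncation error falls below $\epsilon_{\mathrm{evol}}$ once
\[
q=\mathcal O\!\left(\tau+\frac{\log(1/\epsilon_{\mathrm{evol}})}{\log\bigl(e+\log(1/\epsilon_{\mathrm{evol}})/\tau\bigr)}\right).
\]
I would verify this by splitting into two cases, $\tau\gtrsim\log(1/\epsilon)$ and the opposite, and checking that the stated validity condition covers the regime where the second term dominates. Substituting $\tau=\mathfrak s\|H\|_{\max}T$ and multiplying $q$ by the per-query gate cost $\mathcal O(dQ_\sigma n\log n+n_\xi\log n_\xi+d\eta\mathfrak s n)$ from Proposition~\ref{theorem: block-encoding multidimensional} then gives the stated total. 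Finally, $\mathfrak s=\mathcal O(d)$ for central differences, $\eta\sim d^2$, and $\|H\|_{\max}\sim\sigma_{\max}^2 2^{2n}$: the last bound follows because $\max|\eta_k|\le\pi N_\xi/(2L_\xi)$ is a constant multiplier, and the diffusion entries of $A$ scale as $\sigma^2S^2/\Delta S^2$.
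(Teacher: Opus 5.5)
Your proposal is correct and follows the same route as the paper. The paper only sketches the argument: it uses QSVT (Proposition~\ref{theorem:Singular value transformation by real polynomials}) to approximate $\cos$ and $\sin$ of the rescaled block-encoded $H$, combines them by an LCU to obtain $e^{-iHT}$, and defers the details to Theorems~8--9 of \cite{guseynov2025quantumframeworksimulatinglinear}. Your Jacobi--Anger degree bound, the normalization $2$, the $+2$ ancillas (one for the Hadamard trick, one for the LCU), and the multiplication of the query count by the per-query cost from Proposition~\ref{theorem: block-encoding multidimensional} reproduce the stated parameters. The one loose point is calling $\pi N_\xi/(2L_\xi)=\pi/\Delta_\xi$ a constant: it is independent of $n$ but depends on the $\xi$-discretization precision, which is why it is suppressed in $\|H\|_{\max}\sim\sigma_{\max}^2 2^{2n}$.
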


\begin{remark}
One can observe now that an exponential advantage in $n$ is not achievable as the final complexity necessarily contains $2^{2n}$ terms induced by $\lVert H\rVert_{\max}$. This problem can be mitigated in several set-ups \cite{Rolando_low,bravyi2025quantumsimulationnoisyclassical}. However, we do not consider it in this paper.
\end{remark}

\subsection{Recovery of the ODE solution}

After Proposition~\ref{theorem:QuantumHamiltonianSimulation} we return to the
solution of the original system~\eqref{eq:MAINhomog-ode}.
The Schr\"odingerisation procedure embeds the dynamics into a register
decomposed (from left to right) as
\begin{equation}
\underbrace{c}_{\text{evol ancillas}},\qquad
\underbrace{W+1}_{\text{system}},\qquad
\underbrace{n_\xi}_{\text{Schr\"odingerisation}},\qquad
\underbrace{a_{\mathrm{prep}}}_{\text{prep ancillas}},
\end{equation}
so that $c$ qubits are used as block-encoding auxiliaries for
$U_{\mathrm{evol}}$, $W+1$ qubits encode the discretised components of
$\vec w$, the $n_\xi$ qubits carry the Schr\"odingerisation coordinate
$\xi$, and the $a_{\mathrm{prep}}$ qubits are used for state
preparation.

The cut-off state–preparation unitary
$U_{\mathrm{cut\text{-}off}}$ (see
Appendix~\ref{section:state prep}) acts only on the
rightmost $n_\xi + a_{\mathrm{prep}}$ qubits.
We then apply a quantum Fourier transform $QFT$ on the $n_\xi$
Schr\"odingerisation qubits, followed by the evolution unitary
$U_{\mathrm{evol}}$ from
Proposition~\ref{theorem:QuantumHamiltonianSimulation} on the joint
system–Schr\"odingerisation space of size $W+1+n_\xi$, and finally the
inverse transform $QFT^\dagger$ on the same $n_\xi$ register.
Since the physically meaningful support lies in $\xi>\lambda^{\min}_{S_1}$, the postselection projector on the $\xi$–register is
\begin{equation}
  \Pi_{\xi>0}
  =
  \sum_{k = k_\lambda}^{2^{n_\xi}-1}
  \ket{k}^{n_\xi}\bra{k}^{n_\xi},
  \qquad
  k_\lambda
  =
  \left\lceil
  \frac{\lambda_{S_1}^{\min}}{\Delta_\xi}
  \right\rceil .
\end{equation}
Putting all stages together (see
Fig.~\ref{fig:quantum_scheme_without_postselection}), the
branch of the computation is
\begin{align}
&
\bigl(
  \ket{0}^{c}\bra{0}^{c}
  \otimes
  I^{\otimes(W+1)}
  \otimes
  \Pi_{\xi>0}
  \otimes
  \ket{0}^{a_{\mathrm{prep}}}\bra{0}^{a_{\mathrm{prep}}}
\bigr)
\cdot
\bigl(
  I^{\otimes(c+W+1)}
  \otimes
  QFT^\dagger
  \otimes
  I^{\otimes a_{\mathrm{prep}}}
\bigr)
\nonumber\\[1mm]
&\quad\cdot
\bigl(
  U_{\mathrm{evol}}
  \otimes
  I^{\otimes a_{\mathrm{prep}}}
\bigr)
\cdot
\bigl(
  I^{\otimes(c+W+1)}
  \otimes
  QFT
  \otimes
  I^{\otimes a_{\mathrm{prep}}}
\bigr)
\nonumber\\[1mm]
&\quad\cdot
\bigl(
  I^{\otimes(c+W+1)}
  \otimes
  U_{\mathrm{cut\text{-}off}}
\bigr)
\ket{0}^{\otimes(c+W+1+n_\xi+a_{\mathrm{prep}})}
=
\ket{\psi}^{(W+1)}.
\end{align}
After discarding the $\xi$–register and all ancillas, the remaining
$(W+1)$–qubit state $\ket{\psi}$ encodes the normalised approximation
to the vector~$\vec w$ at the desired time.

The success probability satisfies
$p_{\mathrm{success}}
  = \mathcal{O}(\|\vec w(t)\|^{2}/\|\vec w(0)\|^{2})$
\cite{analog}.
We assume that the composite unitary
\begin{align}
U_{\mathrm{full}}
&=
\bigl(
  I^{\otimes(c+W+1)}
  \otimes
  QFT^\dagger
  \otimes
  I^{\otimes a_{\mathrm{prep}}}
\bigr)
\cdot
\bigl(
  U_{\mathrm{evol}}
  \otimes
  I^{\otimes a_{\mathrm{prep}}}
\bigr)
\nonumber\\[1mm]
&\quad\cdot
\bigl(
  I^{\otimes(c+W+1)}
  \otimes
  QFT
  \otimes
  I^{\otimes a_{\mathrm{prep}}}
\bigr)
\cdot
\bigl(
  I^{\otimes(c+W+1)}
  \otimes
  U_{\mathrm{cut\text{-}off}}
\bigr)
\end{align}
is used as the signal unitary in a standard amplitude–amplification
procedure (refer to Appendix~\ref{app:amplification}).
In particular, we coherently iterate $U_{\mathrm{full}}$ and its
adjoint $U_{\mathrm{full}}^\dagger$ within the usual Grover-type
iterate that reflects about the initial state
$\ket{0}^{\otimes(c+W+1+n_\xi+a_{\mathrm{prep}})}$ and about the
subspace defined by the projector
$\ket{0}^{c}\bra{0}^{c}
 \otimes I^{\otimes(W+1)}
 \otimes \Pi_{\xi>0}
 \otimes \ket{0}^{a_{\mathrm{prep}}}\bra{0}^{a_{\mathrm{prep}}}$.
After
$\mathcal{O}\!\bigl(\|\vec w(0)\|/\|\vec w(t)\|\bigr)$ such
alternating applications of $U_{\mathrm{full}}$ and $U_{\mathrm{full}}^\dagger$,
the overall success probability is boosted to $\mathcal{O}(1)$.

\FloatBarrier
\section{Quantum state preparation}\label{section:state prep}

In this paper, we consider quantum algorithms whose inputs and
outputs are classical data. To this end, we must
specify how to implement quantum state preparation (blocks
$1,2$ from Fig.~\ref{fig:quantum_scheme_without_postselection}) and how
to retrieve the option price $V(S_0,v_0,0)$ for the
target $S_0$ and $v_0$ from the quantum scheme
output $\vec w(0)/\lVert \vec w(0)\rVert$. We begin by
addressing the input problem. Most of the functions
considered are closely approximated by polynomials or piecewise
polynomials of moderate degree. The following Proposition provides
the basis for all necessary initial states in this
paper.

\begin{proposition}[Efficient preparation of piecewise polynomial functions~\cite{guseynov2024explicit_Quantum_state_prep}]
\label{theorem:piecewise_poly}
Let $f(x)$ be a piecewise continuous function
$f: \mathbb{R} \rightarrow \mathbb{C}$ that can be decomposed
into $G$ pieces, each described by a degree-$Q_g$ polynomial:
\begin{equation}
f(x) =
\begin{cases}
f_1(x) = \sum_{i=0}^{Q_1} \alpha^{(1)}_i x^i, & \text{if } K_1 \geq x \geq a \\
f_2(x) = \sum_{i=0}^{Q_2} \alpha^{(2)}_i x^i, & \text{if } K_2 \geq x > K_1 \\
\vdots \\
f_G(x) = \sum_{i=0}^{Q_G} \alpha^{(G)}_i x^i, & \text{if } b \geq x > K_{G-1}
\end{cases} 
\end{equation}

where $\alpha_i^{(g)} \in \mathbb{C}$.
Then, there exists a $n+a_\text{prep}$-qubit quantum circuit $U_f$ which is $(\mathcal{O}(1/\sqrt{\mathcal{F}}),\lceil \log_2 n\rceil+3,0)$ that
efficiently prepares a $2^n$-dimensional discretized quantum state
proportional to $f(x)$, using:
\begin{enumerate}
    \item $\mathcal{O}\left( Q_{\max} n \log n+Gn+Q_{\max}G\right)$ C-NOT and single-qubit gates,
    \item $n-1$ ancilla qubits.
\end{enumerate}
with success probability proportional to the filling ratio
\begin{equation}
\mathcal{F} := \|f\|_2^2/(2\|f\|^2_{\max}).
\end{equation}
\end{proposition}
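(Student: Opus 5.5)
We build the state through a block-encoding of the diagonal coordinate operator and then apply QSVT. On the $n$-qubit register, consider the diagonal matrix $X=\mathrm{diag}(x_0,\dots,x_{2^n-1})$ with $x_k=a+k(b-a)/2^n$, rescaled so that $\|X\|\le 1$.

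\textbf{Step 1: block-encoding of the coordinate.} We write $x_k$ as an affine combination of the bits of $k$, so $X$ is a weighted sum of $n$ single-qubit $Z$ operators plus a constant. A linear combination of unitaries over these $n+1$ terms, using $\lceil\log_2 n\rceil+1$ ancillas, gives a $(\mathcal O(1),\lceil\log_2 n\rceil+1,0)$-block-encoding $U_X$. Its cost is $\mathcal O(n\log n)$ CNOTs and rotations, coming from the PREPARE/SELECT multiplexers over $n$ indices.

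\textbf{Step 2: one polynomial piece.} For a single piece $f_g$ of degree $Q_g$, split it into even and odd real parts, and split each into real and imaginary parts. Rescale each part by its max-norm on $[-1,1]$ so that condition~3 of Proposition~\ref{theorem:Singular value transformation by real polynomials} holds. That proposition then gives a block-encoding of $P(X)=\mathrm{diag}(f_g(x_k))/\|f\|_{\max}$. Each piece needs $\mathcal O(Q_g)$ uses of $U_X$, so it costs $\mathcal O(Q_g n\log n)$ gates. The parts are recombined by a constant-size LCU, which adds one or two more ancillas and brings the total to $\lceil\log_2 n\rceil+3$.

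\textbf{Step 3: gluing the pieces.} Let $\chi_g$ be the indicator of interval $g$. We implement the sum $\sum_g \chi_g(X)f_g(X)$ by conditioning the polynomial circuit on a comparator. The comparator flags whether the integer $k$ lies between the breakpoints $K_{g-1}$ and $K_g$, using $\mathcal O(n)$ gates and $n-1$ ancillas, and is uncomputed afterwards. In total this is $G$ comparators, costing $\mathcal O(Gn)$. Switching the QSVT phases for each piece adds $\mathcal O(Q_{\max}G)$ classically controlled single-qubit phase rotations. Reusing the same $U_X$ calls across pieces keeps the dominant term at $\mathcal O(Q_{\max}n\log n)$.

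\textbf{Step 4: state and success probability.} Apply the resulting block-encoding to the uniform superposition $H^{\otimes n}\ket{0}$ and postselect the ancillas on $\ket 0$. The output is $2^{-n/2}\sum_k f(x_k)\ket k/\|f\|_{\max}$ up to constants. Its squared norm is $\sum_k|f(x_k)|^2/(2^n\|f\|_{\max}^2)$, which is proportional to the filling ratio $\mathcal F$ once the factor of $2$ from the real/imaginary LCU is included. Hence the subnormalization is $\mathcal O(1/\sqrt{\mathcal F})$.

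The main obstacle is Step 3. Applying a single global polynomial to a piecewise function would require a high degree near the kinks. Instead we must show that selecting pieces coherently does not multiply the cost to $G\cdot Q_{\max}n\log n$. The approach is to share the $U_X$ queries and vary only the controlled phases, which accounts for the separate additive $Gn+Q_{\max}G$ term.
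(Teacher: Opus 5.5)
Your route (an LCU block-encoding of the diagonal coordinate, QSVT per polynomial piece, comparators as the paper's indicator qubits, and per-piece QSVT phases sharing the same $U_X$ queries) uses the ingredients the paper names for this proposition. It also explains the $\mathcal{O}(Q_{\max}n\log n+Gn+Q_{\max}G)$ count. The multiplexing is valid because an LCU of $Z$'s acts as $\sum_k U_k\otimes|k\rangle\langle k|$, so it commutes with the flags.

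\textbf{The gap is the normalization in Step~2}, on which Step~4's success probability rests. Because $U_X$ is shared, the polynomial applied on piece $g$ is a genuine polynomial in $X$. Any phase sequence produces a polynomial bounded by $1$ on all of $[-1,1]$, not only on $[K_{g-1},K_g]$. With about $Q_{\max}$ queries and more than $Q_{\max}$ grid points in the piece, that polynomial must coincide with $f_g/\lambda$, so $\lambda\ge\max_{x\in[-1,1]}|f_g(x)|$. Rescaling the parts by their sups over $[-1,1]$ therefore yields $\mathrm{diag}(f(x_k))/M$ with $M\gtrsim\max_g\max_{x\in[-1,1]}|f_g(x)|$, not $\mathrm{diag}(f(x_k))/\|f\|_{\max}$. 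Nothing bounds $M$ by $\|f\|_{\max}=\max_g\max_{x\in[K_{g-1},K_g]}|f_g(x)|$.

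Already the vanilla payoff $\max(0,x-K)$ on $[-1,1]$ with $0<K<1$ gives $M\ge 1+K$ against $\|f\|_{\max}=1-K$, a ratio that is unbounded as $K\to1$. For a piece $(x-K)^{Q}$ the ratio is $((1+K)/(1-K))^{Q}$. So your argument proves success probability $\Omega(\|f\|_2^2/M^2)$. This equals $\Omega(\mathcal F)$ only under one of two extra conditions:
\begin{itemize}
\item each piece is bounded by $\mathcal{O}(\|f\|_{\max})$ on the whole rescaled interval (true for the paper's payoffs when the strike is away from the grid edge); or
\item each piece gets its own adapted coordinate, which a single shared LCU of $Z$'s cannot supply, since its subnormalization is fixed by the coefficients, not by the piece.
\end{itemize}
For the same reason you must map $[a,b]$ \emph{onto} $[-1,1]$ with subnormalization exactly $1$; the LCU over $I,Z_1,\dots,Z_n$ with $x_k=-1+2k/2^n$ achieves this. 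An $\mathcal{O}(1)$ subnormalization, or a map onto, say, $[0,1]$, breaks the bound even for $G=1$, because the parity split evaluates $f$ at $-x$.

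\textbf{Two smaller points.} First, the per-piece phases in Step~3 must be quantum-controlled by the flag register, not classically controlled, since $|k\rangle$ is in superposition across pieces. With monotone flags $[k>K_h]$ you can apply increments $\phi^{(h+1)}_j-\phi^{(h)}_j$ controlled on flag $h$, which costs $\mathcal{O}(G)$ gates per layer. Second, your qubit count of $\lceil\log_2 n\rceil+3$ plus $n-1$ omits several things:
\begin{itemize}
\item the $G$ flags (or a $\lceil\log_2 G\rceil$-qubit piece label), which must persist through the whole QSVT sequence;
\item the differing number of $U_X$ calls needed by the even and odd parts;
\item the extra qubit used by the real-part QSVT construction.
\end{itemize}
The ancilla accounting needs to be redone.
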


The probability of success $p_\text{success}=\mathcal{O}(\mathcal{F})$ in Proposition~\ref{theorem:piecewise_poly} can be further boosted using the method described in Appendix~\ref{app:amplification}. Thus, one can implement $(\mathcal{O}(1),c,\epsilon_\text{evol})$ using $\approx 1/\sqrt{\mathcal{F}}$ queries to $U_f$ and $U_f^\dagger$.

The vanilla option payoff $\max(0,S-K)$ or
$\max(0,K-S)$ is straightforward to prepare using Proposition~\ref{theorem:piecewise_poly} by choosing the number of
pieces $G=2$ and the maximum polynomial degree
$Q_{\max}=1$, resulting in the total gate complexity
$\mathcal{O}(n\log n)$.

Next, we consider the
$\Phi(t=T,\xi)$ from Eq.~\eqref{eq:initial schrodingerisation}
and depicted in Fig.~\ref{fig:cutoff_theta}. From the
form of this function, we see that it is
smooth and can be well approximated by polynomials. Our
numerical tests \cite{guseynov_gitlab_placeholder} show that if we use
a single-piece $G=1$ polynomial to approximate
$\Phi(t=T,\xi)$, then achieving fidelity over $0.99$
requires $Q_{\max}=27$; however, if we use
a multi-piece function and minimize $Q_{\max}+G$
to achieve fidelity over $0.99$ the resulting parameters are $Q_{\max}=5,G=4$.

\subsection{Basket option payoffs}

To implement the basket option payoffs from
Table~\ref{tab:vanilla payoffs}, we rely on the
main technique underlying Proposition~\ref{theorem:piecewise_poly}, often
referred to as the \emph{indicator qubit} method~\cite{guseynov2024explicit_Quantum_state_prep}. 
In this approach, we reversibly (i.e., via a unitary operation) set an 
auxiliary qubit to the state $\ket{1}$ on a chosen region of interest:
\begin{equation}
    \forall \ket{\psi}=\sum_{i=0}^{2^{n}-1}\psi_i\ket{i},\qquad
    U^k_\text{indic}\left(\ket{\psi}^n\ket{0}\right)
    =\sum_{i=0}^{k-1}\psi_i\ket{i}^n\ket{0}
    +\sum_{i=k}^{2^n-1}\psi_i\ket{i}^n\ket{1}.
    \label{eq:indicator qubit}
\end{equation}
Ref.~\cite{guseynov2024explicit_Quantum_state_prep} explicitly demonstrates how to implement 
$U^k_\text{indic}$ for any $k$ with gate complexity $\mathcal{O}(n)$.

\begin{table}[h!]
    \centering
    \begin{tabular}{lccc}
        \toprule
        Function name & Formula & Gate cost & Probability of success $\mathcal{F}$ \\
        \midrule
        Vanilla 
        & $\max(0,S_T - K)$
        & $\mathcal{O}(n \log n)$ 
        & $\mathcal{O}(1)$ \\[2pt]
        Basket 
        & $\max\bigl(w^\top S_T - K, 0\bigr)$ 
        & $\mathcal{O}(d\,n \log n)$ 
        & $\mathcal{O}(1)$ \\[2pt]
        Spread 
        & $\max\bigl(w_\uparrow^\top S_T - w_\downarrow^\top S_T - K, 0\bigr)$ 
        & $\mathcal{O}(d\,n \log n)$ 
        & $\mathcal{O}(1/d)$ \\[2pt]
        Best 
        & $\max\bigl(\max_i S_{i,T} - K, 0\bigr)$ 
        & $\mathcal{O}(d^2 n + d\,n \log n)$ 
        & $\mathcal{O}(1)$ \\[2pt]
        Worst 
        & $\max\bigl(\min_i S_{i,T} - K, 0\bigr)$ 
        & $\mathcal{O}(d^2 n + d\,n \log n)$ 
        & $\mathcal{O}(1/d^2)$ \\[2pt]
        Smooth cut-off 
        & $\Phi(T,\xi) = \zeta(\xi)\,e^{-\xi}$ 
        & $\mathcal{O}(n \log n)$ 
        & $\mathcal{O}(1)$ \\
        \bottomrule
    \end{tabular}
    
    \caption{Complexity of creating quantum states corresponding to popular payoff 
    call option types (the result for put types is similar) and the smooth cut-off used for Schr\"odingerisation. Here $d$ denotes 
    the number of assets in the contract, and we assume that all weight vectors 
    are normalised so that $\sum_i w_i = 1$ (and similarly 
    $\sum_i w_{\uparrow,i} = \sum_i w_{\downarrow,i} = 1$ for the spread payoff). 
    The corresponding payoff definitions are given in 
    Table~\ref{tab:vanilla payoffs}. The third column 
    reports the gate complexity of a single state-preparation attempt, while the 
    fourth column shows the associated success probability (proportional to filling ratio $\mathcal{F}$). The 
    overall cost of preparing the desired quantum state is determined by both 
    quantities: it scales with the gate cost per attempt and the number of 
    repetitions required, which is inversely proportional to the success 
    probability. Using robust oblivious amplitude amplification%
    ~\cite{BCCKS14}, one can boost the success probability 
    close to $1$, effectively trading additional overhead in 
    gate complexity for near-deterministic state preparation.}
    \label{table:quantum state prep of payoffs}
\end{table}

This technique is essential for building piecewise functions. In
particular, using the indicator qubit we can mark regions where
$A(S_T) > B(S_T)$ for payoffs of the form
$f(S_T)=\max(A(S_T),B(S_T))$, as in
Table~\ref{tab:vanilla payoffs}, allowing us
to prepare $A(S_T)$ and $B(S_T)$ independently. In addition, for the
payoffs $f_\text{best}$ and $f_\text{worst}$ we can mark the regions
where $\lVert S_T \rVert_{\pm\infty} = S_T^{i}$ by applying the
indicator qubit technique $d$ times.

By applying a controlled version of Proposition~\ref{theorem:piecewise_poly}
to the marked regions, and using the LCU technique to construct
$w^\top S_T$, we obtain the gate complexities for all payoff types
considered in this paper, as summarized in
Table~\ref{table:quantum state prep of payoffs}.

\FloatBarrier
\section{Information retrieval}\label{section:readout}

In this paper we consider a quantum algorithm based on
PDE solving to price multi-asset Heston and Black–Scholes
models. The output of these algorithms is a unitary
operation $U$ which acts as a block encoding.
By applying $U$ to $\ket{0}^{W+a}$
we prepare a quantum state $\ket{\psi}$ that encodes
information about the option prices. This Appendix is
dedicated to retrieving the relevant data, namely the
option prices $V$, from the resulting final state.

For simplicity, in this Appendix we denote the number
of system qubits by $W=dn$ for Black--Scholes and
$W=2dn$ for Heston. Sections~\ref{section:evoltuion non conservative}
and~\ref{section:state prep} provide a systematic way
to prepare the block-encoding $U$ such that
\begin{equation}
    (\Pi\otimes I^{\otimes W})U\ket{0}^{W+a}=\mathcal{N}_{\varsigma}\ket{\varsigma}^a\otimes\ket{\psi}^W.
    \label{eq:readout schematic intro IX}
\end{equation}
Postselection is not a trace-preserving channel, hence the factor $\mathcal{N}_\varsigma$. We assume that state preparation and the evolution block-encodings are boosted using the oblivious amplitude amplification (see Appendix~\ref{app:amplification}), resulting in $\mathcal{N}_\varsigma=\mathcal{O}(1)$ which is assumed to be a known value. 

The projector $\Pi=\ket{0}^{a_\text{prep}}\bra{0}^{a_\text{prep}}\otimes
\ket{0}^{c}\bra{0}^{c}\otimes \Pi_{\xi>0}$ incorporates all
projectors corresponding to post-selections induced by:
\begin{enumerate}
    \item $\ket{0}^{a_\text{prep}}\bra{0}^{a_\text{prep}}$ with
    $a_\text{prep}=d\lceil \log_2 n\rceil + 3d$ arising from
    the quantum state preparation (Proposition~\ref{theorem:piecewise_poly}).
    \item $\ket{0}^{c}\bra{0}^{c}$ with $c =  d \lceil \log_2 n \rceil+
    \lceil \log_2 n_\xi \rceil +
    \lceil \log_2 \mathfrak{s} \rceil + \lceil \log_2 \eta \rceil + 4d + 7$
    coming from the evolution stage
    (Proposition~\ref{theorem:QuantumHamiltonianSimulation}), where
    $d\rightarrow2d$ for the Heston model.
    \item $\Pi_{\xi>0}=\sum_{k=k_\lambda}^{2^{n_\xi}-1}\ket{k}^{n_\xi}
    \bra{k}^{n_\xi}$ with $k_\lambda=\lceil\lambda_{S_1}^{\min}/\Delta_\xi\rceil$
    induced by Schr\"odingerisation (see
    Eq.~\eqref{eq:postselection Schrodingerisation}).
\end{enumerate}
The state $\ket{\varsigma}^a$ indicates a successful projection to the subspace generated by the $\Pi$ and up to $\mathcal{O}(\log\frac{1}{\epsilon_\text{Schr}})$ has the form:
\begin{equation}
    \ket{\varsigma}^a=\ket{0}^{a_\text{prep}}\otimes\ket{0}^c\otimes\frac{1}{\sqrt{\sum_{m=k_\lambda}^{2^{n_\xi}-1}e^{-2m\Delta \xi}}}\sum_{k=k_\lambda}^{2^{n_\xi}-1}e^{-k\Delta \xi}\ket{k}^{n_\xi}.
    \label{eq:varsigma state}
\end{equation}
We note that this state is easy to prepare using Proposition~\ref{theorem:piecewise_poly} on the $n_\xi$-qubit register, we denote such unitary as $S_\varsigma$.

The state $\ket{\psi}^W$ describes the option prices at
a specific time $t=0$,
\begin{equation}
    \ket{\psi}^W=\frac{1}{N_V(t=0)}\sum_{i=0}^{2^W-1} V(t=0,\vec S_i,\vec v_i)\ket{i}^W:=\sum_{i=0}^{2^W-1}\psi_i\ket{i}^W,
    \label{eq:state after postselection IX}
\end{equation}
where $N_V(t)=\sqrt{\sum_{i}\abs{V(t,\vec S_i,\vec v_i)}^2}$
is a normalization factor. In
Eq.~\eqref{eq:state after postselection IX} we use a
multi-index grid system: for a given index $i$ we fix
all $d$ stock prices $\vec S$ and all $d$ volatilities
$\vec v$ at a specific grid point, and the sum
$\sum_{i}$ runs over all possible grid points.
For the Black–Scholes model the volatilities
$\vec v$ are omitted.

Typically, in derivative pricing one is interested
in the contract value at a single specific choice
of stock prices $\vec S_q$ and volatilities
$\vec v_q$ because derivative prices are typically evaluated near the current state \((\vec S_0,\vec v_0)\), i.e. at initial time \(t=0\).
In our setting this corresponds to
$V(t=0,\vec S_q,\vec v_q)=\psi_qN_V(t=0)$. In this
Appendix we show how the quantities $\psi_q$ and
$N_V(t=0)$ can be estimated, and we provide the
corresponding error bounds and complexity analysis.

\subsection{Single amplitude estimation}
\label{subsec:single-amplitude-estimation}

In this subsection we provide an algorithm on how to estimate a fixed coefficient $\psi_q$ of the pricing state $\ket{\psi}^W=\sum_{i=0}^{2^W-1}\psi_i\ket{i}^W$. We recall that the block-encoding $U$ satisfies
\begin{equation}
(\Pi\otimes I^{\otimes W})U\ket{0}^{W+a}
=\mathcal{N}_{\varsigma}\ket{\varsigma}^a\otimes\ket{\psi}^W,
\label{eq:readout-recall1}
\end{equation}
where $\mathcal{N}_{\varsigma}=\mathcal{O}(1)$ is assumed known (computed from the Sections~\ref{section:evoltuion non conservative},~\ref{section:state prep}) and $\ket{\varsigma}^a$ lies in the image of $\Pi$. Next, we fix an index $q$ and let $P_q$ be a basis-preparation unitary on the $W$-qubit system register such that $P_q\ket{0}^{W}=\ket{q}^{W}$. We note that such a unitary can be prepared with up to $W$ Pauli $X$ operators.

Next, we assume access to a unitary $S_\varsigma$ acting on $(a+b_\xi)$ qubits, where $b_\xi=\left\lceil\log_2 n_\xi\right\rceil+3$ such that
\begin{equation}
\bigl(\ket{0}\!\bra{0}^{\,b_\xi}\otimes I^{\otimes a}\bigr)\,S_\varsigma\,\ket{0}^{a+b_\xi}
=
\mathcal{N}_{\varsigma,\xi}\,\ket{0}^{b_\xi}\ket{\varsigma}^{a},
\label{eq:Sv-block}
\end{equation}
with a known constant $\mathcal{N}_{\varsigma,\xi}=\mathcal{O}(1)$.

We define the reference preparation on the joint $(a+b_\xi+W)$-qubit work register by
\begin{equation}
V_q := S_\varsigma\otimes P_q,
\qquad
V_q\ket{0}^{a+b_\xi+W}=S_\varsigma\ket{0}^{a+b_\xi}\otimes\ket{q}^{W}.
\label{eq:Vq}
\end{equation}
We now introduce a single ancilla qubit $b$ and construct the multiplexed preparation operator
\begin{equation}
M_q:=\ket{0}\!\bra{0}\otimes V_q+\ket{1}\!\bra{1}\otimes\bigl(I^{\otimes b_\xi}\otimes U\bigr),
\label{eq:Mq}
\end{equation}
together with the Hadamard-test query unitary
\begin{equation}
\mathcal{A}_q:=(H\otimes I)\,M_q\,(H\otimes I),
\label{eq:Aq}
\end{equation}
which is applied to the initial state $\ket{0}\ket{0}^{a+b_\xi+W}$. A direct calculation yields
\begin{equation}
\mathcal{A}_q\ket{0}\ket{0}^{a+b_\xi+W}
=
\frac12\left[
\ket{0}\left(V_q\ket{0}^{a+b_\xi+W}+\bigl(I^{\otimes b_\xi}\otimes U\bigr)\ket{0}^{a+b_\xi+W}\right)
+
\ket{1}\left(V_q\ket{0}^{a+b_\xi+W}-\bigl(I^{\otimes b_\xi}\otimes U\bigr)\ket{0}^{a+b_\xi+W}\right)
\right].
\label{eq:Aq-action}
\end{equation}
Therefore,
\begin{align}
\Pr[b=0]
&=
\left\|\frac{V_q\ket{0}^{a+b_\xi+W}+\bigl(I^{\otimes b_\xi}\otimes U\bigr)\ket{0}^{a+b_\xi+W}}{2}\right\|^2
\nonumber\\
&=\frac{1+\Re\!\left(\bra{0}^{a+b_\xi+W}\,V_q^\dagger\bigl(I^{\otimes b_\xi}\otimes U\bigr)\ket{0}^{a+b_\xi+W}\right)}{2}.
\label{eq:prob-b0-general}
\end{align}
Using $V_q^\dagger=S_\varsigma^\dagger\otimes P_q^\dagger$, $P_q^\dagger\ket{0}^{W}=\ket{q}^{W}$, and the block-encoding property
\eqref{eq:Sv-block}, the overlap reduces to
\begin{equation}
\bra{0}^{a+b_\xi+W}\,V_q^\dagger\bigl(I^{\otimes b_\xi}\otimes U\bigr)\ket{0}^{a+b_\xi+W}
=
\mathcal{N}_{\varsigma,\xi}\,\bra{\varsigma}^{a}\bra{q}^{W}U\ket{0}^{W+a}.
\label{eq:overlap-step1}
\end{equation}
Since $\ket{\varsigma}^{a}$ lies in the image of $\Pi$, we have $\bra{\varsigma}^{a}=\bra{\varsigma}^{a}\Pi$, and hence
\begin{equation}
\bra{\varsigma}^{a}\bra{q}^{W}U\ket{0}^{W+a}
=
\bra{\varsigma}^{a}\bra{q}^{W}(\Pi\otimes I^{\otimes W})U\ket{0}^{W+a}
=\mathcal{N}_{\varsigma}\,\bra{q}^{W}\ket{\psi}^{W}
=\mathcal{N}_{\varsigma}\psi_q.
\label{eq:overlap-step2}
\end{equation}
Combining \eqref{eq:prob-b0-general}--\eqref{eq:overlap-step2} gives
\begin{equation}
\Pr[b=0]=\frac{1+\mathcal{N}_{\varsigma,\xi}\,\mathcal{N}_{\varsigma}\,\Re(\psi_q)}{2}.
\label{eq:prob-b0-final}
\end{equation}
In particular, when $\psi_q\in\mathbb{R}$ corresponding to our setting,
\begin{equation}
\psi_q=\frac{2\,\Pr[b=0]-1}{\mathcal{N}_{\varsigma,\xi}\,\mathcal{N}_{\varsigma}}.
\label{eq:psi-from-prob}
\end{equation}
Thus amplitude estimation applied to the query unitary $\mathcal{A}_q$ with the ``good'' event $b=0$ yields an estimator for
$\psi_q$ via \eqref{eq:psi-from-prob}, see Fig.~\ref{fig:quantum_scheme_for_A_q}.

\begin{figure}[h!]
    \centering
    \includegraphics[width=0.45\linewidth]{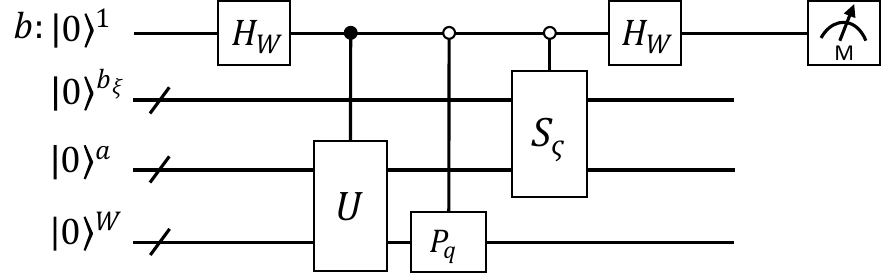}
    \caption{A circuit view for the Hadamard-test query unitary $\mathcal{A}_q$, see Eq.~\eqref{eq:Aq}. The block-encoding unitary $U$ is constructed using the methodology from the Sections~\ref{section:evoltuion non conservative},\ref{section:state prep}, and is given in Eq.~\ref{eq:readout-recall1}; the unitary $P_q$ is used to prepare a computational basis state $\ket{q}^W$ and consists of up to $W$ Pauli $X$ gates; the state preparation block-encoding $S_\varsigma$ is used to prepare the terminal state of the Schr\"odingerisation dimension, see Eq.~\eqref{eq:Sv-block}.}
\label{fig:quantum_scheme_for_A_q}
\end{figure}

\begin{proposition}[Modified iterative quantum amplitude estimation {\cite{doi:10.1137/1.9781611977561.ch12}}]
\label{thm:miqae-psi_q}
Fix $q$ and let $\mathcal{A}_q$ be the query unitary from \eqref{eq:Aq}, acting on the ancilla qubit $b$ and the work register of size $a+b_\xi+W$.
Define
\begin{equation}
a_q:=\Pr[b=0].
\end{equation}
For any $\epsilon>0$ and $\delta\in(0,1)$, the modified IQAE algorithm applied to $\mathcal{A}_q$ returns an estimate $\hat a_q$ such that
\begin{equation}
\Pr\!\left[\,|\hat a_q-a_q|<\epsilon\,\right]\ge 1-\delta,
\end{equation}
using $\mathcal{O}\!\bigl(\frac{1}{\epsilon}\log\frac{1}{\delta}\bigr)$ controlled applications of $\mathcal{A}_q$ (equivalently, of the associated Grover iterate).
\end{proposition}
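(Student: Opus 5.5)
This result is cited from \cite{doi:10.1137/1.9781611977561.ch12}; the plan below checks that $\mathcal{A}_q$ fits its hypotheses and sketches the standard iterative argument.

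\textbf{Reduction to amplitude estimation.} Write $\mathcal{A}_q\ket{0}\ket{0}^{a+b_\xi+W}=\sqrt{a_q}\,\ket{\Psi_{\rm good}}+\sqrt{1-a_q}\,\ket{\Psi_{\rm bad}}$. Here $\ket{\Psi_{\rm good}}$ is the normalized component with $b=0$ and $\ket{\Psi_{\rm bad}}$ the component with $b=1$. By Eq.~\eqref{eq:prob-b0-final}, $a_q=\Pr[b=0]$ is exactly the squared norm of the good component. Set $a_q=\sin^2\theta_q$ with $\theta_q\in[0,\pi/2]$. Define the Grover iterate
\[
\mathcal{Q}_q=-\mathcal{A}_q S_0\mathcal{A}_q^\dagger S_{\rm good},
\]
where $S_{\rm good}=Z_b\otimes I$ reflects about the good subspace and $S_0$ reflects about the all-zeros state. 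Both reflections cost $\mathcal{O}(a+b_\xi+W)$ gates, so each Grover step costs a constant number of calls to $\mathcal{A}_q$ and $\mathcal{A}_q^\dagger$. The usual two-dimensional invariant-subspace argument gives
\[
\Pr\bigl[b=0 \text{ after } \mathcal{Q}_q^k\mathcal{A}_q\bigr]=\sin^2\bigl((2k+1)\theta_q\bigr).
\]

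\textbf{Iterative estimation.} Keep a confidence interval $[\theta_l,\theta_u]$ for $\theta_q$, initialized to $[0,\pi/2]$. In each round, choose the largest $k_i$ such that $(2k_i+1)[\theta_l,\theta_u]$ still lies inside a single half-period where $\sin^2$ is monotone. Sample $\mathcal{Q}_q^{k_i}\mathcal{A}_q$ a number of times $N_i$ of order $\log(1/\delta_i)$. Invert $\sin^2$ on that branch, and apply a Chernoff--Hoeffding bound to get a new confidence interval whose width shrinks by a constant factor. The $k_i$ then grow roughly geometrically.

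\textbf{Stopping and counting queries.} Stop once the width in $\theta$ is $\mathcal{O}(\epsilon)$; this gives $|\hat a_q-a_q|<\epsilon$, because $|d\sin^2\theta/d\theta|\le 1$. The largest power needed is $k_{\max}=\mathcal{O}(1/\epsilon)$, and the total query count is $\sum_i k_iN_i$. The union bound requires $\sum_i\delta_i\le\delta$.

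\textbf{Main obstacle.} The hard step is splitting $\delta$ across rounds without picking up an extra $\log\log(1/\epsilon)$ factor. The modification in the cited work handles this by letting the per-round failure budgets decay geometrically with $k_i$. With that choice the sum $\sum_i k_iN_i$ is dominated by its last term, which gives the claimed $\mathcal{O}\bigl(\epsilon^{-1}\log(1/\delta)\bigr)$ controlled applications.
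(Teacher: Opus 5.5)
Your route is the same as the paper's. The paper gives no proof of its own; it imports the bound from the cited mIQAE work. Its only paper-specific input is that $\mathcal{A}_q$ is a state-preparation unitary whose good event $b=0$ has probability $a_q$ given by Eq.~\eqref{eq:prob-b0-final}, which is exactly your reduction step. The rest of your sketch is fine: the two-dimensional rotation, the monotone-branch choice of $k_i$, $k_{\max}=\mathcal{O}(1/\epsilon)$, and $|d\sin^2\theta/d\theta|\le 1$. The one thing to fix is the step you yourself identify as the crux, because as literally written (``decay geometrically with $k_i$'') it points the wrong way.

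The per-round failure budgets must \emph{increase} with $k_i$. Take $\delta_i=c\,\delta\,k_i/k_{\max}$. Then the budgets shrink geometrically toward the early, cheap rounds, while the final, expensive round keeps a constant fraction of $\delta$. With $k_{i+1}\ge\rho k_i$ for some $\rho>1$, you get $\sum_i\delta_i\le c\,\delta\,\rho/(\rho-1)\le\delta$ for small enough $c$. Writing $k_i\approx k_{\max}\rho^{-j}$ with $j=R-i$ gives
\[
\sum_i k_iN_i=\mathcal{O}\Bigl(\sum_{j\ge 0}k_{\max}\rho^{-j}\bigl[\log\tfrac{1}{\delta}+j\log\rho\bigr]\Bigr)=\mathcal{O}\Bigl(\tfrac{1}{\epsilon}\log\tfrac{1}{\delta}\Bigr).
\]
The extra $\log(k_{\max}/k_i)$ in $N_i$ is harmless because it is only large where $k_i$ is geometrically small.

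If instead the budgets decrease as $k_i$ grows (e.g.\ $\delta_i=\delta 2^{-i}$), the last round has $\delta_R=\delta\cdot\mathrm{poly}(\epsilon)$. That round alone then costs $\Theta\bigl(\epsilon^{-1}(\log\tfrac{1}{\delta}+\log\tfrac{1}{\epsilon})\bigr)$. This is worse even than the uniform split $\delta/R$ of the original IQAE, which is the source of the $\log\log(1/\epsilon)$ factor. So ``dominated by its last term'' delivers the claimed $\mathcal{O}(\epsilon^{-1}\log(1/\delta))$ only under the increasing schedule $\delta_i\propto k_i$.
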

In our setting,
\begin{equation}
a_q=\frac{1+\mathcal{N}_{\varsigma,\xi}\mathcal{N}_{\varsigma}\,\psi_q}{2},
\qquad
\psi_q=\frac{2a_q-1}{\mathcal{N}_{\varsigma,\xi}\mathcal{N}_{\varsigma}}
\end{equation}
then using Proposition~\ref{thm:miqae-psi_q} one can compute $\psi_q$ using $\mathcal{O}(\frac{1}{\epsilon}\log \frac{1}{\delta})$ queries to $\mathcal{A}_q$.

\subsection{Normalization factor propagation}

Let us now address the problem of finding
$N_V(t)=\sqrt{\sum_{i}\abs{V(t,\vec S_i,\vec v_i)}^2}$ for $0\leq t\leq T$.
In this paper we use this normalization factor to find the option values using
Eq.~\eqref{eq:state after postselection IX}. Proposition~\ref{thm:miqae-psi_q}
allows us to find $\psi_q$; which together with $N_V(t=0)$ allows us to find
the option prices.

We first remind the reader that the option value at maturity is the payoff
$V(t=T,\vec S,\vec v)=f(\vec S)$; hence the terminal normalization factor is
\begin{equation}
\label{eq:NV_terminal}
N_V(t=T)=\sqrt{\sum_i |f(\vec S_i)|^2}.
\end{equation}
On a homogeneous $W/n$-dimensional grid (with step $h=\Delta x$ in each coordinate) over the truncated cube
$\Omega := [0,\vec S_{\max}]\times[0,\vec v_{\max}]$, we use the approximation
\begin{equation}
\label{eq:NV_terminal_integral_approx}
\sum_i |f(\vec S_i)|^2 \;\approx\; \frac{1}{h^{W/n}}\int_{\Omega} |f(\vec S)|^2\, d\vec S\, d\vec v .
\end{equation}
Moreover, assuming $g(\vec S,\vec v):=|f(\vec S)|^2$ is Lipschitz on $\Omega$ with constant $L_g$, the corresponding
\emph{relative} error can be bounded as
\begin{equation}
\label{eq:NV_terminal_integral_relbound}
\frac{\left|\sum_i g(\vec S_i,\vec v_i)-\frac{1}{h^{W/n}}\int_{\Omega} g(\vec S,\vec v)\, d\vec S\, d\vec v\right|}
{\frac{1}{h^{W/n}}\int_{\Omega} g(\vec S,\vec v)\, d\vec S\, d\vec v}
\;\le\;
L_g\,\sqrt{W/n}\,h\,
\frac{\mathrm{Vol}(\Omega)}{\int_{\Omega} g(\vec S,\vec v)\, d\vec S\, d\vec v}
=\mathcal{O}(2^{-n}\sqrt{W/n}),
\end{equation}
where we used $\int_{\Omega} g(\vec S,\vec v)\, d\vec S\, d\vec v>0$, and assumed the discretization step satisfies
$h\sim 2^{-n}$, where $n$ is the number of associated qubits. Later, we assume that the value $N_V(t=T)$ is computed
analytically by the integral \eqref{eq:NV_terminal_integral_approx} using the explicit (and simply integrable) form of the
payoff functions in Table~\ref{tab:vanilla payoffs}.

Next, we propagate the $N_V(t)$ by using the result given in \cite{schrodingerisation_optimal_queries} (Section~2.3).
The probability of postselecting $\xi>\lambda_{S_1}^{\min}$ is
\begin{equation}
\label{eq:probability postselection Schrodingerisation}
\Pr[\xi>\lambda_{S_1}^{\min}]
=
\frac{\int_{\lambda^{\min}_{S_1} T}^{+\infty}\abs{\zeta(\xi)e^{-\xi}}^2\,d\xi}
{\int_{-\infty}^{+\infty}\abs{\zeta(\xi)e^{-\xi}}^2\,d\xi}\,
\frac{N_V(t=0)^2}{N_V(t=T)^2}
=
C_{\lambda_S}\frac{N_V(t=0)^2}{N_V(t=T)^2},
\end{equation}
where $\zeta(\xi)e^{-\xi}$ is the initial state in the Schr\"odingerisation dimension, see Eq.~\eqref{eq:initial schrodingerisation}.
On the other hand, the value $\Pr[\xi>\lambda_{S_1}^{\min}]$ can be estimated in two different ways:
\begin{enumerate}
    \item One of them is to directly record the number of successful postselections in the $n_\xi$-qubit register from the quantum scheme
    in Fig.~\ref{fig:quantum_scheme_without_postselection}. Denoting by $\hat p$ the empirical success probability, the corresponding
    Monte-Carlo (shot-noise) scaling is
    \begin{equation}
    \label{eq:MC_shots}
    N_{\text{shots}}=\mathcal{O}\!\left(\frac{1}{\epsilon_{\text{shot}}^2}\log \frac{1}{\delta}\right),
    \qquad
    \epsilon_{\text{shot}}:=\frac{|\hat p-\Pr[\xi>\lambda_{S_1}^{\min}]|}{\Pr[\xi>\lambda_{S_1}^{\min}]}.
    \end{equation}
    The given dependence is not optimal; however, we note that in our work the scheme in Fig.~\ref{fig:quantum_scheme_without_postselection}
    is used multiple times. Therefore, we can record many samples without specifically dedicated experiment.

    \item Another way is to use Proposition~\ref{thm:miqae-psi_q} with the whole quantum circuit as an observable $\mathcal{A}$.
    The resulting error scaling is
    \begin{equation}
    \label{eq:AE_shots}
    N_{\text{shots}}=\mathcal{O}\!\left(\frac{1}{\epsilon_{\text{shot}}}\log \frac{1}{\delta}\right),
    \end{equation}
    which is an optimal scaling.
\end{enumerate}

Finally, the normalization factor $N_V(t=0)$ is estimated by rearranging Eq.~\eqref{eq:probability postselection Schrodingerisation} as
\begin{equation}
\label{eq:NV0_from_postselection}
N_V(t=0)=N_V(t=T)\sqrt{\frac{\Pr[\xi>\lambda_{S_1}^{\min}]}{C_{\lambda_S}}}.
\end{equation}
Consequently, if $\widehat N_V(t=T)$ has relative error $\epsilon_{\mathrm{term}}=\mathcal{O}(2^{-n}\sqrt{W/n})$ (see Eq.~\eqref{eq:NV_terminal_integral_relbound}) and
$\Pr[\xi>\lambda_{S_1}^{\min}]$ is estimated with relative error $\epsilon_{\text{shot}}$, then the induced relative error in
\eqref{eq:NV0_from_postselection} satisfies,
\begin{equation}
\label{eq:NV0_error_propagation}
\frac{|\widehat N_V(t=0)-N_V(t=0)|}{N_V(t=0)}\;\approx\;\epsilon_{\mathrm{term}}+\frac{1}{2}\epsilon_{\text{shot}}.
\end{equation}

\subsection{Complexity scaling for readout}
\label{subsec:readout_complexity}

Now, we estimate how many applications of $U$ from Eq.~\eqref{eq:readout schematic intro IX} is necessary to estimate the single value $V(t=0,\vec S_q,\vec v_q)$ with error $\epsilon_V$. Fix a target grid point $q$ corresponding to $(\vec S_q,\vec v_q)$. The readout step is
\begin{equation}
\label{eq:readout-recall+norm}
V(t=0,\vec S_q,\vec v_q)=\psi_q\,N_V(t=0).
\end{equation}
As in the Hadamard-test construction, we estimate $\psi_q$ via mIQAE applied to the query unitary $\mathcal{A}_q$ with ``good'' event $b=0$:
\begin{equation}
a_q:=\Pr[b=0]=\frac{1+N_{\varsigma,\xi}N_{\varsigma}\,\psi_q}{2},
\qquad
\psi_q=\frac{2a_q-1}{N_{\varsigma,\xi}N_{\varsigma}}.
\end{equation}
By Proposition~\ref{thm:miqae-psi_q}, mIQAE outputs $\hat a_q$ such that
$|\hat a_q-a_q|\le \epsilon_{\mathrm{AE}}$ with probability at least $1-\delta$ using
$\mathcal{O}\!\left(\frac{1}{\epsilon_{\mathrm{AE}}}\log\frac{1}{\delta}\right)$ controlled calls to $\mathcal{A}_q$.
Therefore,
\begin{equation}
|\hat\psi_q-\psi_q|\le \frac{2\epsilon_{\mathrm{AE}}}{N_{\varsigma,\xi}N_{\varsigma}}
\qquad\Longrightarrow\qquad
|\hat V_q-V_q|
\le
N_V(t=0)\,\frac{2\epsilon_{\mathrm{AE}}}{N_{\varsigma,\xi}N_{\varsigma}},
\end{equation}
where $V_q:=V(t=0,\vec S_q,\vec v_q)$. Imposing $|\hat V_q-V_q|\le \epsilon_V$ yields the choice
\begin{equation}
\epsilon_{\mathrm{AE}}=\frac{N_{\varsigma,\xi}N_{\varsigma}}{2}\,\frac{\epsilon_V}{N_V(t=0)}.
\end{equation}
Since each controlled application of $\mathcal{A}_q$ contains a single controlled use of the block-encoding unitary $U$,
the number of applications of $U$ required to achieve error $\epsilon_V$ (with failure probability at most $\delta$) is
\begin{equation}
\label{eq:U_complexity_readout}
N_U=\mathcal{O}\!\left(N_V(t=0)\,
\frac{1}{\epsilon_V}\,\log\frac{1}{\delta}\right).
\end{equation}

To make the dependence on $W$ explicit, assume a homogeneous $W/n$-dimensional grid ($d$ for Black--Scholes and $2d$ for Heston) with $n$ qubits per dimension, hence
\begin{equation}
N_{\mathrm{grid}} = 2^{W}.
\end{equation}
If $|V(t=0,\vec S_i,\vec v_i)|=\mathcal{O}(1)$ on the grid (in particular, bounded away from scaling with $n$ and $W$), then
\begin{equation}
N_V(t=0)=\left(\sum_i |V(t=0,\vec S_i,\vec v_i)|^2\right)^{1/2}
=\mathcal{O}\!\left(\sqrt{N_{\mathrm{grid}}}\right)
=\mathcal{O}\!\left(2^{\frac{W}{2}}\right),
\end{equation}
and therefore
\begin{equation}
\label{eq:U_complexity_readout_nW}
N_U=\mathcal{O}\!\left(2^{\frac{W}{2}}\,
\frac{1}{\epsilon_V}\,\log\frac{1}{\delta}\right).
\end{equation}

In Appendix~\ref{section:evoltuion non conservative} the evolution stage attains an exponential improvement in the dimension $d$
(while remaining polynomial in $n$). Although the PDE evolution can be simulated efficiently, the retrieval in Eq.~\eqref{eq:readout-recall+norm} requires estimating a single amplitude and multiplying by $N_V(t=0)=\Theta(2^{W/2})$. This limits the end-to-end advantage, but still
permits a meaningful speedup in $d$ and $n$.

\section{Numerical methods and classical baselines}
\label{appendix:numerical methods}
Throughout this Appendix we assume the reduced model is already
homogeneous and time independent, as stated in Eq.~\eqref{eq:MAINhomog-ode}.
After spatial semi-discretization in $d$ dimensions, we obtain the
linear ODE system on $N^{d}$ degrees of freedom,
\begin{equation}
\label{eqn:appendix-homog-ode-multidim}
\frac{d u(t)}{d t}=A\,u(t),
\qquad
u(0)=u_{0},
\qquad
A\in\mathbb{R}^{N^{d}\times N^{d}}.
\end{equation}
Here $N=2^{n}$ is the number of grid points per spatial dimension.
The matrix $A$ is sparse due to local finite-difference stencils;
we denote by $\mathfrak{s}$ the row sparsity per dimension, so the
full $d$-dimensional row sparsity scales as $\mathcal{O}(d\mathfrak{s})$.

\paragraph{(i) Finite-difference time stepping \cite{LeVeque2007FDM} (implicit Euler, $p=1$).}
We discretise time by $t_k=k\Delta t$, $\Delta t=T/N_{t}$, and compute
$u_k\approx u(t_k)$ using the implicit Euler update
\begin{equation}
\label{eqn:appendix-implicit-euler-p1}
\bigl(I-\Delta t\,A\bigr)u_{k}=u_{k-1},
\qquad
k=1,2,\ldots,N_{t}.
\end{equation}
For diffusive/parabolic discretizations one typically enforces
$\Delta t=\Theta(\Delta x^{2})$ (matching first-order temporal error with second-order spatial error), which implies $N_{t}=\Theta(N^{2})$.
The dominant per-step work is the sparse linear solve / banded algebra,
scaling as $\mathcal{O}(d^2\mathfrak{s}N^{d})$ with memory
$\mathcal{O}(d^2\mathfrak{s}N^{d})$, where $\mathfrak{s}$ is per dimensional sparsity and is defined by the degree of accuracy of the finite difference scheme.
Therefore the overall arithmetic complexity decomposes as
\begin{equation}
\label{eqn:appendix-fd-total-multidim}
\underbrace{\mathcal{O}(d^2\mathfrak{s}N^{d})}_{\text{spatial operator}}
\times
\underbrace{\mathcal{O}(N^{2})}_{\text{time steps}}
=
\mathcal{O}(d^2\mathfrak{s}N^{d+2}).
\end{equation}

\paragraph{(ii) Exponential integration (first-order exponential action) \cite{Higham2008FunctionsOfMatrices}.}
A second classical baseline applies the matrix exponential propagator
$e^{AT}$ to the initial vector, using matrix--exponential action
methods or a product of short-time factors.
A first-order approximation over a short interval is
\begin{equation}
\label{eqn:appendix-exp-short-step-p1}
e^{A\Delta t}
=
I+A\Delta t
+
\mathcal{O}\!\bigl(\|A\|_{\max}^{2}\Delta t^{2}\bigr),
\qquad
\Delta t=\frac{T}{N_{t}}.
\end{equation}
Choosing $\|A\|_{\max}\Delta t=\Theta(1)$ yields
$N_{t}=\Theta(\|A\|_{\max}T)$.
For the second-order finite-difference discretizations considered here,
$\|A\|_{\max}=\Theta(\Delta x^{-2})=\Theta(N^{2})$, where
$\Delta x=\Theta(1/N)$. Hence again $N_{t}=\Theta(N^{2})$.
Each sparse exponential-action step costs
$\mathcal{O}(d^2\mathfrak{s}N^{d})$, giving
\begin{equation}
\label{eqn:appendix-exp-total-multidim}
\underbrace{\mathcal{O}(d^2\mathfrak{s}N^{d})}_{\text{spatial operator}}
\times
\underbrace{\mathcal{O}(N^{2})}_{\text{time steps}}
=
\mathcal{O}(d^2\mathfrak{s}N^{d+2}),
\end{equation}
with memory $\mathcal{O}(d^2\mathfrak{s}N^{d})$.

In summary, for first-order temporal accuracy ($p=1$) and a
$d$-dimensional spatial discretization, both classical baselines
exhibit the same leading scaling: $\mathcal{O}(d^2\mathfrak{s}N^{d})$
for the spatial operator times $\mathcal{O}(N^{2})$ time steps,
i.e., $\mathcal{O}(d^2\mathfrak{s}N^{d+2})$ total work.

\subsection{SDE-based Monte Carlo baselines}
\label{subsec:sde-mc-classical-baselines}

The estimates above are grid-based PDE estimates.  Now we introduce
complexity analysis for SDE Monte Carlo methods.  This is a different
classical benchmark.  A PDE method discretizes the state space, while an
SDE Monte Carlo method simulates paths and averages the resulting payoffs.
Thus the full $N^d$ spatial grid does not appear in the Monte Carlo cost.

To compare the accuracy dependence with the PDE estimates, we use the same
accuracy scale as before.  For a second-order finite-difference
discretization, the spatial error satisfies
\begin{equation}
\label{eq:sde-baseline-eps-n}
\varepsilon \asymp N^{-2}=2^{-2n},
\end{equation}
where $N=2^n$ is the number of grid points per spatial direction
\cite{LeVeque2007FDM}; $n$ is the number of qubits.  This relation is only an accuracy
parametrization. 

Consider first a general SDE simulated by a weak order-one
scheme.  If the time step is $h$, the weak bias is $\mathcal{O}(h)$.
Therefore, bias $\mathcal{O}(\varepsilon)$ requires
$h=\mathcal{O}(\varepsilon)$, which gives
$\mathcal{O}(\varepsilon^{-1})$ time steps per path.  In addition, plain
Monte Carlo requires $\mathcal{O}(\varepsilon^{-2})$ independent paths to
obtain sampling error $\mathcal{O}(\varepsilon)$.  Together these two
factors give the standard Euler path-simulation scaling
$\mathcal{O}(\varepsilon^{-3})$ \cite{10.1287/opre.1070.0496}.

The remaining dimension-dependent factor comes from generating correlated
Brownian increments.  For dense Brownian correlation, each increment requires
a dense $d\times d$ loading operation, giving $\mathcal{O}(d^2)$.  For a
rank-\(q\) factor correlation representation, the loading is $d\times q$,
giving $\mathcal{O}(dq)$.  For diagonal correlation, only componentwise
scaling is needed, giving $\mathcal{O}(d)$
\cite{RebonatoJaeckel2000Correlation,Brigo2002CorrelationRankReduction}.
The resulting general SDE Monte Carlo costs are shown in
Table~\ref{tab:sde-mc-costs}.

\begin{table}[h!]
\centering
\renewcommand{\arraystretch}{1.25}
\begin{tabular}{lccc}
\toprule
\textbf{Brownian corr. structure}
&
\textbf{Cost per Brownian step}
&
\textbf{Cost in $\varepsilon$}
&
\textbf{Cost using $\varepsilon\asymp 2^{-2n}$}
\\
\midrule
Dense correlation
&
$\mathcal{O}(d^2)$
&
$\mathcal{O}(d^2\varepsilon^{-3})$
&
$\mathcal{O}(d^2 2^{6n})$
\\
$q$-factor correlation
&
$\mathcal{O}(dq)$
&
$\mathcal{O}(dq\,\varepsilon^{-3})$
&
$\mathcal{O}(dq\,2^{6n})$
\\
Diagonal correlation
&
$\mathcal{O}(d)$
&
$\mathcal{O}(d\varepsilon^{-3})$
&
$\mathcal{O}(d\,2^{6n})$
\\
\bottomrule
\end{tabular}
\caption{General SDE case: plain pathwise Monte Carlo complexity for a weak-order-one time discretization.  The factor $\varepsilon^{-3}$ comes from $\varepsilon^{-1}$ time steps per path and $\varepsilon^{-2}$ Monte Carlo samples.}
\label{tab:sde-mc-costs}
\end{table}

The multivariate Black--Scholes model is a special case.  In this model the
terminal asset vector is multivariate lognormal, so terminal samples can be
generated directly without Euler time stepping \cite{BlackScholes1973}.
Therefore the $\varepsilon^{-1}$ time-discretization factor from
Table~\ref{tab:sde-mc-costs} disappears.  The cost is then the usual Monte
Carlo sampling cost $\mathcal{O}(\varepsilon^{-2})$, multiplied by the cost
of sampling the correlated terminal vector.  This gives the exact-terminal
Black--Scholes costs \cite{BOYLE1977323} shown in Table~\ref{tab:sde-exact-bs-costs}.

\begin{table}[h!]
\centering
\renewcommand{\arraystretch}{1.25}
\begin{tabular}{lccc}
\toprule
\textbf{Brownian corr. structure}
&
\textbf{Cost per Brownian step}
&
\textbf{Cost in $\varepsilon$}
&
\textbf{Cost using $\varepsilon\asymp 2^{-2n}$}
\\
\midrule
Dense correlation
&
$\mathcal{O}(d^2)$
&
$\mathcal{O}(d^2\varepsilon^{-2})$
&
$\mathcal{O}(d^2 2^{4n})$
\\
$q$-factor correlation
&
$\mathcal{O}(dq)$
&
$\mathcal{O}(dq\,\varepsilon^{-2})$
&
$\mathcal{O}(dq\,2^{4n})$
\\
Diagonal correlation
&
$\mathcal{O}(d)$
&
$\mathcal{O}(d\varepsilon^{-2})$
&
$\mathcal{O}(d\,2^{4n})$
\\
\bottomrule
\end{tabular}
\caption{Multivariate Black--Scholes case: Monte Carlo complexity when the terminal lognormal distribution is sampled exactly.  The time-discretization factor is absent, so the accuracy dependence is $\varepsilon^{-2}$ rather than $\varepsilon^{-3}$.}
\label{tab:sde-exact-bs-costs}
\end{table}

Finally, multilevel Monte Carlo (MLMC) gives a sharper SDE benchmark when the
standard MLMC rate assumptions hold.  In the Euler-based setting, these are:
weak bias $\mathcal{O}(h)$, variance of level corrections
$\mathcal{O}(h)$, and level-sample cost proportional to $h^{-1}$ times the
Brownian loading cost.  Under these assumptions, MLMC reduces the plain Euler
Monte Carlo complexity from $\mathcal{O}(\varepsilon^{-3})$ to
$\mathcal{O}(\varepsilon^{-2}\operatorname{polylog}(1/\varepsilon))$
\cite{10.1287/opre.1070.0496}.  Using
\eqref{eq:sde-baseline-eps-n} gives Table~\ref{tab:sde-mlmc-costs}.

\begin{table}[h!]
\centering
\renewcommand{\arraystretch}{1.25}
\begin{tabular}{lccc}
\toprule
\textbf{Brownian corr. structure}
&
\textbf{Cost per Brownian step}
&
\textbf{Cost in $\varepsilon$}
&
\textbf{Cost using $\varepsilon\asymp 2^{-2n}$}
\\
\midrule
Dense correlation
&
$\mathcal{O}(d^2)$
&
$\mathcal{O}\!\left(d^2\varepsilon^{-2}\operatorname{polylog}(1/\varepsilon)\right)$
&
$\mathcal{O}\!\left(d^2 2^{4n}\operatorname{poly}(n)\right)$
\\
$q$-factor correlation
&
$\mathcal{O}(dq)$
&
$\mathcal{O}\!\left(dq\,\varepsilon^{-2}\operatorname{polylog}(1/\varepsilon)\right)$
&
$\mathcal{O}\!\left(dq\,2^{4n}\operatorname{poly}(n)\right)$
\\
Diagonal correlation
&
$\mathcal{O}(d)$
&
$\mathcal{O}\!\left(d\,\varepsilon^{-2}\operatorname{polylog}(1/\varepsilon)\right)$
&
$\mathcal{O}\!\left(d\,2^{4n}\operatorname{poly}(n)\right)$
\\
\bottomrule
\end{tabular}
\caption{MLMC case: complexity under the standard Euler-based MLMC assumptions.  These assumptions include weak bias $\mathcal{O}(h)$, level-correction variance $\mathcal{O}(h)$, and level cost proportional to $h^{-1}$ times the Brownian-loading cost.}
\label{tab:sde-mlmc-costs}
\end{table}

The MLMC rates are standard for sufficiently regular SDEs and payoff
functionals.  They may require modification for discontinuous payoffs,
barrier events, hitting-time functionals, or degenerate stochastic-volatility
dynamics.

\section{Coordinate operator under discrete Fourier transform}\label{appendix:fourier to discrete fourier}

We pass from the continuous Fourier transform to the discrete one and show the induced half-grid permutation. We firstly define the standard continuous Fourier transformation
\begin{equation}
F(\eta)=\int_{\mathbb{R}} f(p)\,e^{-i\eta p}\,dp,
\qquad
f(p)\;=\;\frac{1}{2\pi}\int_{\mathbb{R}} F(\eta)\,e^{i\eta p}\,d\eta.
\end{equation}
We use uniform grids
\begin{equation}
\begin{gathered}
p_n=-\pi R+n\Delta p,\qquad \Delta p=\frac{2\pi R}{N},\qquad n=0,\dots,N-1,\\
\eta_k=\frac{k-\frac{N}{2}}{R},\qquad \Delta\eta=\frac{1}{R},\qquad
\Delta p\,\Delta\eta=\frac{2\pi}{N},\\
p_n\,\eta_k
= \frac{2\pi}{N}
\left(n-\frac{N}{2}\right)
\left(k-\frac{N}{2}\right).
\end{gathered}
\end{equation}

The discrete Fourier transform reads
\begin{equation}
F_k=\sum_{n=0}^{N-1} f_n\,e^{-i\frac{2\pi}{N}kn},
\qquad
f_n\;=\;\frac{1}{N}\sum_{k=0}^{N-1} F_k\,e^{+i\frac{2\pi}{N}kn},
\end{equation}
with $f_n:=f(p_n)$. We now derive the half-grid permutation starting from the continuous transform sampled at $\eta_{k+N/2}$ and approximating the integral by the Riemann sum on the grid:
\begin{equation}
F(\eta_{k+N/2})
=\int_{\mathbb{R}} f(p)\,e^{-i\eta_{k+N/2} p}\,dp
\;\approx\; \Delta p\sum_{n=0}^{N-1} f(p_n)\,e^{-i\eta_{k+N/2} p_n}.
\end{equation}
Insert the definitions $\eta_{k+N/2}=k/R$ and $p_n=-\pi R+n(2\pi R/N)$:
\begin{equation}
\eta_{k+N/2}\,p_n
=\frac{k}{R}\!\left(-\pi R+n\frac{2\pi R}{N}\right)
\;=\;-k\pi+\frac{2\pi}{N}kn.
\end{equation}
Hence the exponential factor becomes
\begin{equation}
e^{-i\,\eta_{k+N/2}p_n}
=e^{-i(-k\pi+\frac{2\pi}{N}kn)}
\;=\;e^{ik\pi}\,e^{-i\frac{2\pi}{N}kn}
\;=\;(-1)^k\,e^{-i\frac{2\pi}{N}kn}.
\end{equation}
Substituting back and recognizing the DFT sum gives
\begin{equation}
F(\eta_{k+N/2})
\approx(-1)^k\,\Delta p \sum_{n=0}^{N-1} f_n\,e^{-i\frac{2\pi}{N}kn}
\;=\;(-1)^k\,\Delta p\,F_k.
\end{equation}
Finally, using $\Delta p=2\pi R/N$,
\begin{equation}
\begin{gathered}
F(\eta_{k+N/2})
\;\approx\;(-1)^k\,\Delta p\,F_k
\;=\;(-1)^k\,\frac{2\pi R}{N}\,F_k.
\end{gathered}
\end{equation}

\FloatBarrier
\section{Oblivious amplitude amplification for block-encodings}
\label{app:amplification}

In many quantum algorithms, one has access to a unitary $U_A$ which, when
applied with an ancilla projector $\Pi_{\mathrm{succ}}$, implements a scaled
version of a target operator $A$.  Concretely, $U_A$ is an $(\alpha,s,0)$
block-encoding of~$A$ if
\begin{equation}
    (\Pi_{\mathrm{succ}}\otimes I)\,
    U_A\,
    (\Pi_{\mathrm{succ}}\otimes I)
    = \frac{A}{\alpha}.
\end{equation}
Applying $U_A$ to $(\Pi_{\mathrm{succ}}\otimes I)|\psi\rangle$ yields a
superposition of a ``success'' component in $\Pi_{\mathrm{succ}}\otimes I$ and
a failure component orthogonal to it.  The success probability is
\begin{equation}
    p
    =
    \bigl\|
        (\Pi_{\mathrm{succ}}\otimes I)\,
        U_A\,
        (\Pi_{\mathrm{succ}}\otimes I)|\psi\rangle
    \bigr\|^{2}
    =
    \frac{\|A|\psi\rangle\|^{2}}{\alpha^{2}}.
\end{equation}

\emph{Oblivious amplitude amplification} (OAA) is a coherent,
measurement-free method that boosts this success amplitude using only $U_A$,
$U_A^\dagger$, and reflections on the ancilla.
This framework was introduced in~\cite{BCCKS14} and is standard in the
block-encoding and LCU literature.

\begin{proposition}[Oblivious amplitude amplification for block-encodings {\cite{BCCKS14}}]
Let $U_A$ be an $(\alpha,s,0)$ block-encoding of $A$ with respect to a
projector $\Pi_{\mathrm{succ}}$.  Define

\begin{equation}
    R := I - 2(\Pi_{\mathrm{succ}}\otimes I),
    \qquad
    Q := U_A\,R\,U_A^\dagger\,R.
\end{equation}

Then $Q$ acts as a rotation in the two-dimensional space spanned by the
success and failure components.  Writing $p=\sin^{2}\theta$, we have
\begin{equation}
    Q^{k} U_A (\Pi_{\mathrm{succ}}\otimes I)|\psi\rangle
    =
    \sin((2k+1)\theta)\,
    (\Pi_{\mathrm{succ}}\otimes I)\frac{A}{\alpha}|\psi\rangle
    +
    \cos((2k+1)\theta)\,|\mathrm{fail}_k\rangle.
\end{equation}
Thus after
\begin{equation}
    k=\Theta(1/\sqrt{p})   
\end{equation}

iterations, the success amplitude becomes $\Theta(1)$, yielding an amplified
block-encoding of $A$ with normalization parameter
$\alpha_{\mathrm{eff}}=\Theta(\|A\|)$.
Each iteration uses one application each of $U_A$ and $U_A^\dagger$ and a
reflection about the ancilla success subspace.
\end{proposition}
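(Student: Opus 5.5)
The statement is the Oblivious Amplitude Amplification proposition. Applying $R$ to a state in the success subspace $\mathrm{Im}(\Pi_{\mathrm{succ}}\otimes I)$ only introduces a sign, and $U_A$ is unitary. So the plan is to reduce everything to a two-dimensional invariant subspace and show that $Q$ acts there as a rotation by $2\theta$. The key point is that $\theta$ must be independent of $|\psi\rangle$. This is where "oblivious" enters, and it is the step we expect to be the main obstacle.

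\textbf{Step 1 (decomposition).} Fix $|\Psi\rangle:=(\Pi_{\mathrm{succ}}\otimes I)|\psi\rangle$ and write
\[
U_A|\Psi\rangle=\sin\theta\,|g\rangle+\cos\theta\,|b\rangle ,
\]
where $|g\rangle$ is the normalized success component and $|b\rangle$ lies in the orthogonal complement. By the block-encoding definition, $\sin\theta\,|g\rangle=(\Pi_{\mathrm{succ}}\otimes I)\tfrac{A}{\alpha}|\psi\rangle$, and so $p=\sin^2\theta=\|A|\psi\rangle\|^2/\alpha^2$.

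\textbf{Step 2 (closing the subspace).} Here we confront the obstacle. Standard amplitude amplification would reflect about $U_A|\Psi\rangle$, but $Q$ uses $U_A R U_A^\dagger$. To make the two coincide on the relevant span, we need the success amplitude $\|(\Pi\otimes I)U_A(\Pi\otimes I)|\phi\rangle\|$ to be the same for every $|\phi\rangle$ in the success subspace. Equivalently, we need
\[
(\Pi\otimes I)\,U_A^\dagger(\Pi\otimes I)\,U_A\,(\Pi\otimes I)=p\,(\Pi\otimes I),
\]
meaning $A/\alpha$ is proportional to an isometry. Our first attempt is to prove the two-dimensional lemma of \cite{BCCKS14} under exactly this hypothesis, which holds in the paper's uses: LCU for $e^{-iHT}$ and state preparation started from the fixed $|0\rangle$. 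Under this hypothesis, $U_A^\dagger|g\rangle$ and $U_A^\dagger|b\rangle$ decompose along $|\Psi\rangle$ and a single orthogonal vector $|\Psi^\perp\rangle$. A direct computation, using the Jordan/CS decomposition of the pair of projectors $\Pi$ and $U_A\Pi U_A^\dagger$, then shows that $RU_A^\dagger R$ maps $\mathrm{span}\{|g\rangle,|b\rangle\}$ to $\mathrm{span}\{|\Psi\rangle,|\Psi^\perp\rangle\}$ by the reflection pattern. Consequently $Q$ restricted to $\mathrm{span}\{|g\rangle,|b\rangle\}$ is the product of two reflections whose axes meet at angle $\theta$, i.e.\ a rotation by $2\theta$.

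\textbf{Step 3 (iteration and count).} Induction on $k$ gives
\[
Q^kU_A|\Psi\rangle=\sin((2k+1)\theta)\,|g\rangle+\cos((2k+1)\theta)\,|b\rangle ,
\]
which is the claimed formula with $|\mathrm{fail}_k\rangle$ proportional to $|b\rangle$. Choose $k=\lfloor \pi/(4\theta)\rfloor$. Since $\theta=\Theta(\sqrt p)$ for small $p$, this gives $k=\Theta(1/\sqrt p)$ and $\sin((2k+1)\theta)=\Theta(1)$. The amplified success block is then $\sin((2k+1)\theta)\,\tfrac{A}{\alpha\sin\theta}$, i.e.\ a block-encoding with $\alpha_{\mathrm{eff}}=\alpha\sin\theta/\sin((2k+1)\theta)=\Theta(\|A|\psi\rangle\|)$, which is $\Theta(\|A\|)$ in the isometric case. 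Each iterate uses one $U_A$, one $U_A^\dagger$, and two reflections, as claimed. If exact $\pi/2$ is desired, a standard fix is to pad $U_A$ with an extra rotated ancilla so that $\pi/(4\theta)-1/2$ becomes an integer.
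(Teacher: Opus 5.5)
Your argument is correct under the hypothesis you add. It is the same two-dimensional invariant-subspace computation that underlies the cited result: the paper gives no proof beyond citing \cite{BCCKS14}, where the lemma is stated under exactly your condition ($A/\alpha=\sin\theta\,V$ with $V$ unitary and $\theta$ independent of $|\psi\rangle$). The one place the hypothesis enters, $(\Pi_{\mathrm{succ}}\otimes I)U_A^\dagger|g\rangle=\sin\theta\,|\Psi\rangle$, is handled correctly. You are also right that the hypothesis cannot be dropped, and you should say so explicitly, because the proposition as printed, for arbitrary $A$, is false. Take one ancilla qubit, $D=\mathrm{diag}(a,b)$ with $0<a<b<1$, $U_A=|0\rangle\langle0|\otimes D+(|0\rangle\langle1|+|1\rangle\langle0|)\otimes\sqrt{I-D^2}-|1\rangle\langle1|\otimes D$, and $|\psi\rangle=(|0\rangle+|1\rangle)/\sqrt2$. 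Each system basis vector then evolves in its own two-dimensional sector, and after one round the success component is proportional to $a(3-4a^2)|0\rangle+b(3-4b^2)|1\rangle$, not to $A|\psi\rangle$.

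Three smaller points need repair.

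First, there is a sign error. On $\mathrm{span}\{|g\rangle,|b\rangle\}$, $R$ restricts to $I-2|g\rangle\langle g|$ and $U_ARU_A^\dagger$ restricts to $I-2|u\rangle\langle u|$, with $|u\rangle=U_A|\Psi\rangle$. Their fixed axes are $|b\rangle$ and $\cos\theta\,|g\rangle-\sin\theta\,|b\rangle$, and the angle from the first to the second is $\pi/2+\theta$, not $\theta$. So $Q$ is a rotation by $2\theta+\pi$. Directly, $QU_A|\Psi\rangle=-(\sin3\theta\,|g\rangle+\cos3\theta\,|b\rangle)$. Your induction, like the displayed identity, therefore holds only up to $(-1)^k$, which is why \cite{BCCKS14} use $-U_ARU_A^\dagger R$. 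The sign is a harmless global phase, but it should be recorded.

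Second, your formula carries the unit vector $|g\rangle$, whereas the statement carries $(\Pi_{\mathrm{succ}}\otimes I)\frac{A}{\alpha}|\psi\rangle=\sin\theta\,|g\rangle$. These differ by a factor $1/\sin\theta$ already at $k=0$. So the printed identity has a normalization slip, and your result is not literally ``the claimed formula''. Your subsequent block $\sin((2k+1)\theta)\,A/(\alpha\sin\theta)$ is the correct one.

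Third, the remark that the hypothesis holds in the paper's uses needs qualifying.
\begin{itemize}
\item For a state-preparation unitary that block-encodes some $A_f$ with $A_f|0\rangle\propto|f\rangle$, reflecting the input over the whole ancilla-success subspace, as $Q$ does, would require $A_f^\dagger A_f|0\rangle\propto|0\rangle$. This generally fails. There you need ordinary amplitude amplification, which reflects about the fixed all-zero input instead.
\item The evolution block-encoding is only $\epsilon_{\mathrm{evol}}$-close to $\tfrac{1}{2}e^{-iHT}$, so the exact isometry condition also fails there. It is robust oblivious amplitude amplification that applies.
\end{itemize}
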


In summary, OAA converts a block-encoding with small success probability~$p$
into one with constant success probability using $O(1/\sqrt{p})$ evaluations of
$U_A$ and $U_A^\dagger$.  Since $\alpha$ incorporates both the magnitude of
$A$ and the initial normalization, the resulting amplified block-encoding has
effective scale $\alpha_{\mathrm{eff}}=\Theta(\|A\|)$.

\FloatBarrier
\section{Static arbitrage-free SSVI regularization used in our experiments}
\label{app:ssvi}
This Appendix summarizes the arbitrage-free SSVI construction of Gatheral--Jacquier. 
In this work SSVI is used solely as a static, arbitrage-preserving regularizer for sparse strike grids and quantum-induced noise in implied-volatility data. 
We do not interpret SSVI as stochastic volatility dynamics.

Option strikes are parameterised using forward log-moneyness $k$. Let $\sigma_{\mathrm{BS}}(k,T)$ denote the Black--Scholes implied volatility and define the
\textit{total implied variance} $w(k,T) := \sigma_{\mathrm{BS}}^2(k,T)\,T $. For fixed $T$, the map $k \mapsto w(k,T)$ is a \emph{slice} of the implied volatility surface. Following \cite{Gatheral02012014}, each maturity slice is regularized using the SVI (SSVI) form

\begin{equation}\label{eq:ssvi_app}
w(k,T)=\frac{\theta_T}{2}\left[1+\rho_{\mathrm{SSVI}}\,\varphi(\theta_T)\,k
+ \sqrt{\big(\varphi(\theta_T)\,k+\rho_{\mathrm{SSVI}}\big)^2 + \big(1-\rho_{\mathrm{SSVI}}^2\big)}\right].
\end{equation}

where $\theta_T := w(0,T)=\sigma_{\mathrm{BS}}^2(0,T)\,T$ is the at-the-money (ATM) total variance,
$\rho_{\mathrm{SSVI}} \in (-1,1)$ controls skew, and $\varphi:\mathbb{R}_+\to\mathbb{R}_+$ governs wing slopes. 
This representation corresponds to the SVI ``natural'' slice with parameters depending on $\theta_T$, thereby coupling slices across maturities through the ATM variance curve. Choosing $\varphi$ from a simple parametric family provides the standard SSVI mechanism for maturity consistency and static no-arbitrage constraints.

A surface is free of static arbitrage if it is free of (i) calendar spread arbitrage and
(ii) butterfly arbitrage. In the SSVI setting, these are ensured by simple constraints on
$\theta_T$ and $\varphi$ \cite{Gatheral02012014}. Since our focus is on a single slice or maturity, we will only consider butterfly arbitrage. Gatheral and Jacquier also give explicit sufficient conditions guaranteeing non-negativity of the
risk-neutral density (equivalently, convexity of call prices in strike). A convenient closed-form set of sufficient conditions rules out butterfly arbitrage and enforces Lee's wing bound \cite{Gatheral02012014,LeeMoment2004}.

In this paper, SSVI is used purely as an \emph{arbitrage-free regularization} of sparse market/model quotes. It provides a smooth, convex benchmark implied volatility surface against which we compare the Heston-PDE-generated implied volatilities. We do not use SSVI as a dynamical model. For Heston-like dynamics, we use the following functional form. For the smoothing of the volatility curves, consider the function $\varphi$ defined by $\varphi(\theta)
:=
\frac{1}{\lambda \theta}
\left(
1 - \frac{1 - e^{-\lambda \theta}}{\lambda \theta}
\right),
\qquad \lambda>0,$ for $\theta>0$. A direct computation yields
\begin{equation}
\partial_\theta\big(\theta \varphi(\theta)\big)
=
\frac{e^{-\lambda \theta}\big(e^{\lambda \theta}-1-\lambda \theta\big)}
{\lambda^2 \theta^2}
>0,
\qquad \theta>0,  
\end{equation}

so that the map $\theta \mapsto \theta \varphi(\theta)$ is strictly increasing on $(0,\infty)$.

Moreover, the normalized derivative admits the explicit representation

\begin{equation}
\frac{\partial_\theta\big(\theta \varphi(\theta)\big)}{\varphi(\theta)}
=
\frac{1-(1+\lambda\theta)e^{-\lambda\theta}}
{e^{-\lambda\theta}+\lambda\theta-1}.
\end{equation}
For any $\lambda>0$, this function is strictly decreasing on $(0,\infty)$ and satisfies $\lim_{\theta\downarrow 0}
\frac{\partial_\theta\big(\theta \varphi(\theta)\big)}{\varphi(\theta)}
=
1.$ These properties ensure that the parameterisation exhibits the monotonicity and short-maturity behavior required for an admissible Heston-like implied volatility smile.


\subsection{Durrleman's $g$--function and butterfly arbitrage}
\label{app:durrlemang}
To make the butterfly-arbitrage check explicit, we recall the standard
Durrleman / Gatheral--Jacquier factorization for a single implied-volatility slice. We write $N$ and $\phi$ for the standard normal CDF and pdf:
\begin{equation}
N(x)=\int_{-\infty}^x \frac{1}{\sqrt{2\pi}}e^{-u^2/2}\,du,
\qquad
\phi(x)=\frac{1}{\sqrt{2\pi}}e^{-x^2/2}. 
\end{equation}

Use $\phi$ for the standard normal density and reserve $\varphi(\theta)$
for the SSVI wing function. For $T>0$ and forward price $F>0$, consider an (undiscounted) forward call slice written in implied-total-variance coordinates
\begin{equation}
C(k)=C_{BS}(k,w(k)),\qquad w:\mathbb R\to(0,\infty),
\end{equation}
where $C_{BS}(k,w):=F\big(N(d_+(k,w))-e^kN(d_-(k,w))\big)$ and
\begin{equation}
d_\pm(k,w):=-\frac{k}{\sqrt w}\pm\frac{\sqrt w}{2}.
\end{equation}
Assume $w\in C^2(\mathbb R)$ and the mild tail condition $\lim_{k\to+\infty}d_+(k,w(k))=-\infty$
(to exclude missing mass at infinity \cite{Gatheral02012014, MartiniSVI}). See also \cite{BabakDearbitrageweaksmile} for a practitioner-oriented discussion.

The function $g(k)$ and the equivalence between density nonnegativity and $g(k)\ge 0$ (with a mild tail condition) are standard. For completeness we include a short self-contained derivation.

\begin{proposition}[Durrleman / Gatheral--Jacquier butterfly criterion]
Define
\begin{equation}\label{eq:gdef-app}
g(k):=\Big(1-\frac{k w'(k)}{2w(k)}\Big)^2
-\frac{(w'(k))^2}{4}\Big(\frac{1}{w(k)}+\frac14\Big)
+\frac{w''(k)}{2}.
\end{equation}
Then, for $K=Fe^k$,
\begin{equation}\label{eq:CKK-factor-app}
\frac{\partial^2 C}{\partial K^2}(K)
=\frac{F\,\phi(d_+(k,w(k)))}{K^2\sqrt{w(k)}}\,g(k),
\end{equation}
so $\partial_{KK}C(K)\ge 0$ for all $K>0$ whenever $g(k)\ge 0$
for all $k\in\mathbb R$, together with the tail condition above.
\end{proposition}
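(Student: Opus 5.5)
Since the statement is the Durrleman/Gatheral--Jacquier factorization, the plan is a direct chain-rule computation. I would regard the slice price as a function of $k$ through $C(k)=C_{BS}(k,w(k))$, with $K=Fe^k$. The conversion from $K$-derivatives to $k$-derivatives is $\partial_K = K^{-1}\partial_k$ and hence
\begin{equation}
\partial_{KK} C = K^{-2}\bigl(\partial_{kk}C-\partial_k C\bigr).
\end{equation}
So the whole task is to compute $\partial_k C$ and $\partial_{kk}C$ for the composite function and show that $\partial_{kk}C-\partial_kC$ equals $F\phi(d_+)\,g(k)/\sqrt{w}$. Once that identity holds, the sign statement is immediate, because $F$, $\phi(d_+)$, $K^2$ and $\sqrt{w}$ are all strictly positive.

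The first step collects the standard Black--Scholes identities in $(k,w)$ coordinates. The key one is $F\phi(d_+)=Fe^k\phi(d_-)$, which follows from $d_+^2-d_-^2=-2k$. From it I would derive:
\begin{itemize}
\item $\partial_k C_{BS}=-Fe^kN(d_-)$, because the $\phi$-terms cancel;
\item $\partial_w C_{BS}=\tfrac{F}{2\sqrt w}\phi(d_+)$, the vega in total-variance units;
\item $\partial_{kk}C_{BS}$, $\partial_{kw}C_{BS}$ and $\partial_{ww}C_{BS}$, each expressed as $\partial_wC_{BS}$ multiplied by a rational function of $k$ and $w$.
\end{itemize}
The last item uses $\partial_w\log\phi(d_+)=-d_+\partial_w d_+$ and similar derivatives of $d_+$, with $\partial_w d_\pm = \tfrac{k}{2w^{3/2}}\pm\tfrac{1}{4\sqrt w}$.

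The second step applies the chain rule:
\begin{equation}
\frac{d^2C}{dk^2}=C_{kk}+2C_{kw}w'+C_{ww}(w')^2+C_w w''.
\end{equation}
Subtracting $dC/dk=C_k+C_w w'$ and factoring out $C_w=\tfrac{F\phi(d_+)}{2\sqrt w}$ leaves a polynomial in $w'$ and $w''$. Its constant term comes from $C_{kk}-C_k$, the Black--Scholes density in $k$ with no smile. This term should equal $2$ times the leading $1$ in $g$. Similarly, the linear term in $w'$ should reproduce $-\tfrac{kw'}{w}$, the quadratic term should give $\tfrac{k^2(w')^2}{4w^2}-\tfrac{(w')^2}{4}\bigl(\tfrac1w+\tfrac14\bigr)$, and $C_w w''$ gives $\tfrac{w''}{2}$. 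Together these produce exactly $2g(k)\cdot C_w = \tfrac{F\phi(d_+)}{\sqrt w}g(k)$.

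The main obstacle is the bookkeeping in $C_{ww}$ and $C_{kw}$. I would organize it via $\partial_w \phi(d_+)=\phi(d_+)\bigl(-d_+\partial_w d_+\bigr)$ and expand
\begin{equation}
d_+\,\partial_w d_+ = \Bigl(-\tfrac{k}{\sqrt w}+\tfrac{\sqrt w}{2}\Bigr)\Bigl(\tfrac{k}{2w^{3/2}}+\tfrac{1}{4\sqrt w}\Bigr) = -\tfrac{k^2}{2w^2}+\tfrac18,
\end{equation}
because the cross terms cancel. This gives $C_{ww}=C_w\bigl(-\tfrac1{2w}+\tfrac{k^2}{2w^2}-\tfrac18\bigr)$, and the $-\tfrac18$ term is the source of the $\tfrac14\cdot\tfrac14$ contribution in $g$.

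Finally, the tail condition $d_+\to-\infty$ is needed only to interpret nonnegativity of $\partial_{KK}C$ as a genuine probability density with no mass lost at infinity. For the pointwise inequality claimed, positivity of the prefactor suffices.
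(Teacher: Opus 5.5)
Your proposal is correct and follows the paper's proof essentially step for step: the change of variables $\partial_{KK}C=K^{-2}(C''-C')$, the Black--Scholes partials in $(k,w)$ obtained from $\phi(d_+)=e^k\phi(d_-)$, and the chain-rule assembly of $C''-C'$ into $2C_w\,g(k)$. Your $d_+\partial_w d_+$ bookkeeping reproduces the paper's $C_{ww}$ and $C_{kw}$. The one slip is the claim that $C_{kk}$ by itself is $C_w$ times a rational function, since it contains $-Fe^kN(d_-)$; this does not matter, because you only use $C_{kk}-C_k=2C_w$, which is correct.
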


\begin{proof}
\emph{(i) Change-of-variable identities.} With $k=\log(K/F)$ and $F$ fixed,
\begin{equation}
\frac{d}{dK}=\frac{1}{K}\frac{d}{dk},
\qquad
\frac{d^2}{dK^2}=\frac{1}{K^2}\Big(\frac{d^2}{dk^2}-\frac{d}{dk}\Big),
\end{equation}
hence
\begin{equation}\label{eq:CKK-bridge-app}
\frac{\partial^2 C}{\partial K^2}(K)=\frac{1}{K^2}\big(C''(k)-C'(k)\big).
\end{equation}

\emph{(ii) Two Black--Scholes identities.} First,
\begin{equation}\label{eq:pdfid-app}
\phi(d_+(k,w))=e^k\phi(d_-(k,w)),
\end{equation}
since $d_+=d_-+\sqrt w$ and $\phi(x)\propto e^{-x^2/2}$.
Second, the following partial derivatives (holding the other variable fixed) are standard
and obtained by direct differentiation using \eqref{eq:pdfid-app}:
\begin{align}
C_k(k,w)&=-Fe^k N(d_-),\label{eq:Ck-app}\\
C_{kk}(k,w)&=-Fe^k N(d_-)+\frac{F}{\sqrt w}\phi(d_+),\label{eq:Ckk-app}\\
C_w(k,w)&=\frac{F}{2\sqrt w}\phi(d_+),\label{eq:Cw-app}\\
C_{kw}(k,w)&=-F\phi(d_+)\Big(\frac{k}{2w^{3/2}}-\frac{1}{4\sqrt w}\Big),\label{eq:Ckw-app}\\
C_{ww}(k,w)&=\frac{F\phi(d_+)}{2}\Big(\frac{k^2}{2w^{5/2}}-\frac{1}{2w^{3/2}}-\frac{1}{8w^{1/2}}\Big).\label{eq:Cww-app}
\end{align}

\emph{(iii) Compose with the smile $w=w(k)$.} By the chain rule,
\begin{equation}
C'(k)=C_k+C_w w',\qquad
C''(k)=C_{kk}+2C_{kw}w'+C_{ww}(w')^2+C_w w'',   
\end{equation}

hence
\begin{equation}
C''(k)-C'(k)=(C_{kk}-C_k)+(2C_{kw}-C_w)w'+C_{ww}(w')^2+C_w w''.   
\end{equation}

Substituting \eqref{eq:Ck-app}--\eqref{eq:Cww-app} and simplifying yields the factorization
\begin{equation}
C''(k)-C'(k)=\frac{F\,\phi(d_+(k,w(k)))}{\sqrt{w(k)}}\,g(k),
\end{equation}

where $g$ is exactly \eqref{eq:gdef-app}. Combining with \eqref{eq:CKK-bridge-app} gives
\eqref{eq:CKK-factor-app}. The prefactor in \eqref{eq:CKK-factor-app} is strictly positive
for $K>0$, so $\partial_{KK}C\ge0$ is equivalent to $g\ge0$, with the stated tail condition
ensuring no probability mass escapes at infinity (as in Gatheral--Jacquier).
\end{proof}


\end{document}